\Crefname{algocf}{Algorithm}{Algorithms}
\newcommand{\initOneLiners}{%
    \setlength{\itemsep}{0pt}
    \setlength{\parsep }{0pt}
    \setlength{\topsep }{0pt}
      \usecounter{myLISTctr}
}
\newenvironment{OneLiners}[1][\ensuremath{\bullet}]
    {\begin{list}
        {#1}
        {\initOneLiners}}
    {\end{list}}
\newcommand{\op}{\ensuremath{\mathrm{op}}} %
\newcommand{\bR}{\ensuremath{\mathbb{R}}} %
\newcommand{\cD}{\ensuremath{\mathcal{D}}} %
\newcommand{\cP}{\ensuremath{\mathcal{H}}} %
\newcommand{\bE}{\ensuremath{\mathbb{E}}} %
\newcommand{\sps}[1]{\ensuremath{^{(#1)}}} %
\newcommand{\tw}{\ensuremath{\widetilde w}} %
\newcommand{\tD}{\ensuremath{\widetilde {\mathcal D}}} %
\newcommand{\tX}{\ensuremath{\widetilde X}} %
\newcommand{\cE}{\ensuremath{\mathcal E}} %
\newcommand{\err}{\ensuremath{\mathsf{err}}} %
\newcommand{\org}[1]{{\color{orange}#1}}
\renewcommand{\top}{{\ensuremath{\mathsf{T}}}}
\newtheorem{theorem}{Theorem}
\newtheorem{lemma}[theorem]{Lemma}
\newtheorem{claim}[theorem]{Claim}
\newtheorem{definition}{Definition}
\title{Robust Mean Estimation on Highly Incomplete Data with Arbitrary Outliers}
\author{Lunjia Hu
\thanks{Computer Science Department, Stanford University. 
Email: \texttt{lunjia@stanford.edu}.
Supported by NSF Award IIS-1908774 and a VMware fellowship.} 
\and 
Omer Reingold
\thanks{Computer Science Department, Stanford University.
Email: \texttt{reingold@stanford.edu}.
Supported in part by NSF Award IIS-1908774 and by Simons Foundation investigators award 689988.}}
\date{}
\begin{document}
\maketitle
\begin{abstract}
We study the problem of robustly estimating the mean of a $d$-dimensional distribution given $N$ examples, where most coordinates of every example may be missing and $\varepsilon N$ examples may be arbitrarily corrupted. Assuming each coordinate appears in a constant factor more than $\varepsilon N$ examples, we show algorithms that estimate the mean of the distribution with information-theoretically optimal dimension-independent error guarantees in nearly-linear time $\widetilde O(Nd)$. Our results extend recent work on computationally-efficient robust estimation to a more widely applicable incomplete-data setting.
\end{abstract}
\section{Introduction}
\label{sec:intro}
Imagine a program committee that wants to evaluate a collection of papers with a group of reviewers. As it is often impractical for a single reviewer to read all the papers, the common practice is to assign every paper only to a small sub-group of the reviewers. Assuming that all the reviewers are reliable, and the ratings they provide for a paper $p$ are independently drawn from a distribution $\cD_p$, one can estimate the mean of the distribution simply by taking the average rating over the sub-group. 
The estimation error can be reduced towards zero as the size of the sub-group grows.
In reality, however, not all ratings are reliably from $\cD_p$, and the empirical average often gives a bad estimate. 

In this work, we study robust estimation in such scenarios where we only have a partially-reliable dataset, which is also highly incomplete -- each reviewer only provides information about a small subset of the papers. 
These scenarios also appear naturally in other contexts such as crowdsourcing \citep{vuurens2011much} and peer-grading for online courses \citep{piech2013tuned,kulkarni2013peer}, 
where small pieces of data are aggregated from a large population that cannot be completely trusted.

Recently, there has been a line of work on robustly estimating the mean and covariance of a distribution given a partially-reliable dataset
\citep{diakonikolas2019recent,diakonikolas2019robust,lai2016agnostic,charikar2017learning,diakonikolas2017being,diakonikolas2018robustly,steinhardt2018resilience,cheng2019high,depersin2019robust,dong2019quantum,lei2020fast,cheng2020high,cherapanamjeri2020list,zhu2020robust,diakonikolas2020outlier,hopkins2020robust}.
Following the pioneering works by \citet{tukey1960survey} and \citet{huber1964robust} in robust statistics,
the dataset is modeled as 
initially i.i.d.\ examples $X\sps 1,\ldots,X\sps N \in\bR^d$ from the distribution, but $\varepsilon N$ of the examples are arbitrarily corrupted by an adversary.
The presence of an $\varepsilon$-fraction of adversarial corruption not only makes the empirical average a bad estimator for the mean, but also makes it information-theoretically impossible to reduce the estimation error towards zero even with infinitely many examples. For estimating the mean of a spherical Gaussian, the Tukey median \citep{tukey1975mathematics} achieves minimax-optimal error guarantees \citep{chen2018robust,zhu2020does}, but it is hard to compute in general \citep{johnson1978densest,diakonikolas2017statistical}; the linear-time-computable coordinate-wise median gives much worse $L_2$ errors that grow with the dimension.
Recent breakthroughs by \citet{diakonikolas2019robust} and \citet{lai2016agnostic} led to the best of both worlds: there are nearly-linear time algorithms achieving information-theoretically optimal error guarantees \citep{cheng2019high,depersin2019robust,dong2019quantum,lei2020fast}.

One limitation of these existing works is that they require observing 
all the $d$ coordinates of each example,
which is often impractical when the dimension is huge. In peer-grading, for example, a single student can only grade a very small fraction of the class. We thus ask: can one estimate the mean of a $d$-dimensional distribution given an $\varepsilon$-corrupted dataset in which most of the coordinates of every example are missing? 

If a particular coordinate is missing in every example, there is certainly no way to estimate that coordinate of the mean accurately. We thus assume that every coordinate appears in at least $\gamma N$ examples for $\gamma = \Omega(\varepsilon)$ so that the majority of the present coordinates are not corrupted. Even with this assumption, the incompleteness of the dataset brings additional challenges. The Tukey median, for example, is not well-defined in such an incomplete dataset. Natural approaches to eliminating the missing coordinates also fail as we show in \Cref{sec:fail}.
Nonetheless, we can still achieve optimal dimension-independent error guarantees in nearly-linear time as if there were no missing coordinates (see \Cref{sec:results} for our results).

Following previous works \citep{diakonikolas2017being,cheng2019high,dong2019quantum}, we focus on two classes of distributions: 1) sub-Gaussian distributions with equal variance along all directions, and 2) distributions with bounded variances along all directions (see \Cref{sec:distributions} for formal definitions).
Assuming that the present non-corrupted coordinates are drawn from such a distribution, our algorithms estimate the mean of the distribution by starting with a coarse guess obtained by taking the coordinate-wise median, and refining the guess iteratively. 
In order to apply existing algorithms which only work for complete datasets,
we first fill the missing coordinates with pre-determined values drawn i.i.d.\ from a mean-zero Gaussian distribution. In each iteration, we adjust the pre-determined values according to our current guess of the mean, and 
apply an existing algorithm to the resulting dataset to
compute the next guess.
We show that every new guess improves a potential function by a multiplicative factor, 
allowing us to
eventually reduce the estimation error to the information-theoretical limit.
The properties we need for the existing algorithm are
described in \Cref{thm:no-missing},
which we prove using the algorithm by \citet{dong2019quantum-arxiv} to obtain the best running time.

\subsection{Corruption Model}
\label{sec:model}
We specify our assumptions on how the corrupted and incomplete dataset is generated.
We assume that all the examples are drawn independently from ``nice'' distributions in a class $\cP$ before some of them are corrupted.
Specifically, we assume our dataset $X\sps 1,\ldots,X\sps N\in(\bR\cup\{*\})^d$ is generated by an adversary according to the following procedure, where $*$ is the placeholder for the missing coordinates:
\begin{OneLiners}
\item[1.] The adversary chooses a set of present (non-missing) coordinates $P\sps i\subseteq\{1,\ldots,d\}$ for every $i = 1,\ldots,N$.
\item[2.] The adversary chooses a $d$-dimensional distribution $\cD\sps i\in\cP$ for every $i = 1,\ldots,N$, and draws $X\sps i\in\bR^d$ independently from $\cD\sps i$. Every $\cD\sps i$ must have the same mean $\mu\in\bR^d$, which is also chosen by the adversary. 
\item[3.] The adversary modifies at most $\varepsilon N$ examples $X\sps i$ arbitrarily, potentially changing the corresponding $P\sps i$;
\item[4.] For every example $X\sps i$, its coordinates $X\sps i_j$ are concealed by $*$ whenever $j\notin P\sps i$.
\end{OneLiners}

In short, the adversary can choose for every example which coordinates are missing, and can corrupt an $\varepsilon$ fraction of the examples after the examples are drawn. We call a dataset generated this way an \emph{$\varepsilon$-corrupted dataset}. \Cref{fig:data} shows an example for $d = 4$ and $N = 7$.

\begin{figure}[h]
\begin{equation*}
\begin{blockarray}{ccccc@{\hspace{20pt}}@{\hspace{20pt}}ccccc}
\mu & 1.0 & 2.0 & 3.0 & 4.0 & \mu & 1.0 & 2.0 & 3.0 & 4.0 \\
\begin{block}{c[cccc]@{\hspace{20pt}}@{\hspace{20pt}}c[cccc]}
X\sps 1 & 1.2 & 1.8 & \org{2.9} & 4.0 &  
X\sps 1 & * & * & 2.9 & *
\\
X\sps 2 & \org{0.9} & 2.2 & \org{2.8} & \org{3.9} &  
X\sps 2 & 0.9 & * & 2.8 & 3.9
\\
X\sps 3 & 0.8 & 1.9 & 3.1 & \org{4.1} &  
X\sps 3 & * & * & * & 4.1
\\
X\sps 4 & \org{1.1} & 2.1 & 2.9 & \org{4.1} &  
X\sps 4 & \mathbf{0.5} & \mathbf{2.6} & * & \mathbf{5.0}
\\
X\sps 5 & 1.0 & 2.0 & 3.0 & 3.8 &  
X\sps 5 & * & * & * & * 
\\
X\sps 6 & 1.2 & \org{2.0} & \org{2.9} & 4.2 &  
X\sps 6 & * & 2.0 & 2.9 & * 
\\
X\sps 7 & \org{1.2} & \org{2.1} & 3.2 & 3.9 &  
X\sps 7 & 1.2 & 2.1 & * & *
\\
\end{block}
\end{blockarray}
\end{equation*}
\caption{The orange entries are the present coordinates chosen in step 1. The left matrix is the dataset after step 2. The right matrix is what we observe after the adversary has corrupted $X\sps 4$.}
\label{fig:data}
\end{figure}
For technical convenience, we allow the adversary to be randomized. Thus, once the adversary is fixed (as a randomized algorithm), the randomness of the dataset comes from the random draws from $\cD\sps i$ and the inherent randomness of the adversary. 
It is often more convenient to consider the present entries coordinate-wise rather than example-wise, so we define $\Gamma_j:=\{i\in\{1,\ldots,N\}:j\in P\sps i\}$ as the set of examples with coordinate $j$ present.
The adversary may sometimes choose $\Gamma_j$ to be the empty set, in which case we have no hope to estimate $\mu_j$. Therefore, we measure the performance of our algorithms only on \emph{$\gamma$-complete datasets}, i.e., datasets with $|\Gamma_j|\geq \gamma N$ for all coordinates $j$.

\subsection{Our Results}
\label{sec:results}
Given an $\varepsilon$-corrupted dataset generated by the procedure in \Cref{sec:model}, we show nearly-linear time algorithms that compute good estimates for the true mean $\mu$ with high probability whenever the dataset is $\gamma$-complete.
Specifically, we prove the following theorems for different classes of distributions defined more formally in \Cref{sec:distributions}:
\begin{theorem}
\label{thm:Gaussian}
Given an $\varepsilon$-corrupted dataset 
$X\sps 1,\ldots,X\sps N\in(\bR\cup\{*\})^d$
generated with $\cP$ being the class
$\cP_1(\eta)$ of $1$-sub-Gaussian distributions with covariance $\eta^2 I$. Assume $C_0\varepsilon\leq \gamma \leq 1$, $\delta \in (0, 1/2)$, and 
$\gamma N \geq \Omega(\log(d/\delta))$,
where $C_0 > 2$ is an absolute constant. 
There is an $\widetilde O(Nd\log(1/\delta))$-time algorithm with the following property:
the event that the dataset is $\gamma$-complete but the algorithm does \emph{not} output an estimate of the true mean $\mu$ up to $L_2$ error 
$\err$
happens with probability at \emph{most} $\delta$, where
\begin{equation}
\label{eq:main-error-1}
\err = O\Big(\sqrt{(d + \log (1/\delta))/\gamma N + (\varepsilon/\gamma)^2\log(\gamma/\varepsilon)}\Big).
\end{equation}
In particular, when 
\begin{equation}
\label{eq:main-sample-1}
\gamma N \geq \Omega\left(\frac{\gamma^2(d + \log(1/\delta))}{\varepsilon^2\log(\gamma/\varepsilon)}\right),
\end{equation}
the error bound simplifies to 
\begin{equation}
\label{eq:main-simple-error-1}
\err = O\Big((\varepsilon/\gamma)\sqrt{\log(\gamma/\varepsilon)}\Big).
\end{equation}
\end{theorem}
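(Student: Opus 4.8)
The plan is to reduce to the complete-data robust estimator of \Cref{thm:no-missing}, invoked inside an ``impute-and-refine'' loop, and to control the error through a potential function that shrinks by a constant factor each round. First I would initialize with the coordinate-wise median: $\hat\mu\sps 0_j$ is the median of the present entries $\{X\sps i_j:i\in\Gamma_j\}$. On a $\gamma$-complete dataset coordinate $j$ has at least $\gamma N$ present entries with at most $\varepsilon N\le\gamma N/C_0$ of them corrupted, so the corrupted fraction among present entries of any coordinate is below $1/C_0$, which is bounded away from $1/2$ since $C_0>2$; because each $\cD\sps i$ is $1$-sub-Gaussian with the common mean $\mu$, its population median lies within $O(1)$ of $\mu_j$, and using $\gamma N\ge\Omega(\log(d/\delta))$ with Chernoff and a union bound over the $d$ coordinates, the empirical median of the clean present entries concentrates. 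Hence $\|\hat\mu\sps 0-\mu\|_\infty=O(1)$, so $\|\hat\mu\sps 0-\mu\|_2=O(\sqrt d)$, except with probability $\delta/3$.

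Next comes the imputation and the reduction. Drawing pre-determined i.i.d.\ Gaussian noise $g\sps i_j\sim\mathcal N(0,\sigma^2)$ with $\sigma=\Theta(1)$ (independent of the dataset), for a guess $\hat\mu$ I define the completed dataset $Y[\hat\mu]\in\bR^{N\times d}$ by $Y[\hat\mu]\sps i_j:=X\sps i_j$ if $j\in P\sps i$ and $Y[\hat\mu]\sps i_j:=\hat\mu_j+g\sps i_j$ otherwise. I would then establish a structural lemma: with probability $1-\delta/3$, and uniformly over all guesses $\hat\mu$ in the relevant range, $Y[\hat\mu]$ is an $\varepsilon$-corrupted \emph{complete} dataset whose non-corrupted sub-sample has empirical mean $\nu=\nu(\hat\mu)$ with $\nu_j=\alpha_j\mu_j+(1-\alpha_j)\hat\mu_j+(\text{finite-sample error})$, where $\alpha_j\in[\gamma',1]$ for $\gamma'=\Omega(\gamma)$ is empirically known, and empirical covariance of operator norm controlled by $\eta$, $\sigma$, and $\|\hat\mu-\mu\|$. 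The shrinkage toward $\hat\mu_j$ is forced because a $(1-\alpha_j)$-fraction of coordinate $j$ is imputed with the guess rather than with draws centered at $\mu_j$, and the covariance bound must absorb the between-example spread that imputation creates; these are exactly the regularity hypotheses \Cref{thm:no-missing} consumes, so running its algorithm on $Y[\hat\mu\sps t]$ returns some $\nu\sps t$ obeying the complete-data guarantee.

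Given $\nu\sps t$, recover the next guess $\hat\mu\sps{t+1}$ by inverting the known coordinate-wise shrinkage (the precise statement of \Cref{thm:no-missing} is what makes this inversion interact correctly with its finite-sample and $\varepsilon$-terms). Combining the structural lemma with the guarantee of \Cref{thm:no-missing} yields, for a potential $\Phi_t$ comparable to $\|\hat\mu\sps t-\mu\|_2^2$, a recursion $\Phi_{t+1}\le q\,\Phi_t+\beta^2$ with contraction factor $q<1$ and residual $\beta=O(\sqrt{(d+\log(1/\delta))/\gamma N}+(\varepsilon/\gamma)\sqrt{\log(\gamma/\varepsilon)})$; the hypothesis $\gamma\ge C_0\varepsilon$ with $C_0$ a large enough absolute constant is precisely what keeps $q$ below $1$, since the $\Omega(\gamma)$ dilution gain from imputation must dominate the variance-inflation loss. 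Iterating $\widetilde O(1)$ rounds from $\Phi_0=O(d)$ drives $\Phi_t$ to $O(\beta^2)$, i.e.\ $\|\hat\mu\sps t-\mu\|_2=O(\beta)$, which is the error bound \eqref{eq:main-error-1}; under the sample bound \eqref{eq:main-sample-1} the first term of $\beta$ is negligible and we get \eqref{eq:main-simple-error-1}. Each round is one call to the $\widetilde O(Nd)$-time algorithm of \Cref{thm:no-missing}, so the total running time is $\widetilde O(Nd\log(1/\delta))$ after accounting for the $\delta$-dependence, and a union bound over the rounds together with the initialization keeps the overall failure probability at most $\delta$.

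The main obstacle is the structural lemma and making the recursion genuinely contract. Two points are delicate. First, the completed points are not i.i.d.: examples with different present sets $P\sps i$ receive different imputed means on their missing coordinates, so the clean part of $Y[\hat\mu]$ is a heterogeneous mixture; one must bound both the extra operator norm it carries and the deviation of $\nu(\hat\mu)$ from $\alpha\odot\mu+(1-\alpha)\odot\hat\mu$ by quantities that do \emph{not} scale with $d$, since this dimension-free control is what prevents the $(\varepsilon/\gamma)^2\log(\gamma/\varepsilon)$ term in \eqref{eq:main-error-1} from acquiring a factor of $d$. Second, because the imputation noise $g\sps i_j$ is fixed once and then reused (after re-centering) at every guess, the concentration feeding the structural lemma has to hold uniformly over all guesses the loop may visit, which requires a discretization argument over the plausible range of $\hat\mu$ rather than a single-dataset bound; pinning down the quantitative trade-off between the $\Omega(\gamma)$ dilution gain and the variance-inflation loss is what fixes the absolute constant $C_0$.
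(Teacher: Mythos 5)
Your overall architecture matches the paper's: coordinate-wise median initialization, Gaussian imputation of missing entries, iterative refinement via the complete-data estimator of \Cref{thm:no-missing}, and a geometrically contracting potential. But three of your key steps diverge from what the paper does, and each divergence is a genuine gap. First, you propose to make the concentration underlying your ``structural lemma'' hold \emph{uniformly over all guesses} via a discretization of the range of $\hat\mu$. The paper never needs this: it proves concentration \emph{once}, for the completed dataset adjusted by the (unknown) true mean $\mu$ (\Cref{def:missing}, \Cref{lm:subgaussian-goodness}), and then \Cref{lm:goodness} is a purely \emph{deterministic} perturbation bound --- since $X\sps{i,\nu}_j - X\sps{i,\mu}_j$ equals $\nu_j-\mu_j$ on imputed entries and $0$ on present ones, the first- and second-moment statistics at guess $\nu$ differ from those at $\mu$ by explicit quantities controlled by $\|\nu-\mu\|_g$, with no new randomness. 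A net over a radius-$O(\sqrt d)$ ball in $\bR^d$ has $e^{\Omega(d)}$ points, and it is far from clear that a union bound over it (and over the $d$ per-coordinate conditions $\|\sum_{i\in G_j}w\sps i(X\sps{i,\mu}-\mu)\|_2\le\beta\sqrt\varepsilon$) preserves the dimension-free error you need; in any case it is unnecessary. Second, your plan to ``invert the known coordinate-wise shrinkage'' is not executable as stated: the true shrinkage factor is $g_j=|G\cap\Gamma_j|/|G|$, which depends on the unknown set $G$ of uncorrupted examples and is \emph{not} empirically known (only $|\Gamma_j|/N$ is, and it differs from $g_j$ by $O(\varepsilon)$); moreover the inversion amplifies the complete-data estimation error by $1/\alpha_j$. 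The paper avoids inversion entirely: it takes the output $\nu'$ of \Cref{thm:no-missing} (an estimate of the convex combination $\mu'_j=g_j\mu_j+(1-g_j)\nu_j$) directly as the next guess and proves via Cauchy--Schwarz that $\|\nu'-\mu\|_g^2\le\|\nu'-\mu'\|_2^2+\sum_j(1-g_j)g_j(\nu_j-\mu_j)^2$, which already contracts.

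Third, the per-iteration contraction factor is $1-\Omega(g_*)=1-\Omega(\gamma)$, not an absolute constant: the hypothesis $\gamma\ge C_0\varepsilon$ keeps the recursion contracting but does not make $1-q=\Omega(1)$. With $q=1-\Omega(\gamma)$ your recursion $\Phi_{t+1}\le q\Phi_t+\beta^2\varepsilon$ has fixed point $\Theta(\beta^2\varepsilon/\gamma)$ and needs $\Omega(\gamma^{-1}\log d)$ rounds, which loses a factor of $1/\gamma$ against \eqref{eq:main-error-1} and breaks the running-time claim. The paper handles this with a pre-processing step (\Cref{sec:preprocessing}) that randomly hashes the $N$ examples into $\Theta(\gamma N)$ buckets, reducing to an instance with $\gamma'=\Omega(1)$, $\varepsilon'=\Theta(\varepsilon/\gamma)$, and $N'=\Theta(\gamma N)$; only after this reduction does the $O(\log N)$-round, constant-rate contraction give \eqref{eq:main-error-1}. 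Your proposal omits this step. A smaller issue: the imputation noise must be $\mathcal N(0,\eta^2)$, not $\mathcal N(0,\Theta(1))$, so that \Cref{claim:composition} keeps the clean completed examples in $\cP_1(\eta)$ with covariance exactly $\eta^2 I$; with $\sigma^2=\Theta(1)\ne\eta^2$ the operator-norm condition $\|\sum_i w\sps i(X\sps{i,\mu}-\mu)(X\sps{i,\mu}-\mu)^\top-\eta^2I\|_\op\le\beta^2$ fails by $\Theta(1)$ whenever $\eta$ is small, forcing the final error up to $O(\sqrt\varepsilon)$ instead of $O((\varepsilon/\gamma)\sqrt{\log(\gamma/\varepsilon)})$.
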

\begin{theorem}
\label{thm:2nd-moment}
Given an $\varepsilon$-corrupted dataset 
$X\sps 1,\ldots,X\sps N\in(\bR\cup\{*\})^d$
generated with $\cP$ being the class
$\cP_2$ of distributions with covariances $\Sigma \preceq I$. Assume $C_0\varepsilon\leq \gamma \leq 1$, $\delta\in (0, 1/2)$ and 
$\gamma N \geq \Omega(\log(d/\delta))$,
where $C_0 > 2$ is an absolute constant.
There is an $\widetilde O(Nd\log(1/\delta))$-time algorithm with the following property:
the event that the dataset is $\gamma$-complete but the algorithm does \emph{not} output an estimate of the true mean $\mu$ up to $L_2$ error 
$\err$
happens with probability at \emph{most} $\delta$, where
\begin{equation}
\label{eq:main-error-2}
\err = O\Big(\sqrt{d\log(d/\delta)/\gamma N + \varepsilon/\gamma}\Big).
\end{equation}
In particular,
when 
\begin{equation}
\label{eq:main-sample-2}
\gamma N \geq \Omega\left(\frac{\gamma d\log(d/\delta)}{\varepsilon}\right),
\end{equation}
the error bound simplifies to
\begin{equation}
\label{eq:main-simple-error-2}
\err = O\Big(\sqrt{\varepsilon/\gamma}\Big).
\end{equation}
\end{theorem}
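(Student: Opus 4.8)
The plan is to reduce to the complete-data guarantee of \Cref{thm:no-missing} through an iterative impute-and-estimate scheme controlled by a geometrically shrinking potential. First I would build a coarse estimate $\hat\mu\sps 0$ coordinate by coordinate: for each $j$, run a one-dimensional robust mean estimator on the $|\Gamma_j|\ge\gamma N$ present values of coordinate $j$, at most an $\varepsilon/\gamma$ fraction of which are corrupted. Since $\varepsilon/\gamma<1/C_0<1/2$ and the clean present values have variance at most $1$, this gives $|\hat\mu\sps 0_j-\mu_j|=O(\sqrt{\varepsilon/\gamma}+\sqrt{\log(d/\delta)/\gamma N})$ simultaneously for all $j$ after a union bound, using $\gamma N\ge\Omega(\log(d/\delta))$. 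This controls $\|\hat\mu\sps 0-\mu\|_\infty$, though the $L_2$ error may still be of order $\sqrt d$.

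Then I would iterate. At step $t$, impute each missing entry $X\sps i_j$ (for $j\notin P\sps i$) by $\hat\mu\sps t_j+g_{ij}$, where the $g_{ij}$ are fixed i.i.d.\ mean-zero Gaussians of variance $\sigma_g^2\le1$; run the algorithm of \Cref{thm:no-missing} on the resulting complete dataset; and form $\hat\mu\sps{t+1}$ from its output, either directly or after a per-coordinate de-biasing that divides the correction on coordinate $j$ by roughly $|\Gamma_j|/N$. Two facts drive the analysis. First, after discarding the at most $\varepsilon N$ originally corrupted examples, each imputed good example is drawn from a distribution in $\cP_2$ whose mean lies within $\|\hat\mu\sps t-\mu\|_2$ of $\mu$ and whose second moment about $\mu$ is, by a direct block computation, at most $I+M\sps t$ on average, where $M\sps t=(\hat\mu\sps t-\mu)(\hat\mu\sps t-\mu)^\top\odot R\sps t$ for a positive semidefinite missingness matrix $R\sps t$ with diagonal entries at most $1-\gamma/2$; the Schur product bound $\lambda_{\max}(A\odot B)\le(\max_j A_{jj})\lambda_{\max}(B)$ then gives $\|M\sps t\|\le(1-\gamma/2)\|\hat\mu\sps t-\mu\|_2^2$. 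Second, the imputed inlier mean differs from $\mu$ only on frequently-missing coordinates and is within $(1-\gamma/2)\|\hat\mu\sps t-\mu\|_2$ of $\mu$, because a $\ge\gamma-\varepsilon\ge\gamma/2$ fraction of good examples have any fixed coordinate present. Feeding both facts into \Cref{thm:no-missing} yields an output lying within the complete-data robust-estimation error of this shrunken target.

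Combining the two effects I would prove a one-step bound $\Phi\sps{t+1}\le\rho\,\Phi\sps t+O(\varepsilon/\gamma+d\log(d/\delta)/\gamma N)$ for $\Phi\sps t=\|\hat\mu\sps t-\mu\|_2^2$ (or a missingness-weighted variant) with some absolute $\rho<1$, iterate it $T=O(\log(dN))$ times, union-bound the failure probability of \Cref{thm:no-missing} and the concentration of the empirical quantities over the $T$ rounds and the $d$ coordinates, and take square roots to get \eqref{eq:main-error-2}. The simplification \eqref{eq:main-simple-error-2} then follows because $\sqrt{d\log(d/\delta)/\gamma N}\le\sqrt{\varepsilon/\gamma}$ precisely when \eqref{eq:main-sample-2} holds. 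Each round costs $\widetilde O(Nd)$ for the imputation plus the $\widetilde O(Nd\log(1/\delta))$ of \Cref{thm:no-missing}, over $O(\log(dN))$ rounds, for a total of $\widetilde O(Nd\log(1/\delta))$.

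The hard part will be establishing the contraction, namely that $\rho$ can be made an absolute constant less than $1$ for every $\gamma\ge C_0\varepsilon$, not merely for $\gamma$ bounded below by $\sqrt\varepsilon$. The obstruction is that the spike $M\sps t$ points essentially along $\hat\mu\sps t-\mu$ with $\|M\sps t\|$ of order $\|\hat\mu\sps t-\mu\|_2^2$, so a black-box bounded-covariance estimator may return an error of order $\|\hat\mu\sps t-\mu\|_2\sqrt\varepsilon$ in precisely the direction we are trying to shrink --- not beaten by the $\Theta(\gamma)$ shrinkage of the target when $\gamma\ll\sqrt\varepsilon$ --- and the naive de-biasing only worsens this by amplifying that component by $1/\gamma$. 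Getting down to $\gamma\ge C_0\varepsilon$ is where the refined, direction-aware (stability-based) guarantees packaged into \Cref{thm:no-missing} must be used together with the fact that the spike lies along the \emph{current, known} error direction, so that the effective corruption felt there behaves like $\varepsilon/\gamma$ rather than $\varepsilon$. I expect this coupling to be the crux; the coordinate-wise initialization, the concentration bounds, and the bookkeeping over the logarithmically many rounds should be routine.
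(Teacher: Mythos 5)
Your overall scheme --- coordinate-wise initialization, impute-and-adjust, run a complete-data robust estimator, and contract a potential over logarithmically many rounds --- is the same as the paper's, and you have correctly located the crux. But the crux is left unresolved, and your proposed resolution points in the wrong direction. The paper does \emph{not} need any direction-aware or stability-based refinement of the inner estimator; it uses \Cref{thm:no-missing} as a black box with the guarantee $\|\nu' - \mu'\|_2^2 \le O\big((\beta^2 + \rho^2)\varepsilon\big)$, where $\mu'_j = g_j\mu_j + (1-g_j)\nu_j$ is the shrunken target and $\rho \ge \|\nu - \mu\|_g$. The reason this suffices for all $\gamma \ge C_0\varepsilon$ (rather than only $\gamma \gtrsim \sqrt{\varepsilon}$, as your triangle-inequality accounting suggests) is that the two error sources are combined in the \emph{weighted squared} norm $\|\cdot\|_g^2$ with no cross term: writing $a = \nu - \mu$, $a' = \nu' - \mu$ and noting $\nu'_j - \mu'_j = a'_j - (1-g_j)a_j$, Cauchy--Schwarz gives $g_j (a'_j)^2 \le (\nu'_j - \mu'_j)^2 + (1-g_j) g_j a_j^2$, hence $\|\nu'-\mu\|_g^2 \le \|\nu'-\mu'\|_2^2 + (1-g_*)\rho^2 \le O(\beta^2\varepsilon) + (1 - g_* + O(\varepsilon))\rho^2$. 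The dangerous contribution of the old error to the new one is thus $O(\varepsilon)\rho^2$, not $O(\sqrt{\varepsilon})\rho^2$, and it is absorbed by the shrinkage term $g_*\rho^2 \ge \Omega(C_0)\varepsilon\rho^2$. (Your per-coordinate de-biasing by $|\Gamma_j|/N$ should be dropped: as you yourself observe, it amplifies the error by $1/\gamma$; the paper never de-biases, it just iterates.)

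There is a second gap: even with the correct decomposition, the per-round contraction factor is $1 - \Theta(g_*) = 1 - \Theta(\gamma)$, not an absolute constant $\rho < 1$ as your recursion $\Phi\sps{t+1} \le \rho\,\Phi\sps t + \cdots$ asserts; with $\gamma$ as small as $\Theta(\log(d/\delta)/N)$ this would force far more than $O(\log(dN))$ rounds and break the running-time claim. The paper closes this with a pre-processing step (\Cref{sec:preprocessing}): randomly hash the $N$ examples into $N' = B\lceil\gamma N\rceil$ buckets and merge each bucket into a single example, producing an $\varepsilon'$-corrupted, $\gamma'$-complete instance with $\gamma' = \Omega(1)$ and $\gamma'/\varepsilon' \approx \gamma/\varepsilon$ (correctness of the merge rests on \Cref{claim:composition}, closure of $\cP_2$ under coordinate-wise composition). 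Only after this reduction does the $O(\log N)$-round count, and hence the $\widetilde O(Nd\log(1/\delta))$ time bound, go through. Your remaining ingredients --- coordinate-wise robust initialization with $O(\sqrt d)$ error, mean-zero imputation (the paper fills with zeros rather than Gaussians for $\cP_2$, which is equivalent), and the concentration statements, which in the paper include the extra per-coordinate condition $\big\|\sum_{i \in G_j} w\sps i (X\sps{i,\mu} - \mu)\big\|_2 \le \beta\sqrt\varepsilon$ needed to control the cross terms in \Cref{lm:goodness} --- are in line with the paper.
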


\paragraph{Tightness of bounds.}
Our error guarantees \eqref{eq:main-simple-error-1} and \eqref{eq:main-simple-error-2} are independent of the dimension $d$. If we focus on the ``hardest'' regime $\eta \ge \Omega(1)$ in \Cref{thm:Gaussian}, the error guarantees \eqref{eq:main-simple-error-1} and \eqref{eq:main-simple-error-2} are information-theoretically optimal up to constants even with infinitely many examples. This is most easily seen by a reduction from 
the complete-data setting. Given an $\varepsilon'$-corrupted dataset of $N'$ examples without missing coordinates, we can simply add trivial examples with only missing coordinates to create an $\varepsilon$-corrupted $\gamma$-complete dataset 
of size $N$
for $\varepsilon = \varepsilon'\gamma$ and $N = N'/\gamma$. The optimal error guarantees $O\big(\varepsilon' \sqrt{\log (1/\varepsilon')}\big)$ and $O\big(\sqrt{\varepsilon'}\big)$ for complete data\footnote{
Both error guarantees 
$O\big(\varepsilon' \sqrt{\log (1/\varepsilon')}\big)$ and $O\big(\sqrt{\varepsilon'}\big)$ 
are optimal even for one-dimensional data. 
The optimality of $O\big(\varepsilon' \sqrt{\log (1/\varepsilon')}\big)$ 
is shown by \citet[Lemma 19]{diakonikolas2017robustly-arxiv}.
The optimality of $O\big(\sqrt{\varepsilon'}\big)$ 
follows easily from the indistinguishability of 
the following two distributions on $\varepsilon'$-corrupted datasets: 
the singleton distribution at 0, and 
the distribution over $\big\{0, 1/\sqrt{\varepsilon'}\big\}$ 
where the probability of getting $1/\sqrt{\varepsilon'}$ 
is $0.1\varepsilon'$.
}
then translate exactly to 
\eqref{eq:main-simple-error-1} and \eqref{eq:main-simple-error-2}, respectively. 
Similarly, in order to achieve the error bounds \eqref{eq:main-simple-error-1} and \eqref{eq:main-simple-error-2}, our sample complexities \eqref{eq:main-sample-1} and \eqref{eq:main-sample-2} are optimal up to logarithmic factors.
It is an interesting open question to achieve \emph{sub-Gaussian rates} in \Cref{thm:2nd-moment}, i.e., to replace $d\log (d/\delta)$ by $d + \log(1/\delta)$ (see \citet{depersin2019robust,lei2020fast,diakonikolas2020outlier} for some recent progress in the complete data setting).

\paragraph{Assumptions on distributions.}
By scaling the entire dataset, 
for any known $\sigma^2$,
\Cref{thm:Gaussian} generalizes to $\sigma^2$-sub-Gaussian distributions and 
\Cref{thm:2nd-moment} generalizes to distributions with covariances $\Sigma \preceq \sigma^2 I$.
If one wants to apply \Cref{thm:Gaussian} to distributions whose covariances are not $\eta^2 I$, but some other known diagonal matrix $\Sigma$, one only needs to scale each coordinate properly to make the covariance matrix equal to $I$. However, if $\Sigma$ is not diagonal, i.e., the coordinates are correlated with each other, it is less clear how one could perform such a transformation. This is in contrast to the case without missing coordinates where any known covariance can be reduced to identity. We leave the interesting problem of handling correlated coordinates better in corrupted incomplete datasets to future work.
Also, if the covariance is unknown to us, the adversary can easily hide the correlation between a pair of coordinates in our corruption model by never creating an example with both coordinates present, so robust covariance estimation on an incomplete dataset requires different assumptions, which would also be an interesting topic for future work.

\paragraph{How presence/missingness is determined.}
In our corruption model,
the locations of the present coordinates
are chosen by the adversary in step 1. 
In some applications, however, the algorithm itself can choose the locations:
in the paper reviewing setting, 
the program committee can decide the assignment of papers to reviewers. 
In such cases, we can effectively eliminate the missing coordinates
if we choose their locations appropriately (see \Cref{sec:hashing} 
where we also show that 
choosing the locations uniformly at random does \emph{not} give 
the ``easiest'' instances).
Nevertheless, our results hold even when the locations are adversarially chosen, 
applying more broadly to cases where the locations of the present coordinates are restricted by, for example,
conflict of interest and the lack of expertise of a reviewer in certain research areas.
In the other direction, one may consider a stronger adversary where 
the first two steps are swapped:
the adversary decides which coordinates are present \emph{after} seeing the entire data matrix. In this case,
the estimation error would grow with the dimension 
even without corruption:
for example,
assuming every entry in the matrix is i.i.d.\
from $\{0,a\}$ 
where the probability of getting $0$ is slightly higher than $\gamma$,
the adversary can choose to include $i$ in $\Gamma_j$ only when $X\sps i_j = 0$, making it impossible to accurately estimate the mean $\mu$ which clearly depends on $a$
(see \citet{liu2020robust} for more results on datasets where missing entries are chosen by such stronger adversaries).

\subsection{Other Related Works}
Our work is close in spirit to \citet{steinhardt2016avoiding}, who also considered corrupted and incomplete datasets. They made generally weaker assumptions on the dataset: the majority of the examples may be corrupted, fewer coordinates are observed from each example, and different uncorrupted examples may come from different distributions as long as the discrepancies between pairs of coordinates are mostly preserved. 
Consequently, their goal is also weaker, which is to identify $\beta d$ coordinates with approximately the highest average mean given the additional ability to obtain reliable values of a small number of coordinates.

Besides mean and covariance estimation, many other problems have been studied with corrupted data, such as principal component analysis 
\citep{candes2011robust,chandrasekaran2011rank,xu2010principal}, 
learning graphical models \cite{diakonikolas2016robust}, linear regression 
\citep{bhatia2015robust,bhatia2017consistent,diakonikolas2019efficient}, 
sparse estimation
\citep{balakrishnan2017computationally,liu2020high,liu2019high},
and learning discrete distributions \citep {qiao2018learning,chen2020learning}. 
More general models of data corruption are considered by 
\citet{zhu2019generalized},
as well as
in the robust hypothesis testing literature
\citep{huber1973minimax,verdu1984minimax,levy2008robust,gul2017robust,gul2017minimax,gao2018robust}.

Early influential works on statistical inference using incomplete data include \citet{rubin1976inference}, \citet{dempster1977maximum}, and \citet{rubin1979illustrating}.
Data incompleteness is also a general theme of more recent research. For instance, matrix completion is the problem of recovering missing entries in a data matrix, usually under the assumption that the matrix is (approximately) low-rank \citep{candes2009exact,candes2010power,candes2010matrix}. Trace reconstruction is the problem of reconstructing a sequence from its subsequences (traces) \citep{batu2004reconstructing,holenstein2008trace}.

Even without data corruption or incompleteness, it is non-trivial to estimate the mean of a distribution with \emph{optimal} error rates w.r.t.\ a small given failure probability bound $\delta$, especially when the distribution is \emph{heavy-tailed}. \citet{lugosi2019sub} first showed that achieving \emph{sub-Gaussian rates} is possible assuming only bounded second moment (see \citet{lugosi2019mean} for a more comprehensive survey).
\section{Preliminaries}
\label{sec:preliminaries}
\subsection{Classes of Distributions}
\label{sec:distributions}
For $\sigma\geq 0$,
we say a distribution $\cD$ over $\bR^d$ with mean $\mu$ is $\sigma^2$-sub-Gaussian if for all vectors $v$, it holds that $\bE_{X\sim\cD}[\exp((X - \mu)^\top v)]\leq \exp(\sigma^2\|v\|_2^2/2)$. 
Given $\eta\geq 0$, class $\cP_1(\eta)$ consists of $1$-sub-Gaussian distributions with covariance $\eta^2 I$. Class $\cP_2$ consists of distributions with covariances $\Sigma \preceq I$.
It is easy to check that any $\sigma^2$-sub-Gaussian distribution has covariance $\Sigma \preceq \sigma^2 I$, so $\cP_1(\eta)$ is non-empty only when $\eta \le 1$. We thus assume $\eta \le 1$ throughout the paper. The following simple property of $\cP_1(\eta)$ and $\cP_2$ turns out to be handy in the analysis of our algorithm -- both $\cP_1(\eta)$ and $\cP_2$ are closed under coordinate-wise composition
(see \Cref{sec:proof-composition} for proof):
\begin{claim}
\label{claim:composition}
Let $\cP$ be $\cP_1(\eta)$ or $\cP_2$. 
Given $m$ distributions $\cD\sps 1,\ldots,\cD\sps m\in\cP$ all of which have the same mean $\mu$,
we define as follows their coordinate-wise composition $\cD$ 
w.r.t.\ a partition $J\sps 1,\ldots,J\sps m$ of the coordinates
$\{1,\ldots,d\}$.
We first draw $X\sps 1,\ldots,X\sps m$ independently from $\cD\sps 1,\ldots,\cD\sps m$, respectively,
and then construct a variable $X$ so that it agrees with $X\sps i$
on the coordinates in $J\sps i$ for all $i$.
We define the composition $\cD$ to be the distribution of $X$. 
Then, $\cD$ belongs to $\cP$ and it has mean $\mu$.
\end{claim}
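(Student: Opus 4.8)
The plan is to verify, straight from the construction, the three things that membership in $\cP$ demands: that $\cD$ has mean $\mu$, that $\cD$ has the prescribed covariance, and — when $\cP=\cP_1(\eta)$ — that $\cD$ is $1$-sub-Gaussian. Throughout, the two features to lean on are that $J\sps 1,\ldots,J\sps m$ \emph{partition} $\{1,\ldots,d\}$ and that $X\sps 1,\ldots,X\sps m$ are drawn \emph{independently}.

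First I would dispatch the mean, which is immediate: for a coordinate $j$, letting $i$ be the unique index with $j\in J\sps i$, the construction gives $X_j=X\sps i_j$, hence $\bE[X_j]=\mu_j$. Next I would compute the covariance $\Sigma$ of $\cD$. Writing $\Sigma\sps i$ for the covariance of $\cD\sps i$ (which exists in both classes by definition), the key observation is that for $j\in J\sps i$ and $k\in J\sps{i'}$ with $i\ne i'$ the entries $X_j,X_k$ are functions of the independent draws $X\sps i,X\sps{i'}$, so $\Sigma_{jk}=0$, whereas for $j,k$ in the same block $J\sps i$ we have $\Sigma_{jk}=\Sigma\sps i_{jk}$. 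Thus, after permuting coordinates so the blocks are contiguous, $\Sigma$ is block-diagonal with $i$-th block the principal submatrix of $\Sigma\sps i$ on $J\sps i$. For $\cP_1(\eta)$ each such block is $\eta^2 I$, so $\Sigma=\eta^2 I$; for $\cP_2$ each $\Sigma\sps i\preceq I$, a principal submatrix of such a matrix is again $\preceq I$ (it is the restriction of the PSD quadratic form $v\mapsto v^\top(I-\Sigma\sps i)v$ to a coordinate subspace), and a block-diagonal matrix all of whose blocks are $\preceq I$ is itself $\preceq I$, so $\Sigma\preceq I$.

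The one step that genuinely uses the sub-Gaussian hypothesis, rather than just the second moment, is the sub-Gaussian bound for $\cP_1(\eta)$, and this is the part I would write most carefully. Given an arbitrary test vector $v\in\bR^d$, I would split it as $v=\sum_{i=1}^m v\sps i$ with $v\sps i$ the restriction of $v$ to the coordinates in $J\sps i$ (zero elsewhere). Since $X$ agrees with $X\sps i$ on $J\sps i$ and $v\sps i$ is supported there, $(X-\mu)^\top v=\sum_i (X\sps i-\mu)^\top v\sps i$, and then independence factorizes the moment generating function:
\[
\bE\big[\exp((X-\mu)^\top v)\big]=\prod_{i=1}^m\bE\big[\exp((X\sps i-\mu)^\top v\sps i)\big]\le\prod_{i=1}^m\exp(\|v\sps i\|_2^2/2)=\exp(\|v\|_2^2/2),
\]
where the last equality holds because $J\sps 1,\ldots,J\sps m$ partition $\{1,\ldots,d\}$, so $\sum_i\|v\sps i\|_2^2=\|v\|_2^2$. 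This shows $\cD\in\cP_1(\eta)$, and combined with the covariance computation it also gives $\cD\in\cP_2$.

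I do not anticipate a real obstacle here; the argument is essentially bookkeeping. The only places to stay alert are (i) that the cross-block covariance entries genuinely vanish, so that $\Sigma$ is \emph{exactly} block-diagonal rather than merely approximately so, and (ii) that it is the partition property — not just that the $J\sps i$ together cover the coordinates — that makes both the block-diagonal reassembly and the identity $\sum_i\|v\sps i\|_2^2=\|v\|_2^2$ valid.
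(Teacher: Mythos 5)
Your proposal is correct and follows essentially the same route as the paper: the mean is checked coordinate-wise, the sub-Gaussian bound uses the identical decomposition $v=\sum_i v\sps i$ with independence factorizing the moment generating function and orthogonality of the $v\sps i$ recombining the norms, and the covariance claims rest on the same facts (cross-block covariances vanish by independence). Your block-diagonal/principal-submatrix phrasing of the $\cP_2$ case is just an equivalent restatement of the paper's quadratic-form computation $\bE[((X-\mu)^\top v)^2]=\sum_i\bE[((X\sps i-\mu)^\top v\sps i)^2]\le\|v\|_2^2$.
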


\subsection{Other Notations}
For any $G\subseteq\{1,\ldots,N\}$, we define $\Delta_G$ as the set of weights $(w\sps 1,\ldots,w\sps N)$ satisfying $\forall i,w\sps i\geq 0$, $\sum_{i=1}^Nw\sps i= 1$, and $\forall i\notin G, w\sps i=0$.
For any $p\in(0,|G|]$, define $\Delta_{G,p}$ as the set of weights in $\Delta_G$ satisfying $\forall i,w\sps i\leq \frac 1p$. When $p$ is an integer, $\Delta_{G,p}$ is the convex hull of uniform distributions over size-$p$ subsets of $G$. 
We use $\|\cdot \|_2$ and $\|\cdot \|_\op$ to denote the vector $L_2$ norm and its induced matrix operator norm (a.k.a.\ the spectral norm), respectively.
\section{Estimation under Deterministic Conditions}
\label{sec:deterministic}
Our goal is to extract useful information from a corrupted and highly incomplete data matrix:
a fraction of the examples are controlled by an adversary, and most of the entries in our data matrix are missing. Somewhat surprisingly, given the terrible condition of the data matrix, some important structures are still preserved. In this section, we formally characterize the useful structures by a set of deterministic conditions, and show that we can perform efficient mean estimation as long as these deterministic conditions hold, regardless of how the data matrix was generated. In the next section (\Cref{sec:concentration}), we prove these deterministic conditions assuming the dataset is generated by the procedure in \Cref{sec:model}, and thus establish our main theorems.

\subsection{A General Result Assuming No Missing Coordinates}
When there are no missing coordinates, the deterministic conditions 
sufficient for performing robust mean estimation 
have been identified by many previous works \citep{diakonikolas2017being,steinhardt2018resilience,cheng2019high}. 
We state such a set of conditions by defining the goodness property of a dataset (\Cref{def:no-missing}), and show that this property is sufficient for robust mean estimation in \Cref{thm:no-missing}, which summarizes the results by \citet{dong2019quantum-arxiv}.

\begin{definition}[summarized based on \citet{cheng2019high}]
\label{def:no-missing}
We say a dataset $X\sps 1,\ldots,X\sps N\in\bR^d$
is $(\varepsilon,\eta,\beta)$-good with respect to $\mu\in\bR^d$
if there exists $G\subseteq\{1,\ldots,N\}$ with the following properties:
\begin{OneLiners}
\item $|G|\geq (1 - \varepsilon)N$;
\item For all $w\in \Delta_{G,(1 - 3\varepsilon)N}$,
\begin{align*}
& \big\|\sum_{i\in G}w\sps i(X\sps i - \mu)\big\|_2 \leq \beta\sqrt{\varepsilon},\ \text{and}\\
& \big\|\big(\sum_{i\in G}w\sps i(X\sps i - \mu)(X\sps i - \mu)^\top\big) - \eta^2 I\big\|_\op\leq \beta^2.
\end{align*}
\end{OneLiners}
\end{definition}

\begin{theorem}[implicit in \citet{dong2019quantum-arxiv}]
\label{thm:no-missing}
There are absolute constants $c_1\in (0,1/3)$ and $C_2 > 0$ with the following property.
Suppose that we have an $(\varepsilon, \eta, \beta)$-good dataset 
$X\sps 1,\ldots,X\sps N\in\bR^d$
with respect to some unknown $\mu$,
and assume that 
$0 \leq \varepsilon \leq c_1, 
\beta^2\geq \Omega(\eta^2\varepsilon\log (1/\varepsilon)),\delta\in(0, 1/2)$. 
There is an $\widetilde O(Nd\log (1/\delta))$-time algorithm computing an estimate $\nu$
which, with probability at least $1 - \delta$, satisfies that
\begin{equation*}
\|\nu - \mu\|_2^2\leq C_2\beta^2\varepsilon.
\end{equation*}
\end{theorem}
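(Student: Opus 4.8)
The plan is to run, with the goodness condition of \Cref{def:no-missing} substituting for the i.i.d.\ sampling assumption, the near-linear-time robust mean estimation algorithm of \citet{dong2019quantum-arxiv}, keeping the parameter $\beta$ free rather than fixing it to its minimal feasible value. The first and most conceptual ingredient is a \emph{certificate lemma}: any weight vector $w$ with $w\sps i\le \tfrac{1}{(1-3\varepsilon)N}$ for all $i$, with at least $1-2\varepsilon$ of its mass on the good set $G$, and whose weighted empirical covariance $M(w):=\sum_i w\sps i(X\sps i-\hat\mu(w))(X\sps i-\hat\mu(w))^\top$ (where $\hat\mu(w):=\sum_i w\sps iX\sps i$) obeys $\|M(w)-\eta^2 I\|_\op\le C\beta^2$, already satisfies $\|\hat\mu(w)-\mu\|_2^2\le C'\beta^2\varepsilon$. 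This follows from the usual decomposition of $\hat\mu(w)-\mu$ into the contribution of $G$, bounded by the first goodness inequality together with a Cauchy--Schwarz/variance estimate that invokes the second goodness inequality and $\|M(w)-\eta^2 I\|_\op\le C\beta^2$, and the contribution of $G^c$, which is supported on at most $\varepsilon N$ points whose weighted second moment in every direction is $O(\eta^2+\beta^2)=O(\beta^2)$ (using the covariance bound and the hypothesis $\beta^2\ge\Omega(\eta^2\varepsilon\log(1/\varepsilon))$), and hence has length $O(\beta\sqrt\varepsilon)$. So it suffices to produce such a $w$.

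Second, I would recall why a downweighting (filtering) scheme produces such a $w$. Initialize $w$ uniform (legal since $\varepsilon<c_1<1/3$ makes $\tfrac1N\le\tfrac{1}{(1-3\varepsilon)N}$). While $\|M(w)-\eta^2 I\|_\op>C\beta^2$, let $\lambda>C\beta^2$ be the top eigenvalue of $M(w)-\eta^2 I$ with eigenvector $v$, score each point by $\tau\sps i=\langle v,X\sps i-\hat\mu(w)\rangle^2$, and decrease the $w\sps i$ in proportion to $\tau\sps i$. Since $\sum_i w\sps i\tau\sps i=v^\top M(w)v=\eta^2+\lambda$ while the good points contribute only $\sum_{i\in G}w\sps i\tau\sps i=O(\eta^2+\beta^2)\ll\lambda$ by goodness, a constant fraction of the score mass --- hence of the removed weight --- is charged to $G^c$; as the total $G^c$-weight starts below $2\varepsilon$ and shrinks faster than the $G$-weight, the procedure terminates with $w$ still placing $\ge 1-2\varepsilon$ mass on $G$, at which point the certificate lemma applies.

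Third --- and here lies the bulk of the work and the main obstacle --- this must run in $\widetilde O(Nd\log(1/\delta))$ time, which rules out ever forming $M(w)$ or computing an exact top eigenvector. Following the quantum-entropy-scoring device of \citet{dong2019quantum-arxiv}, I would (i) implement $u\mapsto M(w)u=\sum_i w\sps i\langle X\sps i-\hat\mu(w),u\rangle(X\sps i-\hat\mu(w))$ as the only access to $M(w)$, in $O(Nd)$ time per product; (ii) replace the single top eigenvector by the soft spectral weight $U\propto\exp(\alpha M(w))$ for inverse temperature $\alpha=\widetilde\Theta(1/\beta^2)$, scoring point $i$ by $\langle(X\sps i-\hat\mu(w))(X\sps i-\hat\mu(w))^\top,U\rangle$; (iii) apply $U^{1/2}$ to a vector via a low-degree polynomial (Taylor/Chebyshev) in $M(w)$, so that each score is estimated to a $(1\pm o(1))$ factor by averaging $\langle X\sps i-\hat\mu(w),U^{1/2}g_k\rangle^2$ over $O(\log(N/\delta))$ Gaussian vectors $g_k$ --- this is the source of both the $\log(1/\delta)$ runtime factor and the $\delta$ failure probability. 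A matrix-multiplicative-weights regret bound then certifies that this smoothed scoring still charges a constant fraction of its mass to $G^c$ and that only $\widetilde O(1)$ such smoothed-removal steps are needed before $\|M(w)-\eta^2 I\|_\op\le C\beta^2$, each step costing $\widetilde O(Nd)$ (after an initial coarse clipping that makes the starting operator norm at most $\poly(Nd)\cdot\eta^2$). The delicate points I expect to be hardest are propagating the errors from the polynomial approximation of $\exp(\alpha M(w))$ and from the random-projection score estimates through the regret/potential argument without breaking the invariant ``removed $G^c$-weight is at least a constant times the removed $G$-weight,'' and carrying the free parameter $\beta$ (which in the intended application can far exceed the Gaussian-optimal $\eta\sqrt{\varepsilon\log(1/\varepsilon)}$) through all thresholds so that the final guarantee reads $C_2\beta^2\varepsilon$.
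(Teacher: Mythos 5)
Your proposal is sound in outline but takes a genuinely different route from the paper. The paper does not re-derive the filtering algorithm at all: it treats the nearly-linear-time estimator of \citet{dong2019quantum-arxiv} (their Theorems 3.5 and 4.7) as a black box, and the entire proof in \Cref{sec:QUE} consists of verifying that an $(\varepsilon,\eta,\beta)$-good dataset meets that theorem's preconditions. Concretely, the paper (i) normalizes so that w.l.o.g.\ $\varepsilon\ge 1/N$, $\eta\in\{0,1\}$, and $\beta\le 100$; (ii) invokes the naive-pruning lemma of \citet{dong2019quantum-arxiv} to re-center the data so that $\max_{i\in G}\|X\sps i\|\le O\big(\sqrt N\big)$; (iii) proves the \emph{global} empirical second-moment bound $\big\|\frac 1N\sum_{i=1}^N(X\sps i-\hat\mu)(X\sps i-\hat\mu)^\top-\eta^2 I\big\|_\op\le O(N\varepsilon)$, which must hold over \emph{all} $N$ points including the $\varepsilon N$ bad ones and is what makes the spectral iterations terminate quickly; and (iv) translates the goodness conditions, stated for weights in $\Delta_{G,(1-3\varepsilon)N}$, to the convention $\Delta_{G,2\varepsilon|G|}$ used in the cited work via the explicit bijection of \Cref{claim:pair}. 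Your route instead opens the black box: the certificate lemma, the downweighting filter, and the quantum-entropy-scoring implementation you sketch are precisely the internals of the cited algorithm, and your description of them is accurate. What the reduction buys is that the steps you yourself flag as hardest --- propagating the polynomial-approximation and random-projection errors through the matrix-multiplicative-weights potential argument while preserving the ``more bad weight than good weight removed'' invariant, and carrying a free $\beta$ through all thresholds --- are already done in the cited work and need not be redone; the price is the bookkeeping in (i)--(iv), in particular the global covariance precondition and the weight-set translation, neither of which your sketch addresses (your ``coarse clipping to operator norm $\poly(Nd)\cdot\eta^2$'' also degenerates when $\eta=0$, a case the paper handles by rescaling so that $\beta\le 100$). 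Either route can work, but if you pursue yours you must actually close the error-propagation gaps you identified rather than leave them as anticipated difficulties.
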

We defer the proof of \Cref{thm:no-missing} to \Cref{sec:QUE}.

\subsection{Improving the Guess When Coordinates are Missing}
\label{sec:determinisitic-missing}
When our examples contain missing coordinates, the $L_2$ norm and the operator norm in \Cref{def:no-missing} are not well-defined. We address this issue by filling the missing coordinates according to a guess $\nu$ of the true mean $\mu$, and prove \Cref{lm:goodness} showing that
the estimation error of $\nu$ controls
the goodness of the resulting complete dataset, 
on which we can compute our next guess $\nu'$ using the algorithm from \Cref{thm:no-missing}.
In \Cref{thm:deterministic}, we bound the estimation error of the new guess $\nu'$ directly by the estimation error of the old guess $\nu$ together with a goodness measure  for the original \emph{incomplete} dataset, which we define in \Cref{def:missing}.
This ensures that our guess becomes accurate enough 
for proving our main theorems after $O(\log N)$ iterations, 
as we will show in the next section 
(\Cref{sec:concentration}).

Given an incomplete dataset $X\sps 1,\ldots, X\sps N$, 
we first fill the missing coordinates in a simple pre-determined way, 
which we will specify in \Cref{sec:concentration}.
For now, it suffices to assume that we have a \emph{completed dataset}
$X\sps {1,0},\ldots,X\sps{N,0}\in\bR^d$ that agrees with
the original dataset on the present entries.
Given a guess $\nu$, 
we define the \emph{adjusted dataset} $X\sps {1,\nu},\ldots,X\sps {N,\nu}$ by 
\begin{equation}
\label{eq:adjust}
X\sps {i,\nu}_j = \left\{\begin{array}{ll} 
X\sps {i,0}_j = X\sps i_j, & \textup{if} ~ X\sps i_j \neq *,\\
X\sps {i,0}_j + \nu_j, & \textup{otherwise}.
 \end{array}\right.
\end{equation}
We define the deterministic conditions on the completed dataset $X\sps{1,0},\ldots,X\sps{N,0}$:
\begin{definition}
\label{def:missing}
Given an incomplete dataset $X\sps 1,\ldots, X\sps N\in(\bR\cup\{*\})^d$, 
recall that $\Gamma_j$ is the set of examples
with coordinate $j$ present, i.e., 
$X\sps i_j \neq * \Longleftrightarrow i\in \Gamma_j$.
We say a corresponding completed dataset $X\sps{1,0},\ldots,X\sps {N,0}\in\bR^d$  
is $(\varepsilon,g_*,\eta,\beta)$-good with respect to $\mu$
if there exists $G\subseteq \{1,\ldots,N\}$ with the following properties:
\begin{OneLiners}
\item $|G|\geq (1 - \varepsilon)N$;
\item $g_j:=|G_j|/|G|\geq g_*$ for all $j = 1,\ldots,d$, where $G_j:=G\cap \Gamma_j$;
\item For all $w\in \Delta_{G,(1 - 3\varepsilon)N}$, 
\begin{align*}
& \big\|\sum_{i\in G}w\sps i(X\sps{i,\mu}- \mu)\big\|_2 \leq \beta\sqrt\varepsilon,\\
& \big\|\big(\sum_{i\in G}w\sps i(X\sps{i,\mu} - \mu)(X\sps {i,\mu} - \mu)^\top\big) -  \eta^2 I\big\|_\op \leq  \beta^2,\\
& \big\|\sum_{i\in G_j}w\sps i(X\sps{i,\mu} - \mu)\big\|_2 \leq \beta\sqrt\varepsilon,\quad \forall j = 1,\ldots,d.
\end{align*}
\end{OneLiners}
\end{definition}
We differentiate \Cref{def:missing} from \Cref{def:no-missing}
by the number of parameters: \Cref{def:missing} has an additional
parameter $g_*$ suggesting the incomplete nature of the original dataset.
It is clear that a completed dataset $X\sps{1,0},\ldots,X\sps{N,0}$ being $(\varepsilon,g_*,\eta,\beta)$-good with respect to $\mu$ (\Cref{def:missing}) implies that the adjusted dataset $X\sps{1,\mu},\ldots,X\sps{N,\mu}$ being $(\varepsilon,\eta,\beta)$-good (\Cref{def:no-missing}). 
However, this adjusted dataset
depends on the unknown $\mu$,
so we cannot apply \Cref{thm:no-missing} directly to it.
The following lemma (proved in \Cref{sec:proof-goodness})
shows that if we instead use a guess $\nu$ 
to adjust the dataset, the goodness of the resulting dataset
is controlled by the estimation error of $\nu$
measured by
\begin{equation*}
\|\nu - \mu\|_g := \Big(\sum_{j = 1}^dg_j(\nu_j - \mu_j)^2\Big)^{1/2}.
\end{equation*}
\begin{lemma}
\label{lm:goodness}
There are absolute constants $C_3>1,C_4>1$ with the following property.
Suppose we have an $(\varepsilon,g_*, \eta,\beta)$-good completed dataset $X\sps {1,0},\ldots,X\sps{N,0}$ with respect to $\mu$. 
Assume $5\varepsilon\leq g_* \leq 1$. 
Suppose $G,g_1,\ldots,g_d$ satisfy the conditions in \Cref{def:missing}.
Let $\nu$ be a guess with $\|\nu - \mu\|_g \leq \rho$. Then the adjusted dataset $X\sps {1,\nu},\ldots,X\sps {N,\nu}$ is $(\varepsilon,\eta,\beta')$-good with respect to $\mu'$ where
$
\mu'_j := g_j \mu_j + (1 - g_j) \nu_j$
and 
$
(\beta')^2 := C_3\beta^2 + C_4\rho^2.
$
\end{lemma}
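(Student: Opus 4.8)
The plan is to verify that the adjusted dataset $X\sps{1,\nu},\dots,X\sps{N,\nu}$ satisfies \Cref{def:no-missing} with respect to $\mu'$, reusing the \emph{same} set $G$ and the parameter triple $(\varepsilon,\eta,\beta')$. The size condition $|G|\ge(1-\varepsilon)N$ is inherited verbatim from \Cref{def:missing}, so the work is to check the two norm bounds for an arbitrary $w\in\Delta_{G,(1-3\varepsilon)N}$. Writing $Y\sps i:=X\sps{i,\mu}-\mu$, the first step is to record the exact discrepancy $b\sps i:=(X\sps{i,\nu}-\mu')-Y\sps i$. From \eqref{eq:adjust} and $\mu'_j=g_j\mu_j+(1-g_j)\nu_j$ one computes, for $i\in G$,
\[
b\sps i_j=\begin{cases}-(1-g_j)(\nu_j-\mu_j),& i\in G_j,\\[2pt] g_j(\nu_j-\mu_j),& i\in G\setminus G_j,\end{cases}
\]
equivalently $b\sps i=c+d\sps i$ with $c_j:=-(1-g_j)(\nu_j-\mu_j)$ (independent of $i$) and $d\sps i_j:=(\nu_j-\mu_j)\mathbbm{1}[i\notin G_j]$. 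The second ingredient I would isolate is a ``concentration of weights'' fact: since every $w\in\Delta_{G,(1-3\varepsilon)N}$ has $w\sps i\le 1/((1-3\varepsilon)N)$ while $|G_j|\le|G|\le N$, the mass $W_j:=\sum_{i\in G_j}w\sps i$ obeys $W_j\le g_j/(1-3\varepsilon)$ and, symmetrically, $1-W_j\le(1-g_j)/(1-3\varepsilon)$. This is exactly where the cap in the definition of $\Delta_{G,(1-3\varepsilon)N}$ is used.

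For the first-moment bound I would split $\sum_{i\in G}w\sps i(X\sps{i,\nu}-\mu')=\sum_{i\in G}w\sps iY\sps i+\sum_{i\in G}w\sps ib\sps i$; the first term has $L_2$ norm $\le\beta\sqrt\varepsilon$ by \Cref{def:missing}, and the $j$-th coordinate of the second term is $(\nu_j-\mu_j)(g_j-W_j)$. The weight bounds give $-\tfrac{3\varepsilon}{1-3\varepsilon}g_j\le g_j-W_j\le\tfrac{3\varepsilon}{1-3\varepsilon}(1-g_j)$, hence $(g_j-W_j)^2\le\tfrac{9\varepsilon^2}{(1-3\varepsilon)^2}\,g_j/g_*$ (checking the two cases $g_j\ge\tfrac12$ and $g_j<\tfrac12$ separately, using $g_*\le g_j$ and $g_*\le 1$). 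Summing against $g_j(\nu_j-\mu_j)^2$ and using $g_*\ge 5\varepsilon$ yields $\big\|\sum_{i\in G}w\sps ib\sps i\big\|_2=O(\sqrt\varepsilon\,\rho)$, so $\big\|\sum_{i\in G}w\sps i(X\sps{i,\nu}-\mu')\big\|_2\le\sqrt\varepsilon(\beta+O(\rho))\le\beta'\sqrt\varepsilon$ once $C_3>1$ and $C_4$ are large enough, via $2\beta\rho\le\lambda\beta^2+\lambda^{-1}\rho^2$ with $\lambda$ small.

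For the second-moment bound I would fix a unit vector $v$ and expand $\sum_{i\in G}w\sps i\big(v^\top(X\sps{i,\nu}-\mu')\big)^2=\sum_iw\sps i(v^\top Y\sps i)^2+2T+S$ with $T:=\sum_iw\sps i(v^\top Y\sps i)(v^\top b\sps i)$ and $S:=\sum_iw\sps i(v^\top b\sps i)^2$. The first piece lies in $[\eta^2-\beta^2,\eta^2+\beta^2]$ by \Cref{def:missing}. For $S$: Cauchy--Schwarz plus the weight bounds give $S\le\sum_iw\sps i\|b\sps i\|_2^2=\sum_j\big((1-g_j)^2W_j+g_j^2(1-W_j)\big)(\nu_j-\mu_j)^2\le O(1)\sum_jg_j(\nu_j-\mu_j)^2\le O(\rho^2)$. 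For $T$, I would use $b\sps i=c+d\sps i$: the $c$-part is $(v^\top c)\big(v^\top\bar Y\big)$ where $\bar Y:=\sum_iw\sps iY\sps i$, bounded by $\|c\|_2\,\beta\sqrt\varepsilon\le(\rho/\sqrt{g_*})\beta\sqrt\varepsilon=O(\beta\rho)$ (since $\|c\|_2\le\|\nu-\mu\|_2\le\rho/\sqrt{g_*}$ and $g_*\ge5\varepsilon$); the $d$-part unfolds to $\big(v^\top\bar Y\big)\big(v^\top(\nu-\mu)\big)-\sum_jv_j(\nu_j-\mu_j)(v^\top\bar Y_j)$ with $\bar Y_j:=\sum_{i\in G_j}w\sps iY\sps i$, the first term being $O(\beta\rho)$ as before and the second bounded by $\|\nu-\mu\|_2\cdot\max_j|v^\top\bar Y_j|\le(\rho/\sqrt{g_*})\cdot\beta\sqrt\varepsilon=O(\beta\rho)$ using the per-coordinate bound $\|\bar Y_j\|_2\le\beta\sqrt\varepsilon$ from \Cref{def:missing}. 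Thus $|T|=O(\beta\rho)$, so $\sum_iw\sps i\big(v^\top(X\sps{i,\nu}-\mu')\big)^2\in[\eta^2-\beta^2-O(\beta\rho),\ \eta^2+\beta^2+O(\beta\rho)+O(\rho^2)]$; absorbing $2\beta\rho\le\lambda\beta^2+\lambda^{-1}\rho^2$ and taking the supremum over $v$ gives the operator-norm bound $\le(\beta')^2$ for any $C_3>1$ with $C_4$ chosen accordingly.

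The main obstacle is the cross term $T$. The naive estimate $|T|\le\big(\sum_iw\sps i(v^\top Y\sps i)^2\big)^{1/2}\big(\sum_iw\sps i(v^\top b\sps i)^2\big)^{1/2}\le\sqrt{\eta^2+\beta^2}\cdot O(\rho)$ is useless, because $\eta$ may be $\Theta(1)$, so $|T|$ would be $\Theta(\rho)$ — a quantity that cannot be absorbed into $(\beta')^2=C_3\beta^2+C_4\rho^2$ when $\rho$ is small. Avoiding this is precisely why the third, coordinate-wise condition of \Cref{def:missing} is needed: it lets one replace the $\sqrt{\eta^2+\beta^2}$ factor by $\beta\sqrt\varepsilon$, after which the extra $\sqrt\varepsilon$ exactly cancels the $1/\sqrt{g_*}\le 1/\sqrt{5\varepsilon}$ produced when passing from $\|\nu-\mu\|_g$ to $\|\nu-\mu\|_2$. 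The only other place that needs care is the bookkeeping in the first-moment bound (the case split on $g_j$ that turns $(g_j-W_j)^2$ into something summable against the $g_j$-weighted norm).
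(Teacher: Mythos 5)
Your proposal is correct and follows essentially the same route as the paper's proof: the same perturbation $b\sps i$ (the paper's $Z\sps i$), the same weight-cap estimates on $W_j$, and the same crucial use of the per-coordinate condition $\big\|\sum_{i\in G_j}w\sps i(X\sps{i,\mu}-\mu)\big\|_2\le\beta\sqrt\varepsilon$ to get the cross term down to $O(\beta\rho)$. The only differences are presentational (you bound the quadratic form $v^\top Mv$ directly and split $b\sps i=c+d\sps i$, whereas the paper bounds the cross-term matrix by its Frobenius norm column by column), and your diagnosis of why naive Cauchy--Schwarz on the cross term fails matches exactly the role that condition plays in the paper.
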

Given the above goodness property of the adjusted dataset
$X\sps{1,\nu},\ldots, X\sps{N,\nu}$,
we can apply the algorithm in \Cref{thm:no-missing} to it
to compute an improved guess $\nu'$:
\begin{theorem}
\label{thm:deterministic}
There are absolute constants $C_5 \geq 5,C_6 > 0,c_7\in (0,1)$ with the following property.
Suppose we have an $(\varepsilon, g_*, \eta, \beta)$-good completed dataset 
$X\sps {1,0},\ldots,X\sps{N,0}\in\bR^d$ with respect to $\mu$,
where $C_5\varepsilon\leq g_*\leq 1, \eta\geq 0, \beta^2\geq \Omega(\eta^2\varepsilon\log(1/\varepsilon))$.
Suppose the $\mu,G,g_1,\ldots,g_d$ satisfying the conditions in \Cref{def:missing}
are all unknown.
Given a current guess $\nu\in\bR^d$,
an upper bound $\rho$ for $\|\nu - \mu\|_g$,
and a desired failure probability bound $\delta\in (0, 1/2)$,
there is an $\widetilde O(Nd\log(1/\delta))$-time algorithm 
computing a new guess $\nu'$ which,
with probability at least $1 - \delta$,
satisfies
\begin{equation*}
\|\nu' - \mu\|_g^2\leq C_6\beta^2\varepsilon + (1 - c_7g_*)\rho^2.
\end{equation*}
\end{theorem}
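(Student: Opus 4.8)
The plan is to reduce to the complete-data algorithm of \Cref{thm:no-missing}: run it on the adjusted dataset $X\sps{1,\nu},\ldots,X\sps{N,\nu}$ from \eqref{eq:adjust}, obtain an $L_2$ estimate of the shifted target $\mu'$ supplied by \Cref{lm:goodness}, and then convert this into a $\|\cdot\|_g$-error bound against the true $\mu$. Forming $X\sps{i,\nu}$ from $X\sps{i,0}$ and $\nu$ takes $O(Nd)$ time and uses no knowledge of the unknown $\mu,G,g_1,\ldots,g_d$. Since $C_5\ge 5$ we have $5\varepsilon\le g_*\le 1$, so \Cref{lm:goodness} gives that $X\sps{1,\nu},\ldots,X\sps{N,\nu}$ is $(\varepsilon,\eta,\beta')$-good with respect to $\mu'_j=g_j\mu_j+(1-g_j)\nu_j$, where $(\beta')^2=C_3\beta^2+C_4\rho^2$. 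To invoke \Cref{thm:no-missing} on this dataset (with $\mu',\beta'$ in the roles of $\mu,\beta$) I would check $(\beta')^2\ge C_3\beta^2\ge\beta^2\ge\Omega(\eta^2\varepsilon\log(1/\varepsilon))$, that $\delta\in(0,1/2)$ holds by hypothesis, and that $\varepsilon\le g_*/C_5\le 1/C_5\le c_1$, the last inequality being why we also demand $C_5\ge 1/c_1$. This yields, in time $\widetilde O(Nd\log(1/\delta))$ and with probability $\ge 1-\delta$, an estimate $\nu'$ with $\|\nu'-\mu'\|_2^2\le C_2(\beta')^2\varepsilon=C_2(C_3\beta^2+C_4\rho^2)\varepsilon$; from now on we condition on this event, and the remaining task is purely deterministic.

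The conversion from $\|\nu'-\mu'\|_2$ to $\|\nu'-\mu\|_g$ is the crux, because the naive triangle-inequality bound $\|\nu'-\mu\|_g\le\|\nu'-\mu'\|_2+(1-g_*)\rho$ (which uses $\|\mu'-\mu\|_g\le(1-g_*)\rho$) is too weak when $g_*$ is small: its square has a $\rho^2$-coefficient of roughly $(1-g_*)^2$, larger than the target $1-c_7g_*$. I would instead split coordinate by coordinate with a coordinate-dependent weight. Fix $c_7\in(0,\tfrac12)$ and assume $g_*>0$ (if $g_*=0$ then $\varepsilon=0$, \Cref{thm:no-missing} returns $\nu'=\mu'$, and the claim is immediate). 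Using $\nu'_j-\mu_j=(\nu'_j-\mu'_j)+(1-g_j)(\nu_j-\mu_j)$, for each coordinate $j$ with $g_j<1$ set $s_j:=g_j\big(2(1-c_7)-g_j\big)/(1-g_j)^2$, which is nonnegative since $2(1-c_7)>1\ge g_j$ and is tuned precisely so that $(1+s_j)(1-g_j)^2=1-2c_7g_j$. The elementary bound $(a+b)^2\le(1+s_j^{-1})a^2+(1+s_j)b^2$ then gives
\begin{equation*}
g_j(\nu'_j-\mu_j)^2\le g_j(1+s_j^{-1})(\nu'_j-\mu'_j)^2+g_j(1-2c_7g_j)(\nu_j-\mu_j)^2,
\end{equation*}
while for $g_j=1$ one simply has $g_j(\nu'_j-\mu_j)^2=(\nu'_j-\mu'_j)^2$ since then $\mu'_j=\mu_j$ (no bias contribution). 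Summing over $j$: the second group of terms is at most $(1-2c_7g_*)\sum_jg_j(\nu_j-\mu_j)^2\le(1-2c_7g_*)\rho^2$, using $g_j\ge g_*$ and $\|\nu-\mu\|_g\le\rho$; and since $g_js_j^{-1}=(1-g_j)^2/\big(2(1-c_7)-g_j\big)\le 1/(1-2c_7)$, the first group (including the $g_j=1$ terms, which trivially satisfy the same bound) is at most $\big(1+\tfrac1{1-2c_7}\big)\|\nu'-\mu'\|_2^2$.

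Combining with $\|\nu'-\mu'\|_2^2\le C_2(C_3\beta^2+C_4\rho^2)\varepsilon$ and $\varepsilon\le g_*/C_5$ gives
\begin{equation*}
\|\nu'-\mu\|_g^2\le\Big(1+\tfrac1{1-2c_7}\Big)C_2C_3\,\beta^2\varepsilon+\Big(1-2c_7g_*+\big(1+\tfrac1{1-2c_7}\big)\tfrac{C_2C_4}{C_5}\,g_*\Big)\rho^2,
\end{equation*}
so it is enough to set $C_6:=\big(1+\tfrac1{1-2c_7}\big)C_2C_3$ and to require $C_5\ge\max\!\big\{5,\,1/c_1,\,\big(1+\tfrac1{1-2c_7}\big)C_2C_4/c_7\big\}$, the last condition making the parenthesized $\rho^2$-coefficient at most $1-c_7g_*$; the running time and failure probability carry over verbatim from \Cref{thm:no-missing}. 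I expect the only genuinely delicate point to be this coordinate-wise split with the weights $s_j$ chosen so each bias term has coefficient exactly $1-2c_7g_j$ — the rest is bookkeeping of absolute constants.
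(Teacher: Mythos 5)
Your proposal is correct and follows essentially the same route as the paper: adjust the completed dataset by $\nu$, invoke \Cref{lm:goodness} and \Cref{thm:no-missing} to estimate $\mu'$, then convert to a $\|\cdot\|_g$ bound on $\mu$ via a coordinate-wise weighted splitting of $\nu'_j-\mu_j=(\nu'_j-\mu'_j)+(1-g_j)(\nu_j-\mu_j)$. The only difference is the choice of splitting weights: the paper uses Cauchy--Schwarz (equivalently, the weight $s_j=g_j/(1-g_j)$ in your notation), which gives the cleaner intermediate bound $\|\nu'-\mu\|_g^2\leq\|\nu'-\mu'\|_2^2+(1-g_*)\rho^2$ and then absorbs the resulting $C_2C_4\varepsilon\rho^2$ term using $g_*\geq C_5\varepsilon$, exactly as you do.
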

\begin{proof}
We choose $C_5 \geq 1/c_1$ for the $c_1$ in \Cref{thm:no-missing}. 
This ensures that $\varepsilon \leq c_1$.
We apply the algorithm in \Cref{thm:no-missing} to $X\sps{1,\nu},\ldots,X\sps{N,\nu}$, which is $(\varepsilon, \eta,\beta')$-good with respect to $\mu'$ by \Cref{lm:goodness}, where
\begin{align}
(\beta')^2 & = C_3\beta^2 + C_4\rho^2, \label{eq:beta'}\\
\mu_j' & =g_j\mu_j + (1 - g_j)\nu_j,\quad\textup{for} ~ j = 1,\ldots,d.\label{eq:mu'}
\end{align}
\Cref{thm:no-missing} computes an estimate $\nu'$ for $\mu'$, and as we show below, $\nu'$ is also a good guess for $\mu$ itself.
Define $a:=\nu - \mu$ and $a':=\nu' - \mu$. By Cauchy-Schwarz, for all $j = 1,\ldots,d$, we have
\begin{equation*}
\big((a_j' - (1 - g_j)a_j)^2 + (1 - g_j)g_ja_j^2\big)(g_j + (1 - g_j))\geq g_j(a_j')^2.
\end{equation*}
Summing up over $j = 1,\ldots,d$, this implies that 
\begin{equation}
\label{eq:cauchy-decomposition}
\|\nu' - \mu\|_g^2 = \sum_{i=1}^dg_j(a_j')^2\leq \|\nu' - \mu'\|_2^2 + \sum_{j=1}^d(1 - g_j)g_j(\nu_j - \mu_j)^2,
\end{equation}
where we used the fact that $\nu'_j - \mu'_j = (\nu_j' - \mu_j) - (\mu_j' - \mu_j) =  a_j' - (1 - g_j)a_j$ because of \eqref{eq:mu'}.
The guarantee of \Cref{thm:no-missing} is
\begin{equation}
\label{eq:guarantee-max}
\|\nu' - \mu'\|_2^2\leq C_2(\beta')^2\varepsilon.
\end{equation}
Plugging \eqref{eq:guarantee-max} into \eqref{eq:cauchy-decomposition}, we have:
\begin{align*}
\|\nu' - \mu\|_g^2 & \leq C_2(C_3\beta^2 + C_4\rho^2)\varepsilon + (1 - g_*)\rho^2  \tag{by \eqref{eq:beta'} and $g_j\geq g_*$}\\
& = C_2C_3\beta^2\varepsilon + (1 - g_* + C_2C_4\varepsilon)\rho^2\\
& \leq C_6\beta^2\varepsilon + (1 - c_7g_*)\rho^2. \tag{$g_*\geq C_5\varepsilon$}
\end{align*}
\end{proof}
\section{Finite Sample Concentration}
\label{sec:concentration}
In this section, we prove \Cref{thm:Gaussian,thm:2nd-moment} by 
showing that the deterministic conditions in \Cref{def:missing} are indeed satisfied by $\varepsilon$-corrupted $\gamma$-complete datasets generated 
according to \Cref{sec:model}
as long as we fill the missing entries properly
to obtain the completed dataset:
when $\cP = \cP_1(\eta)$, we fill the missing entries by independent $\mathcal N(0,\eta^2)$ variables,
and when $\cP = \cP_2$, we fill the missing entries simply by zeros.
In either case, we fill the missing entries using a mean-zero distribution
from $\cP$,
so by \Cref{claim:composition}, 
when we adjust the completed dataset by the true mean $\mu$,
it would
look like a dataset generated from $\cP$
initially without missing entries.
Specifically, we have the following two lemmas proved in \Cref{sec:subgaussian-goodness,sec:bounded-variance-goodness}, respectively.
\begin{lemma}
\label{lm:subgaussian-goodness}
Assume an $\varepsilon$-corrupted dataset
$X\sps 1,\ldots,X\sps N\in\bR^d$
is generated with $\cP = \cP_1(\eta)$.
Assume $10\varepsilon\leq \gamma \leq 1,
\delta \in (0, 1/2)
$. 
The event that the dataset is $\gamma$-complete 
but the corresponding completed dataset
filled by independent $\mathcal N(0,\eta^2)$ variables
is not $(\varepsilon',g_*,\eta,\beta)$-good
happens with probability at most $\delta$,
where
$\varepsilon' = \varepsilon,g_* = 0.9\gamma$, and
\[
	\beta = \Theta\Big(\sqrt{(d + \log (1/\delta))/N\varepsilon + \varepsilon\log(1/\varepsilon)}\Big).
\]
\end{lemma}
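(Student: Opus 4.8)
The plan is to strip away the adversary and the incompleteness until what remains is ordinary sub-Gaussian concentration, and then to check that the slack it produces is at most the stated $\beta$. First I would fix the internal coins of the (randomized) adversary; after this conditioning $P\sps 1,\dots,P\sps N$, the distributions $\cD\sps 1,\dots,\cD\sps N$, and the mean $\mu$ are deterministic, while the draws $X\sps i\sim\cD\sps i$ and the fill-in values $Z\sps i_j\sim\mathcal N(0,\eta^2)$ remain mutually independent over $i$. For each $i\in\{1,\dots,N\}$ let $\widetilde Y\sps i\in\bR^d$ agree with $X\sps i_j-\mu_j$ on $j\in P\sps i$ and with $Z\sps i_j$ on $j\notin P\sps i$; for uncorrupted examples this is exactly $X\sps{i,\mu}-\mu$. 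Since $\mathcal N(\mu,\eta^2 I)$ lies in $\cP_1(\eta)$ (recall $\eta\le1$) with mean $\mu$, \Cref{claim:composition} applied to the two-part partition $(P\sps i,\{1,\dots,d\}\setminus P\sps i)$ shows $\widetilde Y\sps i+\mu$ is distributed as a member of $\cP_1(\eta)$ with mean $\mu$; hence $\widetilde Y\sps 1,\dots,\widetilde Y\sps N$ are independent, mean-zero, $1$-sub-Gaussian, with covariance exactly $\eta^2 I$.

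Next I would take $G$ to be the set of uncorrupted examples and set $\varepsilon'=\varepsilon$, $g_*=0.9\gamma$. Then $|G|\ge(1-\varepsilon)N$; and since for $i\in G$ we have $i\in\Gamma_j\iff j\in P\sps i$, on the $\gamma$-complete event $|G_j|\ge|\Gamma_j|-\varepsilon N\ge(\gamma-\varepsilon)N\ge0.9\gamma N\ge g_*|G|$, using $\gamma\ge10\varepsilon$; so the first two bullets of \Cref{def:missing} hold deterministically whenever the dataset is $\gamma$-complete. For the third bullet, recall $X\sps{i,\mu}-\mu=\widetilde Y\sps i$ for $i\in G$; any $w\in\Delta_{G,(1-3\varepsilon)N}$ also lies in $\Delta_{\{1,\dots,N\},(1-3\varepsilon)N}$, and its restriction to $G_j\subseteq\{i:j\in P\sps i\}$ is a nonnegative vector with $\|\cdot\|_\infty\le\tfrac1{(1-3\varepsilon)N}$ and $\|\cdot\|_1\le1$. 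Hence it suffices to show, with probability at least $1-\delta$ over $\widetilde Y\sps 1,\dots,\widetilde Y\sps N$: (i) $\sup_{w\in\Delta_{\{1,\dots,N\},(1-3\varepsilon)N}}\big\|\sum_i w\sps i\widetilde Y\sps i\big\|_2\le\beta\sqrt\varepsilon$; (ii) $\sup_{w\in\Delta_{\{1,\dots,N\},(1-3\varepsilon)N}}\big\|\sum_i w\sps i\widetilde Y\sps i(\widetilde Y\sps i)^\top-\eta^2 I\big\|_\op\le\beta^2$; and (iii) for every $j$ and every $v\ge0$ with $\|v\|_\infty\le\tfrac1{(1-3\varepsilon)N}$, $\|v\|_1\le1$, and $v\sps i=0$ whenever $j\notin P\sps i$, $\big\|\sum_i v\sps i\widetilde Y\sps i\big\|_2\le\beta\sqrt\varepsilon$. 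Because (i)--(iii) involve only the pre-corruption family, whose joint law is unaffected by which examples the adversary later corrupts, a bound valid for every fixing of the adversary's coins is valid unconditionally, and that is exactly what we need.

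To prove (i)--(iii) I would run the standard argument for independent $1$-sub-Gaussian vectors with covariance $\eta^2 I$: for a fixed unit $u$, write $\sum_i w\sps i\langle\widetilde Y\sps i,u\rangle=\tfrac1{(1-3\varepsilon)N}\sum_{i\in S}\langle\widetilde Y\sps i,u\rangle$ with $|S|=(1-3\varepsilon)N$ the top indices (over the whole sample in (i), over $\{i:j\in P\sps i\}$ in (iii)), split this into the sample mean plus the contribution of the deleted $\le3\varepsilon N$ smallest projections, bound the former by Hoeffding and the latter by the resilience of a $1$-sub-Gaussian distribution (which gives $O(\varepsilon\sqrt{\log(1/\varepsilon)})$), and union-bound over a net of $\exp(O(d))$ directions on the sphere (the source of the $d$ term) and over the $d$ coordinates in (iii) (contributing only $\log d\le d$, already absorbed); (ii) is the analogue with a Bernstein bound for $\langle\widetilde Y\sps i,u\rangle^2-\eta^2$ and the matching second-moment resilience $O(\varepsilon\log(1/\varepsilon))$. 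In (iii) the one subtlety is that $v$ is only constrained by $\|v\|_1\le1$ rather than $\|v\|_1=1$, but $\big\|\sum_i v\sps i\widetilde Y\sps i\big\|_2$ equals $\sup_u$ of $\tfrac1{(1-3\varepsilon)N}$ times an order-statistic sum of at most $(1-3\varepsilon)N$ of the $\langle\widetilde Y\sps i,u\rangle$ with $i\in\{i:j\in P\sps i\}$, so dropping mass only shrinks the feasible values and the same estimate applies. Using $\eta\le1$ to discard factors of $\eta$, (i),(iii) hold with slack $O(\sqrt{(d+\log(1/\delta))/N}+\varepsilon\sqrt{\log(1/\varepsilon)})$ and (ii) with slack $O((d+\log(1/\delta))/N+\sqrt{(d+\log(1/\delta))/N}+\varepsilon\log(1/\varepsilon))$. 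Finally, writing $t=(d+\log(1/\delta))/N$: we have $\beta\sqrt\varepsilon=\Theta(\sqrt{t+\varepsilon^2\log(1/\varepsilon)})$, which dominates the slack in (i),(iii) since $\sqrt a+\sqrt b\le\sqrt2\,\sqrt{a+b}$; and $\beta^2=\Theta(t/\varepsilon+\varepsilon\log(1/\varepsilon))$ dominates $t$ (as $\varepsilon\le1$), $\varepsilon\log(1/\varepsilon)$ (trivially), and $\sqrt t$ (by AM-GM, $t/\varepsilon+\varepsilon\log(1/\varepsilon)\ge2\sqrt{t\log(1/\varepsilon)}\ge2\sqrt t$, using that $\gamma\ge10\varepsilon$ forces $\varepsilon\le\tfrac1{10}$ so $\log(1/\varepsilon)\ge1$). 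Enlarging the constant in $\beta$ and splitting $\delta$ across (i), (ii), (iii) (and the $d$ coordinates) closes the argument.

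I expect the real obstacle to be the per-coordinate condition of \Cref{def:missing}, which has no analogue in the complete-data theory: one must argue both that it reduces cleanly to the fixed-family statement (iii) — via including the relevant weight vectors into $\Delta_{\{1,\dots,N\},(1-3\varepsilon)N}$ with the sum restricted to a subset of indices — and that relaxing $\|v\|_1=1$ to $\|v\|_1\le1$ there costs nothing; the net/Hoeffding/Bernstein estimates and the sub-Gaussian resilience bounds behind (i)--(iii), as well as the reduction that lets us forget how the adversary chooses the corrupted set, are otherwise standard.
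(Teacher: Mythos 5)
Your proposal is correct and follows essentially the same route as the paper: invoke \Cref{claim:composition} to view the $\mu$-adjusted completed examples as independent draws from $\cP_1(\eta)$, take $G$ to be the uncorrupted examples (so the cardinality and $g_*$ bullets follow from $\gamma$-completeness and $\gamma\ge 10\varepsilon$), reduce the per-coordinate condition to the same uniform concentration statement (your sub-probability weight vectors supported on $\Gamma_j$ are equivalent to the paper's device of replacing $X\sps{i,\mu}$ by $\mu$ for $i\notin\Gamma_j$ and union-bounding over $j$ at cost $\log d\le d$), and defer the core bounds to standard sub-Gaussian resilience/net arguments, which the paper likewise outsources to known lemmas. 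Your closing arithmetic verifying that the slack fits within the stated $\beta$ matches the paper's parameter choices.
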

\begin{lemma}
\label{lm:bounded-variance-goodness}
Assume an $\varepsilon$-corrupted dataset 
$X\sps 1,\ldots,X\sps N\in\bR^d$
is generated with $\cP = \cP_2$.
Assume $11\varepsilon\leq \gamma \leq 1,
\delta\in(0,1/2)$,
and $N \varepsilon \geq \Omega(\log(d/\delta))$.
The event that the dataset is $\gamma$-complete 
but the corresponding completed dataset 
filled by zeros
is not $(\varepsilon',g_*,\eta,\beta)$-good
happens with probability at most $\delta$,
where
$\varepsilon' = 1.1\varepsilon, g_* = 0.9\gamma,
\eta = 0,$ and 
\[
\beta = \Theta\Big(\sqrt{d\log (d/\delta)/N\varepsilon + 1}\Big).
\]
\end{lemma}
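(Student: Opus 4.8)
The plan is to write the set $G$ down explicitly and to deduce all three conditions of \Cref{def:missing} from a short list of concentration facts about the clean draws, proved by Bernstein-type inequalities after a preliminary truncation. For every $i$ let $\widehat X\sps i\sim\cD\sps i$ be the pre-corruption draw with present set $\widehat P\sps i$ (so for $i$ outside the corrupted set $B$ these are the observed $X\sps i,P\sps i$), and define $\widehat Z\sps i$ by $\widehat Z\sps i_j=\widehat X\sps i_j-\mu_j$ for $j\in\widehat P\sps i$ and $\widehat Z\sps i_j=0$ otherwise. Since the missing entries are filled by $0$, the vector $\widehat Z\sps i+\mu$ is exactly the coordinate-wise composition of $\cD\sps i$ with the point mass at $\mu$, which lies in $\cP_2$ (covariance $0\preceq I$); by \Cref{claim:composition}, $\widehat Z\sps i$ has mean $0$ and $\bE[\widehat Z\sps i(\widehat Z\sps i)^\top]\preceq I$, and for $i\notin B$ we have $X\sps{i,\mu}-\mu=\widehat Z\sps i$ exactly. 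I would condition on the adversary's first two steps (all $\widehat P\sps i$ and $\cD\sps i$), under which the $\widehat Z\sps i$ are independent. Put $\Gamma_j^0:=\{i:j\in\widehat P\sps i\}$, fix $R:=C_*d/\varepsilon$ for a large absolute constant $C_*$, let $\mathcal L:=\{i:\|\widehat Z\sps i\|_2\le\sqrt R\}$, set $\varepsilon':=1.1\varepsilon$, and take $G:=\mathcal L\setminus B$; then $G\cap B=\emptyset$, so $X\sps{i,\mu}-\mu=\widehat Z\sps i$ for every $i\in G$, and (since observed and ideal present sets agree off $B$) $G\cap\Gamma_j=\mathcal L\cap\Gamma_j^0\setminus B$.

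\emph{Concentration.} I would show that with probability $\ge1-\delta$ (a union bound over $O(d)$ events, with $N\varepsilon\ge\Omega(\log(d/\delta))$ absorbing all lower-order Bernstein terms) the following hold: (i)~$|[N]\setminus\mathcal L|\le0.1\varepsilon N$, from a Bernstein bound for $\sum_i\mathbbm{1}[\|\widehat Z\sps i\|_2>\sqrt R]$, whose independent summands have mean $\le\bE\|\widehat Z\sps i\|_2^2/R\le d/R=\varepsilon/C_*$; (ii)~$\bigl\|\tfrac1N\sum_{i\in\mathcal L}\widehat Z\sps i(\widehat Z\sps i)^\top\bigr\|_\op\le C\beta^2$, from matrix Bernstein applied to the PSD terms $\tfrac1N\widehat Z\sps i(\widehat Z\sps i)^\top\mathbbm{1}[i\in\mathcal L]$ (operator norm $\le R/N$, matrix variance $\preceq(R/N)I$, mean $\preceq I$, hence deviation $O(\sqrt{R\log(d/\delta)/N}+R\log(d/\delta)/N)=O(\beta^2)$); and (iii)~for every $j\in\{0,1,\dots,d\}$ with $\Gamma_0^0:=[N]$, $\bigl\|\tfrac1N\sum_{i\in\mathcal L\cap\Gamma_j^0}\widehat Z\sps i\bigr\|_2\le C\beta\sqrt\varepsilon$, from vector Bernstein applied to $\widehat Z\sps i\mathbbm{1}[i\in\mathcal L]$ over $i\in\Gamma_j^0$ — the truncation bias per index has norm $\le\sqrt{\|\bE[\widehat Z\sps i(\widehat Z\sps i)^\top]\|_\op}\cdot\sqrt{d/R}\le\sqrt{\varepsilon/C_*}$, and the remaining fluctuation $O(\sqrt{d\log(d/\delta)/N}+\sqrt R\log(d/\delta)/N)$ is $O(\beta\sqrt\varepsilon)$ because $\beta\sqrt\varepsilon\ge\max\{\sqrt\varepsilon,\sqrt{d\log(d/\delta)/N}\}$ and $\sqrt R\log(d/\delta)/N\le\sqrt{d\log(d/\delta)/N}$ under the sample bound.

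\emph{Verifying the goodness conditions.} On the event that the data is $\gamma$-complete together with (i)--(iii), I would check \Cref{def:missing} with $\varepsilon'=1.1\varepsilon$, $g_*=0.9\gamma$, $\eta=0$, and $\beta$ as stated. From (i), $|G|\ge N-|B|-|\mathcal L^c|\ge(1-1.1\varepsilon)N=(1-\varepsilon')N$; and for each $j$, $\gamma$-completeness gives $|\Gamma_j|\ge\gamma N$, so $|G\cap\Gamma_j|=|\mathcal L\cap\Gamma_j^0\setminus B|\ge(|\Gamma_j|-|B|)-|\mathcal L^c|\ge(\gamma-1.1\varepsilon)N$, whence $g_j=|G\cap\Gamma_j|/|G|\ge\gamma-1.1\varepsilon\ge0.9\gamma=g_*$ using $11\varepsilon\le\gamma$. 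For the norm conditions, fix $w\in\Delta_{G,(1-3\varepsilon')N}$ and $j\in\{0,1,\dots,d\}$, write $G_j:=G\cap\Gamma_j$ (so $G_0=G$) and $m_j:=\sum_{i\in G_j}w\sps i$. Since $\|w\|_\infty\le\tfrac1{(1-3\varepsilon')N}$ and $|G|\le N$, the restriction $w|_{G_j}$ is forced to have nearly maximal mass, and in fact $\bigl\|w|_{G_j}-\tfrac{m_j}{|G_j|}\mathbbm{1}_{G_j}\bigr\|_1\le O(\varepsilon)$ and $\bigl\|w|_{G_j}-\tfrac{m_j}{|G_j|}\mathbbm{1}_{G_j}\bigr\|_\infty\le O(1/N)$. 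Hence
\[
\Bigl\|\sum_{i\in G_j}w\sps i\widehat Z\sps i\Bigr\|_2\ \le\ \frac{m_j}{|G_j|}\Bigl\|\sum_{i\in G_j}\widehat Z\sps i\Bigr\|_2\ +\ \Bigl\|\sum_{i\in G_j}\Bigl(w\sps i-\frac{m_j}{|G_j|}\Bigr)\widehat Z\sps i\Bigr\|_2 .
\]
In the first term $\tfrac{m_j}{|G_j|}\le\tfrac1{(1-3\varepsilon')N}$, and writing $\mathcal L\cap\Gamma_j^0=G_j\sqcup(\mathcal L\cap\Gamma_j^0\cap B)$, bounding the first piece by (iii) and the second (a $\le\varepsilon N$-subset of $\mathcal L$) by Cauchy--Schwarz with (ii) — at most $\sqrt{\varepsilon N}\cdot\sqrt{N\cdot C\beta^2}$ — gives $\|\sum_{i\in G_j}\widehat Z\sps i\|_2=O(\beta\sqrt\varepsilon N)$, so the first term is $O(\beta\sqrt\varepsilon)$. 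For the second term, split the signed measure (of $L_1$-mass $O(\varepsilon)$, sup-norm $O(1/N)$, supported in $\mathcal L$) into its positive and negative parts $\eta_\pm$; Cauchy--Schwarz with (ii) gives $\|\sum_i\eta_\pm\sps i\widehat Z\sps i\|_2\le\sqrt{\|\eta_\pm\|_1}\cdot\sqrt{O(1/N)\cdot N\cdot C\beta^2}=O(\beta\sqrt\varepsilon)$. Thus $\|\sum_{i\in G_j}w\sps i\widehat Z\sps i\|_2=O(\beta\sqrt\varepsilon)\le\beta\sqrt{\varepsilon'}$ once the implicit constant in $\beta$ is fixed: $j=0$ gives the first condition of \Cref{def:missing} and each $j\ge1$ the third. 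Finally, because $\eta=0$ and $G\subseteq\mathcal L$,
\[
\Bigl\|\sum_{i\in G}w\sps i\widehat Z\sps i(\widehat Z\sps i)^\top\Bigr\|_\op\ \le\ \frac1{(1-3\varepsilon')N}\Bigl\|\sum_{i\in\mathcal L}\widehat Z\sps i(\widehat Z\sps i)^\top\Bigr\|_\op\ \le\ \frac{C}{1-3\varepsilon'}\beta^2\ \le\ \beta^2
\]
after adjusting the constant, which is the second condition; since (i)--(iii) hold regardless of the adversary's last two steps and of $B$, this proves the lemma.

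\emph{Main difficulty.} I expect the crux to be the third, coordinate-restricted bound: a direct Cauchy--Schwarz estimate yields only $O(\beta)$, and the missing factor $\sqrt\varepsilon$ is recovered precisely from the observation that membership in $\Delta_{G,(1-3\varepsilon')N}$ pins $w|_{G_j}$ to within $L_1$-distance $O(\varepsilon)$ of the scaled uniform distribution on $G_j$, which replaces a worst-case weighting by the \emph{average} of $\widehat Z\sps i$ over a large clean set, controlled by (iii). Two further points need care: the heavy tails of $\cP_2$ force the truncation at radius $\sqrt{\Theta(d/\varepsilon)}$ and, because it may discard $\le0.1\varepsilon N$ clean indices, cause the inflation to $\varepsilon'=1.1\varepsilon$ (whereas the sub-Gaussian case permits $\varepsilon'=\varepsilon$); and the interplay of the observed present sets $\Gamma_j$, the ideal sets $\Gamma_j^0$, and the adaptively chosen corruption set $B$ is why (i)--(iii) are formulated so as to hold uniformly before any particular $G$ is fixed.
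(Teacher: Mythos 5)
Your plan is correct and matches the paper's proof in all essentials: truncate the clean draws at radius $\Theta\big(\sqrt{d/\varepsilon}\big)$ (which is what forces the inflation to $\varepsilon'=1.1\varepsilon$), establish concentration of the unweighted first and second moments of the truncated points — including the $\Gamma_j$-restricted sums, via a union bound over $j$ that produces the $\log(d/\delta)$ factor — and then pass from averages to arbitrary $w\in\Delta_{G,(1-3\varepsilon')N}$ by exploiting the operator-norm bound on the empirical second moment. The only difference is mechanical: the paper performs this last extension via the complementary-weight bijection of \Cref{claim:pair} together with Jensen's inequality applied to the quadratic form, whereas you decompose $w|_{G_j}$ into the scaled uniform weighting plus a signed perturbation of $L_1$-mass $O(\varepsilon)$ and apply Cauchy--Schwarz; these are equivalent routes to the same estimate.
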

The goodness properties guaranteed by these lemmas allow us 
to apply \Cref{thm:deterministic} to improve our guess.
We choose our initial guess as the coordinate-wise median
with missing entries ignored,
whose estimation error is bounded by the following claim (proved in \Cref{sec:proof-median}):
\begin{claim}
\label{claim:coordinate-wise-median}
Assume an $\varepsilon$-corrupted dataset
$X\sps 1,\ldots,X\sps N\in\bR^d$
is generated 
with $\cP$ being $\cP_1(\eta)$ or $\cP_2$.
Assume $7\varepsilon\leq \gamma \leq 1,\delta\in (0, 1/2), \gamma N \geq \Omega(\log (d/\delta))$.
The event that the dataset is $\gamma$-complete
but the coordinate-wise median $\nu$ does not satisfy 
$\|\nu - \mu\|_2\leq O\big(\sqrt d\big)$
happens with probability at most $\delta$.
\end{claim}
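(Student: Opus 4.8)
The plan is the standard argument that the coordinate-wise median is robust, adapted to account for the random (and partly adversarial) present sets: I would bound $|\nu_j-\mu_j|$ by an absolute constant $C$ for every coordinate $j$ with failure probability at most $\delta/(2d)$, and then union-bound over $j$, so that on the good event $\|\nu-\mu\|_2^2=\sum_{j=1}^d(\nu_j-\mu_j)^2\le C^2d$. Fix a coordinate $j$; recall $\nu_j$ is a median of the present values $\{X\sps i_j:i\in\Gamma_j\}$ and write $n_j:=|\Gamma_j|$. Let $\Gamma_j^{(0)}$ be the set of examples whose $j$-th coordinate is present \emph{as chosen in step~1}, i.e.\ before any corruption. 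The structural point is that the at least $(1-\varepsilon)N$ examples that the adversary does not corrupt in step~3 retain their step-1 present locations and their original i.i.d.\ values; consequently the present values of coordinate $j$ consist of the ``clean'' draws $\{X\sps i_j:i\in\Gamma_j^{(0)},\ i\ \text{uncorrupted}\}$ together with at most $\varepsilon N$ arbitrary values, which gives $\bigl|n_j-|\Gamma_j^{(0)}|\bigr|\le\varepsilon N$. Hence, if the dataset is $\gamma$-complete then $|\Gamma_j^{(0)}|\ge\gamma N-\varepsilon N\ge(6/7)\gamma N=\Omega(\log(d/\delta))$ (using $7\varepsilon\le\gamma$ and $\gamma N\ge\Omega(\log(d/\delta))$), while if $|\Gamma_j^{(0)}|<\gamma N-\varepsilon N$ the dataset cannot be $\gamma$-complete at all, so those coordinates never contribute to the event we are bounding.

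Next I would turn $\{\nu_j>\mu_j+C\}$ into a tail-count statement about genuinely i.i.d.\ samples. If $\nu_j>\mu_j+C$ then at least $n_j/2$ present values are $\ge\nu_j>\mu_j+C$; deleting the at most $\varepsilon N$ corrupted values leaves at least $n_j/2-\varepsilon N\ge|\Gamma_j^{(0)}|/2-(3/2)\varepsilon N$ clean draws exceeding $\mu_j+C$, and for the relevant coordinates the bound $7\varepsilon\le\gamma$ forces $\varepsilon N\le|\Gamma_j^{(0)}|/6$, so this clean count is at least $|\Gamma_j^{(0)}|/4$. Now condition on the adversary's internal randomness together with its step-1 and step-2 choices: this fixes $\Gamma_j^{(0)}$, the distributions $\cD\sps i\in\cP$, and the common mean $\mu$, while the draws $\{X\sps i_j:i\in\Gamma_j^{(0)}\}$ remain mutually independent, each with mean $\mu_j$ and variance at most $1$ (which holds for $\cP_1(\eta)$ since $\eta\le1$, and for $\cP_2$ since $\Sigma\preceq I$). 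By Chebyshev, $\Pr[X\sps i_j>\mu_j+C]\le 1/C^2$, so $\sum_{i\in\Gamma_j^{(0)}}\mathbbm 1[X\sps i_j>\mu_j+C]$ is a sum of independent indicators with mean at most $|\Gamma_j^{(0)}|/C^2$; choosing $C$ a large enough absolute constant and applying a multiplicative Chernoff bound yields $\Pr\bigl[\text{this sum}\ge|\Gamma_j^{(0)}|/4\bigr]\le\exp\bigl(-\Omega(|\Gamma_j^{(0)}|)\bigr)$. A symmetric argument controls $\Pr[\nu_j<\mu_j-C]$, giving $\Pr[\,|\nu_j-\mu_j|>C\,]\le\exp\bigl(-\Omega(|\Gamma_j^{(0)}|)\bigr)$ for every coordinate with $|\Gamma_j^{(0)}|\ge(6/7)\gamma N$.

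To conclude, pick the hidden constant in $\gamma N\ge\Omega(\log(d/\delta))$ large enough that $\exp\bigl(-\Omega((6/7)\gamma N)\bigr)\le\delta/(2d)$. Fix any realization of the conditioning above; by a union bound over the at most $d$ coordinates with $|\Gamma_j^{(0)}|\ge(6/7)\gamma N$, with probability at least $1-\delta$ all of them satisfy $|\nu_j-\mu_j|\le C$. On the event that the dataset is $\gamma$-complete we have $|\Gamma_j|\ge\gamma N$ for every $j$, which together with $|\Gamma_j|\le|\Gamma_j^{(0)}|+\varepsilon N$ and $7\varepsilon\le\gamma$ forces $|\Gamma_j^{(0)}|\ge(6/7)\gamma N$ for \emph{all} $j$; hence on this event $|\nu_j-\mu_j|\le C$ for all $j$, so $\|\nu-\mu\|_2\le C\sqrt d=O(\sqrt d)$. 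Since the $\delta$ failure bound holds for every fixing of the conditioning, it holds unconditionally. The step I expect to be the main obstacle is precisely this entanglement of adversarial corruption with randomness: because which examples are corrupted—and the values injected—may depend on the realized draws, one cannot apply concentration directly to the ``present and clean'' entries, whose identity is data-dependent. The remedy is to strip off the corruption \emph{deterministically} (using only ``at most $\varepsilon N$ present values are affected'' and ``uncorrupted examples keep their step-1 locations'') and run the tail bound on the unconditionally independent draws indexed by $\Gamma_j^{(0)}$, with the hypothesis $7\varepsilon\le\gamma$ supplying exactly the slack needed for the clean entries to remain a strict majority.
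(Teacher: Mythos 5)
Your proposal is correct and follows essentially the same route as the paper's proof: condition on the pre-corruption present set being large (a necessary consequence of $\gamma$-completeness), apply Chebyshev plus a Chernoff bound to show that with probability $1-\delta/d$ per coordinate only a small fraction of the clean present draws deviate from $\mu_j$ by more than a constant $C$, observe that the at most $\varepsilon N$ corruptions cannot overturn this majority (using $7\varepsilon\le\gamma$), and union-bound over the $d$ coordinates. The only difference is bookkeeping — you count clean outliers directly against $|\Gamma_j^{(0)}|/4$ whereas the paper tracks the margin $N_{j,\leq}-N_{j,>}>2\varepsilon N$ — which is immaterial.
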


We are now ready to summarize our algorithm and prove our main theorems:
\begin{algorithm}[h]
\SetKwInOut{Input}{input}\SetKwInOut{Output}{output}
\Input{An $\varepsilon$-corrupted $\gamma$-complete dataset $X\sps 1,\ldots,X\sps N$ generated with $\cP$ being $\cP_1$ or $\cP_2$.}
\Output{An estimate $\nu$ of the mean $\mu$.}
Initialize $\nu$ as the coordinate-wise median\;
Fill the missing entries as described to obtain a completed dataset
$X\sps {1,0},\ldots,X\sps {N,0}$\;
Define $C_6, c_7$ as in \Cref{thm:deterministic}\;
Define $\varepsilon',g_*,\eta,\beta$ as in 
\Cref{lm:subgaussian-goodness} or \Cref{lm:bounded-variance-goodness} depending on 
whether $\cP = \cP_1(\eta)$ or $\cP = \cP_2$\;
Initialize $\rho$ as the upper bound for $\|\nu - \mu\|_2$ in \Cref{claim:coordinate-wise-median}\;
\For{$t = 1,\ldots,O(\log N)$}
{
Run the algorithm in \Cref{thm:deterministic} on the completed dataset with current guess $\nu$,
upper bound $\rho$,
and parameters $\varepsilon',g_*,\eta,\beta$
to compute the next guess $\nu'$\;
$\nu\gets\nu'$\;
$\rho\gets (C_6\beta^2\varepsilon' + (1 - c_7g_*)\rho^2)^{1/2}$\;
}
 \caption{Robust mean estimation on incomplete data with arbitrary outliers}
 \label{alg}
\end{algorithm}

\begin{proof}[Proof of \Cref{thm:Gaussian,thm:2nd-moment}]
We will show a pre-processing procedure in the next section that 
allows us to assume w.l.o.g.\ that $\gamma$ is lower-bounded by
a small positive constant, i.e., $\gamma \geq \Omega(1)$. 

Our assumption $\gamma N \geq \Omega(\log(d/\delta))$ implies that $\log(d/\delta)/\gamma N = O(1)$. We can then assume w.l.o.g.\ that $\varepsilon \geq \Omega(\log(d/\delta)/\gamma N)$ in \Cref{thm:2nd-moment} because for smaller $\varepsilon$, $d\log(d/\delta)/\gamma N$ dominates $\varepsilon/\gamma$ in the error guarantee of \Cref{thm:2nd-moment}. The assumption $N\varepsilon \geq \Omega(\log(d/\delta))$ in \Cref{lm:bounded-variance-goodness} is now satisfied.

We prove that \Cref{alg} satisfies the conditions of \Cref{thm:Gaussian,thm:2nd-moment}.
The goodness property of the completed dataset $X\sps{1, 0},\ldots, X\sps{N, 0}$ is guaranteed by \Cref{lm:subgaussian-goodness} and \Cref{lm:bounded-variance-goodness}.
By \Cref{thm:deterministic}, in each iteration of \Cref{alg}, when the current guess $\nu$ satisfies $\|\nu - \mu\|_g\leq \rho$,
the next guess $\nu'$
satisfies $\|\nu' - \mu\|_g^2\leq (\rho')^2:=
C_6\beta^2\varepsilon' + (1 - c_7g_*)\rho^2$ , i.e.,
\begin{equation}
\label{eq:potential}
(\rho')^2 - \frac{C_6\beta^2\varepsilon'}{c_7g_*} = (1 - c_7g_*)\Big(\rho^2 - \frac{C_6\beta^2\varepsilon'}{c_7g_*}\Big).
\end{equation}
This shows a multiplicative decrease in a potential function.
Since the initial $\rho^2$ is $O(d)$ given by
\Cref{claim:coordinate-wise-median} and $g_* = 0.9\gamma \geq \Omega(1)$,
in $O(\log(1 + d/\beta^2\varepsilon')) = O(\log N)$ iterations
we obtain a guess $\nu^*$
with $\|\nu^* - \mu\|_2^2 \leq \|\nu^* - \mu\|_g^2/g_* \leq O(\beta^2\varepsilon')$, 
giving the desired error bounds for \Cref{thm:Gaussian,thm:2nd-moment}
after we plug in the values of $\varepsilon',\beta$
from \Cref{lm:subgaussian-goodness} and \Cref{lm:bounded-variance-goodness}, respectively.
The running time and the success probability of each iteration of \Cref{alg}
are given by \Cref{thm:deterministic}.
\end{proof}
\section{Simple but Failed Attempts and Pre-processing via Hashing}

\label{sec:fail} 
We discuss why two simpler methods (stacking and hashing)
cannot directly reduce the robust mean estimation problem
from the incomplete-data setting
to the complete-data setting.
We also show that we can use the hashing method
as the pre-processing step
in the proof of our main theorems
in order to assume
w.l.o.g.\ that $\gamma$ is at least a constant.

\subsection{Stacking}
In a $\gamma$-complete dataset, each coordinate is present in at least $\gamma N$ examples. Thus, one may simply ignore the missing entries and stack the examples together to form a $\gamma N \times d$ matrix. For example, the dataset given in \Cref{fig:data} becomes
\begin{equation*}
\begin{bmatrix}
0.9 & \mathbf{2.6} & 2.9 & 3.9 \\
\mathbf{0.5} & 2.0 & 2.8 & 4.1 \\
1.2 & 2.1 & 2.9 & \mathbf{5.0}
\end{bmatrix}.
\end{equation*}
Note that the corrupted entries (shown in bold) propagate to all the three resulting examples. In general, stacking breaks the fact that all the corruption happens in a small fraction of the examples, 
and this fact is essential for the $L_2$ error not to grow with the dimension.
\subsection{Hashing}
\label{sec:hashing}
Stacking fails partly because different examples in the new dataset may depend on the same, possibly corrupted example in the original dataset. 
We thus hope that different new examples depend on \emph{disjoint} subsets of the original examples.
Imagine that step 1 of the corruption procedure in \Cref{sec:model} is performed \emph{not} by the adversary; instead, we have the freedom to choose which entries are present as long as no example has too many present coordinates,
and the adversary is not allowed to modify $P\sps i$ in step 3.
In this case,
we can partition the $N$ examples into $N'\approx \gamma N$ groups of similar sizes, and
then for each group and each coordinate $j$, we choose an example in that group to have coordinate $j$ present.
In this way, the examples in each group can be combined into a new example \emph{without missing coordinates}
(see \Cref{fig:partition}).
The $j$-th coordinate of the new example equals to $X\sps i_j$ where $X\sps i$ is an original example in the group with its $j$-th coordinate present.
Hence, the new dataset contains no missing coordinates and
has at most $\varepsilon N$ corrupted examples.
Moreover, \Cref{claim:composition} shows that the new dataset can be regarded as
an $\frac{\varepsilon N}{N'}\approx \frac{\varepsilon}{\gamma}$-corrupted
$1$-complete dataset.
This allows us to apply existing mean estimation algorithms to the new dataset.
\begin{figure*}[h]
\centering
\begin{subfigure}[t]{.32\textwidth}
\begin{equation*}
\begin{blockarray}{c@{\hspace{7pt}}cccc}
\mu & 1.0 & 2.0 & 3.0 & 4.0 \\
\begin{block}{c@{\hspace{7pt}}[cccc]}
X\sps 1 & \org{1.2} & \org{1.8} & 2.9 & 4.0 
\\
X\sps 2 & 0.9 & 2.2 & \org{2.8} & \org{3.9} 
\\ \cmidrule(r){2-5} 
X\sps 3 & \org{0.8} & 1.9 & \org{3.1} & 4.1 
\\
X\sps 4 & 1.1 & \org{2.1} & 2.9 & \org{4.1} 
\\ \cmidrule(r){2-5} 
X\sps 5 & \org{1.0} & \org{2.0} & 3.0 & 3.8 
\\
X\sps 6 & 1.2 & 2.0 & \org{2.9} & 4.2 
\\
X\sps 7 & 1.2 & 2.1 & 3.2 & \org{3.9} 
\\
\end{block}
\end{blockarray}
\end{equation*}
\end{subfigure}
\begin{subfigure}[t]{.32\textwidth}
\begin{equation*}
\begin{blockarray}{c@{\hspace{7pt}}cccc}
\mu & 1.0 & 2.0 & 3.0 & 4.0 \\
\begin{block}{c@{\hspace{7pt}}[cccc]}
X\sps 1 & 1.2 & 1.8 & * & *
\\
X\sps 2 & * & * & 2.8 & 3.9
\\ \cmidrule(r){2-5} 
X\sps 3 & 0.8 & * & 3.1 & *
\\
X\sps 4 & * & \mathbf{2.6} & * & \mathbf{5.0}
\\ \cmidrule(r){2-5} 
X\sps 5 & 1.0 & 2.0 & * & *
\\
X\sps 6 & * & * & 2.9 & * 
\\
X\sps 7 & * & * & * & 3.9 
\\
\end{block}
\end{blockarray}
\end{equation*}
\end{subfigure}
\begin{subfigure}[t]{.32\textwidth}
\begin{equation*}
\begin{blockarray}{c@{\hspace{7pt}}cccc}
\mu & 1.0 & 2.0 & 3.0 & 4.0 \\
\begin{block}{c@{\hspace{7pt}}[cccc]}
\widetilde X\sps 1 & 1.2 & 1.8 & 2.8 & 3.9 
\\
\widetilde X\sps 2 & 0.8 & \mathbf{2.6} & 3.1 & \mathbf{5.0} 
\\
\widetilde X\sps 3 & 1.0 & 2.0 & 2.9 & 3.9 
\\
\end{block}
\end{blockarray}
\end{equation*}
\end{subfigure}
\caption{If we choose the present entries as 
the orange ones in the left matrix,
the three groups of the observed examples (the middle matrix) 
correspond to 
three new examples with no missing coordinates
(the right matrix). Here, $X\sps 4$ is corrupted, and so is 
$\widetilde X\sps 2$.
}
\label{fig:partition}
\end{figure*}

To extend this reduction to cases where the present entries are chosen by the adversary,
we need to figure out how to partition the original examples into groups.
The $N'$ groups of the original examples 
essentially correspond to a hash function
$h:\{1,\ldots,N\}\rightarrow \{1,\ldots,N'\}$
mapping original examples to new examples.
Each new example $\widetilde X\sps {i'}$ is defined by 
$\widetilde X\sps {i'}_j = X\sps i_j$ for some
$i\in h^{-1}(i')$ satisfying $X\sps i_j\neq *$.
For definiteness, let us always choose the smallest one
if there are multiple such $i$,
and define $\widetilde X\sps{i'}_j = *$
if no such $i$ exists. 
For example, if $h$ maps $1,\ldots,7$ into $2,1,3,1,1,3,2$, the dataset given in \Cref{fig:data} is transformed to
\begin{equation*}
\begin{bmatrix}
0.9 & \mathbf{2.6} & 2.8 & 3.9 \\
1.2 & 2.1 & 2.9 & *   \\
* & 2.0 & 2.9 & 4.1
\end{bmatrix}.
\end{equation*}
Unfortunately, even if every coordinate $j$ is present in $\gamma N$ examples
chosen uniformly at random and independently for every $j$,
one has to choose $N'$ to be as small as $o(\gamma N)$ when $d,N\rightarrow +\infty$ 
in order to effectively eliminate the missing entries by hashing
(see \Cref{claim:hashing-lower} in \Cref{sec:proof-hashing} for details).
In that case, the total number of present entries will reduce significantly
in the new dataset.
Moreover, if $\gamma$ is only a constant factor larger than $\varepsilon$,
the total number of new examples may become smaller than 
the number of corrupted original examples ($N' \leq o(\gamma N) < \varepsilon N$),
and the majority of the new examples will likely have corrupted coordinates.

\subsection{Pre-processing via Random Hashing}
\label{sec:preprocessing}
We show that, for a sufficiently small constant $c > 0$,
we can assume without loss of generality that $\gamma\geq c$ in our main theorems (\Cref{thm:Gaussian,thm:2nd-moment}) 
by performing a pre-processing step.
By ``sufficiently small'', we mean there is a large integer $B>2$ such that
$c \leq (B - 2)/B^2$.
If $\gamma < c$ instead, we hash the $N$ examples into $N':= B\lceil \gamma N\rceil$ new examples as in \Cref{sec:hashing}. 
We use a random hash function $h$
where $h(i)$ is independently uniformly chosen from $\{1,\ldots,N'\}$ 
for every $i = 1,\ldots,N$.
There are at most $\varepsilon N \leq \frac{\varepsilon}{B\gamma}N'$
corrupted new examples, so
\Cref{claim:composition} guarantees that the new dataset is an $\varepsilon'$-corrupted dataset with the same $\mu$ for $\varepsilon' :=  \frac{\varepsilon}{B\gamma}$. 
To lower-bound the number of present coordinates in the new dataset, 
we note that $\widetilde X\sps {i'}_j$ is present
as long as any of the original examples 
$i\in\{1,\ldots,N\}$ with $X\sps{i}_j\neq *$ is hashed to $i'$.
When the original dataset is $\gamma$-complete,
the number of new examples with coordinate $j$ present
is at least the number of non-empty bins if we throw $\lceil \gamma N\rceil$
balls into $B\lceil \gamma N\rceil$ bins uniformly at random.
If we throw the balls one by one,
the conditional probability of each ball being thrown into an empty bin is
at least $(B - 1)/B$, so the number of non-empty bins is
at least $((B -2)/B)\lceil \gamma N\rceil$ with probability 
$1 - \delta/2d$ by the Chernoff bound (recall our assumption $\gamma N \ge \Omega(\log(d/\delta))$).
By a union bound over all the $d$ coordinates,
the new dataset is $\gamma' := ((B - 2)/B)\lceil \gamma N\rceil/N' = (B- 2)/B^2$-complete with probability at least $1 - \delta/2$
conditioned on the original dataset being $\gamma$-complete.

We can now run the algorithms in our main theorems on the newly created $\varepsilon'$-corrupted $\gamma'$-complete dataset of size $N'$ where, indeed, $\gamma' = (B - 2)/B^2 \geq c$. Note that the values $\gamma/\varepsilon$ and $\gamma N$ barely change: $\gamma'/\varepsilon' = ((B - 2)/B) \gamma/\varepsilon$ and $\gamma'N' \geq ((B - 2)/B) \gamma N$. Therefore, all the assumptions of our main theorems still hold on the new dataset and our guarantees can translate back to the original dataset, up to tiny loss in the constants. 
The pre-processing step clearly runs in $O(Nd)$ time, and the size of the dataset does not increase.
\section{Conclusions and Future Research}
Motivated by applications such as crowdsourcing,
we asked the natural theoretical question  
of robust mean estimation on incomplete data
and solved it (nearly) optimally
in terms of the error guarantee, the sample complexity,
and the
running time.
As we discussed
in \Cref{sec:results},
three natural questions remain unanswered
and would be interesting topics for future research:
1) achieving sub-Gaussian rates in \Cref{thm:2nd-moment},
2) extending \Cref{thm:Gaussian} to general known
covariances,
and
3) robustly estimating the covariance 
given incomplete data with adversarial corruption.
Another interesting direction is to generalize our algorithms to settings where a majority of the present entries may be corrupted, i.e., $\gamma < 2\varepsilon$. This has been considered in the complete-data setting by \citet{charikar2017learning} and \citet{cherapanamjeri2020list} using list-decoding, where the algorithm outputs a list of estimates with one of them likely being accurate. The list-decoding idea also applies when the underlying distribution is a mixture, where we want to estimate the mean of each component. Extending this list-decoding idea to incomplete data would lead to practical algorithms that can understand the views of diverse subgroups in the population from crowdsourced data.
We hope that our iterative guessing-and-improving algorithm
provides insights for answering these questions
and for solving other problems
in the incomplete-data setting.
\section*{Acknowledgments}
We thank Michael P.\ Kim for valuable discussions at early stages of this work.
We thank Jacob Steinhardt and Banghua Zhu for discussions on related research.
We thank Hilal Asi, Mayee Chen, Dan Fu, Vatsal Sharan, and anonymous reviewers 
for helpful comments on earlier versions of this paper.
\bibliographystyle{plainnat}
\bibliography{ref}
\newpage
\appendix
\section{Proof of Claim \ref{claim:composition}}
\label{sec:proof-composition}
\begin{proof}
It is quite clear that $\cD$ has mean $\mu$ by looking at the mean of each coordinate. We now show that $\cD\in\cP$. 

When $\cP = \cP_1(\eta)$, every $\cD\sps i$ has covariance $\eta^2 I$, that is, the variance of every coordinate is $\eta^2$ and the covariances between coordinates are zero. This implies that $\cD$ also has covariance $\eta I$. We now show that $\cD$ is $1$-sub-Gaussian. For any vector $v\in\bR^d$, it can be decomposed as 
\begin{equation}
\label{eq:decompose-v}
v = v\sps 1+\cdots+v\sps m, 
\end{equation}
where $v\sps i_j$ is zero whenever $j\notin J\sps i$. Now we have
\begin{align*}
\bE_{X\sim\cD}[\exp((X - \mu)^\top v)] & = \prod_{i=1}^m\bE_{X\sps i\sim\cD\sps i}[\exp((X\sps i - \mu)^\top v\sps i)]\tag{$X\sps i$'s are independent}\\
& \leq \prod_{i=1}^m\exp(\|v\sps i\|_2^2/2)\tag{$\cD\sps i$'s are $1$-sub-Gaussian}\\
& = \exp(\|v\|_2^2/2). \tag{$v\sps i$'s are orthogonal}
\end{align*}
This proves that $\cD$ is $1$-sub-Gaussian and thus $\cD\in\cP_1(\eta) = \cP$. When $\cP = \cP_2$, we note that a distribution has covariance $\Sigma \preceq I$ if and only if $\bE_{X\sim\cD}[((X - \mu)^\top v)^2] \leq \|v\|_2^2$ for all vector $v\in \bR^d$. Using \eqref{eq:decompose-v} again, the following calculations prove $\cD\in\cP_2 = \cP$:
\begin{align*}
\bE_{X\sim\cD}[((X - \mu)^\top v)^2] & = \sum_{i=1}^m\bE_{X\sps i\sim\cD\sps i}[((X\sps i - \mu)^\top v\sps i)^2] \tag{$X\sps i$'s are independent}\\
& \leq \sum_{i=1}^m \|v\sps i\|_2^2 \tag{$\cD\sps i$'s have covariances $\Sigma\preceq I$}\\
& = \|v\|_2^2. \tag{$v\sps i$'s are orthogonal}
\end{align*}
\end{proof}
\section{Proof of Theorem \ref{thm:no-missing}}
\label{sec:QUE}
\begin{proof}
We prove the theorem by reducing it to 
Theorem 3.5 and Theorem 4.7 of \cite{dong2019quantum-arxiv}. 

We first show that we can assume without loss of generality
that $\varepsilon \geq 1/N,\eta \in\{0,1\}$ and $\beta \leq 100$.
When $\varepsilon < 1/N$, we have $S = \{1,\ldots,N\}$ in \Cref{def:no-missing}
and the theorem becomes trivial.
When $\eta = 0$, we can scale the whole dataset and $\beta$
by the same factor without changing the problem,
so we can assume $\beta \leq 100$ in this case.
When $\eta > 0$, we can scale the whole dataset, $\eta$, and $\beta$
by the same factor without changing the problem,
so we can assume $\eta = 1$.
If $\beta \geq 100$ after setting $\eta = 1$, 
we note that any $(\varepsilon,1,\beta)$-good dataset
is also an $\Big(\varepsilon,0,\sqrt{\beta^2 + 1}\Big)$-good dataset,
so the problem reduces to the $\eta = 0$ case 
with only minor loss in the constants.

Next, we show that the dataset can be further processed to satisfy the following:
\begin{align}
& \max_{i\in G}\|X\sps i\|  \leq O\Big(\sqrt N\Big), \ \textnormal{and} \label{eq:QUE-assumption1}\\
& \Big\|\frac 1N\sum_{i=1}^N 
(X\sps i - \hat\mu)(X\sps i - \hat\mu)^\top
 - \eta^2 I\Big\|_\op  \leq O(N\varepsilon),\label{eq:QUE-assumption2}
\end{align}
where $\hat \mu$ is the empirical average of $X\sps 1,\ldots,X\sps N$.
The third property of $(\varepsilon,\eta,\beta)$-goodness implies that
\begin{equation*}
\frac 1{|G|}\sum_{i\in G}(X\sps i - \mu)(X\sps i - \mu)^\top 
\preceq (\beta^2 + \eta^2)I \preceq O(1) I.
\end{equation*}
Therefore,
\begin{equation*}
\forall i = 1,\ldots,N, \quad (X\sps i - \mu)(X\sps i - \mu)^\top \preceq O(N) I,
\end{equation*}
or in other words, $\max_{i\in G}\|X\sps i - \mu\| \leq O(\sqrt N)$.
We can then use the naive-pruning lemma in \citet[Lemma 2.3]{dong2019quantum-arxiv} to re-center the data points\footnote{
Examples outside $G$ can be changed arbitrarily
without changing the goodness property.
} 
so that \eqref{eq:QUE-assumption1} is satisfied.

Now we prove \eqref{eq:QUE-assumption2}.
We first observe that
\begin{align*}
\|\hat\mu - \mu\|_2 
& \leq \frac{|G|}{N}\Big\|\frac {1}{|G|}\sum_{i\in G}(X\sps i - \mu)\Big\|_2
+ \frac{N - |G|}{N}\Big\|\frac{1}{N - |G|}\sum_{i\notin G}(X\sps i - \mu)\Big\|_2\\
& \leq \frac{|G|}N\cdot \beta\sqrt\varepsilon +  \frac{N - |G|}{N}\cdot O\Big(\sqrt N\Big)\\
& \leq O\Big(\varepsilon \sqrt N\Big).\tag{because $\beta \leq O(1)$ and $\varepsilon \geq 1/N$}
\end{align*} 
Now we have
\begin{align*}
& \Big\|\frac 1N\sum_{i=1}^N
(X\sps i - \hat \mu)(X\sps i - \hat \mu)^\top - \eta^2 I\Big\|_\op\\
= {} & \Big\|\frac 1N\sum_{i=1}^N
(X\sps i - \mu)(X\sps i - \mu)^\top - 
(\mu - \hat \mu)(\mu - \hat\mu)^\top - \eta^2 I\Big\|_\op
\tag{$\hat \mu$ is the empirical average of $X\sps i$'s}\\
\leq {} & \|\hat \mu - \mu\|_2^2 + 
\Big\|\frac 1N\sum_{i=1}^N
(X\sps i - \mu)(X\sps i - \mu)^\top - \eta^2 I\Big\|_\op\\
\leq {} & O(N\varepsilon^2) + 
\frac {|G|}{N}\Big\|\frac 1{|G|}\sum_{i\in G}
(X\sps i - \mu)(X\sps i - \mu)^\top - \eta^2 I\Big\|_\op\\
& +
\frac{N - |G|}{N}\Big\|\frac 1{N - |G|}\sum_{i\notin G}
(X\sps i - \mu)(X\sps i - \mu)^\top - \eta^2 I\Big\|_\op\\
\leq {} & O(N\varepsilon^2) + 
\frac{|G|}{N}\cdot \beta^2\\
& +
\frac{N - |G|}{N}\big(\max_{1\leq i\leq N}\|X\sps i - \mu\|_2^2 + \eta^2\big)\\
\leq {} & O(N\varepsilon^2) + \beta^2 + \varepsilon \cdot (O(N) + \eta^2)\\
\leq {} & O(N\varepsilon)\tag{because $\beta,\eta \leq O(1)$ and $\varepsilon \geq 1/N$}.
\end{align*}
Given the inequalities \eqref{eq:QUE-assumption1} and \eqref{eq:QUE-assumption2}, 
Theorem 3.5 and Theorem 4.7 of \citet{dong2019quantum-arxiv} 
complete the proof for $\eta = 0$ and $\eta = 1$, respectively,
with parameters $\gamma_1 = O(\beta\sqrt\varepsilon),
\gamma_2 = O(\beta^2),
\beta_1 = O(\beta/\sqrt\varepsilon),
\beta_2 = O(\beta^2/\varepsilon),
\xi = O(\beta^2),
\kappa = \kappa_1 = O\big(\sqrt N\big),
\kappa_2 = O(N)$.
Note that $\beta_1$ and $\beta_2$ in \citet{dong2019quantum-arxiv} are w.r.t.\ weights
$w$ in $\Delta_{G,2\varepsilon|G|}$, so
we need to translate our goodness conditions on
$\Delta_{G,(1 - 3\varepsilon) N}\supseteq \Delta_{G,(1 - 2\varepsilon)|G|}$
to $\Delta_{G,2\varepsilon|G|}$ using \Cref{claim:pair} below and the triangle inequality.
We leave the details to the readers.
\end{proof}
\begin{claim}
\label{claim:pair}
For all $G\subseteq \{1,\ldots,N\}, 0 < p < |G|$, and $w\in\Delta_{G,p}$ there exists $\tw\in\Delta_{G,|G| - p}$ such that
\begin{equation}
\label{eq:pair}
\frac{p}{|G|}\cdot w + \frac{|G| - p}{|G|}\cdot \tw = w_G,
\end{equation}
where $w_G\sps i = \frac 1{|G|}$ if $i\in G$, and $w_G\sps i = 0$ if $i\notin G$. 
Moreover, the mapping from $w$ to $\tw$ is a bijection from $\Delta_{G,p}$ to $\Delta_{G,|G| - p}$.
\end{claim}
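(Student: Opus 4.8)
The plan is to construct $\tw$ explicitly by solving \eqref{eq:pair} coordinatewise, then verify that the resulting vector lies in $\Delta_{G,|G|-p}$, and finally check bijectivity by exhibiting the inverse map. For each $i \in G$, equation \eqref{eq:pair} forces
\[
\tw\sps i = \frac{|G|}{|G|-p}\Big(\frac{1}{|G|} - \frac{p}{|G|}w\sps i\Big) = \frac{1 - pw\sps i}{|G|-p},
\]
and for $i \notin G$ it forces $\tw\sps i = 0$. So the definition of $\tw$ is dictated; the work is in checking it has the right properties.

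First I would verify membership of $\tw$ in $\Delta_{G,|G|-p}$. Nonnegativity: since $w \in \Delta_{G,p}$ we have $w\sps i \le 1/p$, hence $1 - pw\sps i \ge 0$, so $\tw\sps i \ge 0$. The sum: $\sum_{i\in G}\tw\sps i = \frac{1}{|G|-p}\big(|G| - p\sum_{i\in G}w\sps i\big) = \frac{|G|-p}{|G|-p} = 1$ using $\sum_i w\sps i = 1$. The upper bound: we need $\tw\sps i \le \frac{1}{|G|-p}$, i.e.\ $1 - pw\sps i \le 1$, which holds because $w\sps i \ge 0$ and $p > 0$. The support condition $\tw\sps i = 0$ for $i \notin G$ holds by construction. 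So $\tw \in \Delta_{G,|G|-p}$, and \eqref{eq:pair} holds by the way $\tw$ was derived.

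Finally, for bijectivity I would observe that the map $w \mapsto \tw$ is affine and invertible: solving $\tw\sps i = \frac{1-pw\sps i}{|G|-p}$ for $w\sps i$ gives $w\sps i = \frac{1 - (|G|-p)\tw\sps i}{p}$, which is exactly the analogous formula with the roles of $p$ and $|G|-p$ swapped. By the argument of the previous paragraph applied with $p$ replaced by $|G|-p$ (noting $0 < |G|-p < |G|$ as well), this inverse map sends $\Delta_{G,|G|-p}$ back into $\Delta_{G,p}$, and composing the two maps in either order yields the identity. Hence the map is a bijection. I do not anticipate a genuine obstacle here; the only mild care needed is to confirm all three defining inequalities of $\Delta_{G,\cdot}$ survive the transformation, and that the hypothesis $0 < p < |G|$ keeps every denominator positive.
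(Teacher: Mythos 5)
Your proposal is correct and follows essentially the same route as the paper: solve \eqref{eq:pair} explicitly for $\tw$, verify the three defining conditions of $\Delta_{G,|G|-p}$ using $w\sps i\in[0,1/p]$, and establish bijectivity by exhibiting the inverse map (the same formula with $p$ and $|G|-p$ swapped). No gaps.
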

\begin{proof}
We simply solve $\tw$ from \eqref{eq:pair}:
\begin{equation}
\label{eq:wtotw}
\tw = \frac{|G|}{|G| - p}\cdot w_G - \frac{p}{|G| - p}\cdot w.
\end{equation}
It is clear that $\sum_{i=1}^N\tw\sps i = 1$ and $\forall i\notin G, \tw\sps i = 0$. To see that $\tw\in\Delta_{G,|G|-p}$, note that $w\sps i\in [0,\frac 1p]$, so $\tw\sps i = \frac{|G|}{|G| - p}\cdot\frac 1{|G|} - \frac p{|G| - p}\cdot w\sps i\in[0,\frac 1{|G| - p}]$ for all $i\in G$. 
\eqref{eq:wtotw} is clearly injective, and it is also surjective because  
for every $\tw\in\Delta_{G,|G| - p}$, 
we can show that
the inverse mapping below maps it to some $w\in \Delta_{G,p}$, using the same argument above with $p$ replaced by $|G| - p$:
\begin{equation*}
w = \frac{|G|}{p}\cdot w_G - \frac{|G| - p}{p}\cdot \tw.
\end{equation*}
\end{proof}
\section{Proof of Lemma \ref{lm:goodness}}
\label{sec:proof-goodness}
\begin{proof}%
To prove that $X\sps{1,\nu},\ldots,X\sps {N,\nu}$ is 
$(\varepsilon,\eta,\beta')$-good with respect to $\mu'$, 
it suffices to prove the following for all $w\in\Delta_{G, (1 - 3\varepsilon)N}$:
\begin{align}
\Big\|\sum_{i\in G}w\sps i(X\sps{i,\nu} - \mu')\Big\|_2 & \leq \beta'\sqrt\varepsilon;\label{eq:goodness1}\\
\Big\|\big(\sum_{i\in G}w\sps i(X\sps{i,\nu} - \mu')(X\sps{i,\nu} - \mu')^\top\big)  - \eta^2 I\Big\|_\op & \leq (\beta')^2.\label{eq:goodness2}
\end{align}

We first prove \eqref{eq:goodness1}. Since $X\sps{1,0},\ldots,X\sps{N,0}$ is
$(\varepsilon,g_*,\eta,\beta)$-good
with respect to $\mu$
and $G$ satisfies the conditions in \Cref{def:missing}, we have
\begin{equation}
\label{eq:good-mu}
\Big\|\sum_{i\in G}w\sps i(X\sps{i,\mu} - \mu)\Big\|_2 \leq \beta\sqrt\varepsilon.
\end{equation}
We thus bound the $L_2$ norm of the difference
\begin{align*}
\delta := {} & \sum_{i\in G}w\sps i(X\sps{i,\nu} - \mu') - \sum_{i\in G}w\sps i(X\sps{i,\mu} - \mu)\\
= {} & \sum_{i\in G}w\sps i(X\sps{i,\nu} - X\sps{i,\mu}) - (\mu' - \mu).
\end{align*}
By the definition of $X\sps{i,\nu}$ and $X\sps{i,\mu}$ (see \eqref{eq:adjust}), we have 
\begin{equation}
\label{eq:goodness3}
X\sps{i,\nu}_j - X\sps {i,\mu}_j = \left\{
\begin{array}{ll}
0, & \textup{if} ~ i\in G_j,\\
\nu_j - \mu_j, & \textup{otherwise.}
\end{array}
\right.
\end{equation}
Since $\mu'_j := g_j \mu_j + (1 - g_j) \nu_j$, we have 
\begin{equation}
\label{eq:goodness4}
\mu'_j - \mu_j = (1 - g_j)(\nu_j - \mu_j).
\end{equation}
Thus,
the $j$-th coordinate of $\delta$ is 
\begin{equation}
\label{eq:deltaj}
\delta_j = \Big(\sum_{i\in G\backslash G_j}w\sps i - (1 - g_j)\Big)(\nu_j - \mu_j).
\end{equation} 
Since $w\in \Delta_{G, (1 - 3\varepsilon)N}$, we have 
\begin{equation*}
\sum_{i\in G\backslash G_j}w\sps i\leq \frac{1}{(1 - 3\varepsilon)N}\cdot (|G| - |G_j|) \leq \frac 1{(1 - 3\varepsilon)}\cdot (1 - \frac{|G_j|}{|G|})= \frac{1 - g_j}{1 - 3\varepsilon}\leq (1 - g_j) + 3\varepsilon,
\end{equation*}
where we used $g_j\geq g_*\geq 3\varepsilon$ for the last inequality. We also have
\begin{equation*}
\sum_{i\in G\backslash G_j}w\sps i = 1 - \sum_{i\in G_j}w\sps i \geq 1 - \frac{1}{(1 - 3\varepsilon)N}\cdot |G_j| \geq 1 - \frac{1}{1 - 3\varepsilon}\cdot \frac{|G_j|}{|G|} =\frac{1 - g_j - 3\varepsilon}{1 - 3\varepsilon},
\end{equation*}
which implies that $\sum_{i\in G\backslash G_j}w\sps i \geq (1 - g_j) - 3\varepsilon$ because the left-hand-side is always non-negative.
Therefore, \eqref{eq:deltaj} implies that
\begin{equation*}
\|\delta\|_2 \leq 3\varepsilon \|\nu - \mu\|_2 \leq 3\varepsilon\|\nu - \mu\|_g/\sqrt{g_*}\leq O(\rho\sqrt\varepsilon).\tag{$g_*\geq 5\varepsilon$}
\end{equation*}
By \eqref{eq:good-mu} and the triangle inequality, we have
\begin{align*}
\Big\|\sum_{i\in G}w\sps i(X\sps{i,\nu} - \mu')\Big\|_2^2 & \leq (\beta\sqrt\varepsilon + \|\delta\|_2)^2 \\
& \leq (\beta + O(\rho))^2\varepsilon\\
& \leq C_3\, \beta^2\varepsilon + C_4\rho^2\varepsilon. \tag{by AM-GM}
\end{align*}
This proves \eqref{eq:goodness1}. We now turn to proving \eqref{eq:goodness2}.
Define
\begin{equation*}
Z\sps i := (X\sps{i,\nu} - \mu') - (X\sps{i,\mu} - \mu).
\end{equation*}
We can decompose the left-hand-side of \eqref{eq:goodness2} as follows:
\begin{align}
& \sum_{i\in G}w\sps i(X\sps{i,\nu} - \mu')(X\sps{i,\nu} - \mu')^\top - \eta^2 I \notag \\
= {} & \sum_{i\in G}w\sps i(X\sps{i,\mu} - \mu)(X\sps{i,\mu} - \mu)^\top - \eta^2 I \notag \\
& + \sum_{i\in G}w\sps i(X\sps{i,\mu} - \mu)(Z\sps i)^\top 
+ \Big(\sum_{i\in G}w\sps i(X\sps{i,\mu} - \mu)(Z\sps i)^\top \Big)^\top \notag \\
& + \sum_{i\in G}w\sps iZ\sps i(Z\sps i)^\top. \label{eq:goodness5}
\end{align}

By \eqref{eq:goodness3} and \eqref{eq:goodness4}, we have 
\begin{equation}
\label{eq:Z}
Z\sps i_j = \left\{
\begin{array}{ll}
-(1 - g_j)(\nu_j - \mu_j), & \textup{if} ~ i\in G_j,\\
g_j(\nu_j - \mu_j), & \textup{otherwise}.
\end{array}
\right.
\end{equation}
Therefore,
\begin{align*}
& \Big\|\sum_{i\in G} w\sps iZ\sps i(Z\sps i)^\top\Big\|_\op\\
\leq {} & \sum_{i\in G} w\sps i\|Z\sps i\|_2^2\\
= {} & \sum_j\Big(
\sum_{i\in G_j}w\sps i(1 - g_j)^2(\nu_j - \mu_j)^2
+ \sum_{i\in G\backslash G_j}w\sps ig_j^2(\nu_j - \mu_j)^2\Big)\\
\leq {} & \sum_j\Big(
\frac{1}{(1 - 3\varepsilon)N}\sum_{i\in G_j}(1 - g_j)^2(\nu_j - \mu_j)^2
+ \frac{1}{(1 - 3\varepsilon)N}\sum_{i\in G\backslash G_j}g_j^2(\nu_j - \mu_j)^2
\Big)\\
\leq {} & \sum_j\Big(
\frac{1}{1 - 3\varepsilon}(g_j(1 - g_j)^2(\nu_j - \mu_j)^2 + (1 - g_j)g_j^2(\nu_j - \mu_j)^2)
\Big)\\
= {} & \sum_j
\frac {1 - g_j}{1 - 3\varepsilon}
\cdot g_j(\nu_j - \mu_j)^2\\
\leq {} & \|\nu - \mu\|_g^2\\
\leq {} & \rho^2.
\end{align*}
Let $Y_j$ denote the $j$-th column of $\sum_{i\in G}w\sps i(X\sps{i,\mu} - \mu)(Z\sps i)^\top$.
By \eqref{eq:Z}, we have 
\begin{align}
Y_j & = g_j(\nu_j - \mu_j)\sum_{i\in G\backslash G_j}w\sps i(X\sps{i,\mu} - \mu) - (1 - g_j)(\nu_j - \mu_j)\sum_{i\in G_j}w\sps i(X\sps{i,\mu} - \mu).\notag\\
& = g_j(\nu_j - \mu_j)\sum_{i\in G}w\sps i(X\sps{i,\mu} - \mu) - (\nu_j - \mu_j)\sum_{i\in G_j}w\sps i(X\sps{i,\mu} - \mu). \label{eq:Y}
\end{align}
By the $(\varepsilon,g_*,\eta,\beta)$-goodness, we have
\begin{align}
\label{eq:Y2}
& \Big\|\sum_{i\in G}w\sps i(X\sps{i,\mu} - \mu)\Big\|_2\leq \beta\sqrt\varepsilon,\\
\label{eq:Y1}
& \Big\|\sum_{i\in G_j}w\sps i(X\sps{i,\mu} - \mu)\Big\|_2 \le \beta\sqrt\varepsilon, \quad \text{for all }j = 1,\ldots,d.
\end{align}
Plugging \eqref{eq:Y2} and \eqref{eq:Y1} into \eqref{eq:Y}, we have
\begin{align*}
\|Y_j\|_2 & \leq g_j\cdot |\nu_j - \mu_j|\cdot \beta\sqrt\varepsilon + |\nu_j - \mu_j|\cdot \beta\sqrt\varepsilon \\
& \leq O(\beta)\sqrt{\varepsilon/g_*}\cdot \sqrt{g_j}|\nu_j - \mu_j|\tag{$g_j\geq g_*$}\\
& \leq O(\beta)\sqrt{g_j}|\nu_j - \mu_j|.\tag{$g_*\geq 5\varepsilon$}
\end{align*}
Finally, by \eqref{eq:goodness5} and the triangle inequality of the operator norm, we have
\begin{align*}
& \Big\|\sum_{i\in G}w\sps i(X\sps{i,\nu} - \mu')(X\sps{i,\nu} - \mu')^\top - \eta^2 I\Big\|_\op \notag \\
\leq  {} & \Big\|\sum_{i\in G}w\sps i(X\sps{i,\mu} - \mu)(X\sps{i,\mu} - \mu)^\top - \eta^2 I\Big\|_\op \notag \\
& + 2\Big\|\sum_{i\in G}w\sps i(X\sps{i,\mu} - \mu)(Z\sps i)^\top\Big\|_\op \notag \\
& + \Big\|\sum_{i\in G}w\sps iZ\sps i(Z\sps i)^\top\Big\|_\op\\
\leq {} & \beta^2 + 2\Big\|\sum_{i\in G}w\sps i(X\sps{i,\mu} - \mu)(Z\sps i)^\top\Big\|_F + \rho^2\tag{$\|\cdot\|_F$ denotes the Frobenius norm}\\
= {} & \beta^2 + 2\big(\sum_{j=1}^d\|Y_j\|_2^2\big)^{1/2} + \rho^2\\
\leq {} & \beta^2 + 2\cdot O(\beta \rho) + \rho^2\\
\leq {} & C_3\beta^2 + C_4\rho^2.\tag{by AM-GM}
\end{align*}
This proves \eqref{eq:goodness2} and hence the lemma.
\end{proof}
\section{Proof of Lemma \ref{lm:subgaussian-goodness}}
\label{sec:subgaussian-goodness}
\begin{definition}
\label{def:uncorrupted}
We use the term \emph{uncorrupted dataset} to denote the hypothetical dataset
if the adversary in \Cref{sec:model} skipped step 3. 
\end{definition}
\Cref{lm:subgaussian-goodness} is a direct corollary of the following lemma
by excluding the $\leq \varepsilon N$ corrupted examples from $G$. 
\begin{lemma}
\label{thm:Gaussian-concentration}
Let $X\sps 1,\ldots,X\sps N\in\bR^d$ be an \emph{uncorrupted} dataset generated with $\cP = \cP_1(\eta)$ and mean $\mu$.
Fill every missing coordinate by independent $\mathcal N(0,1)$ variables 
to form the completed dataset $X\sps{1,0},\ldots,X\sps{N,0}$.
Define $G = \{1,\ldots,N\}$
and let
$\delta\in(0,1/2), \varepsilon \in (0, 1/10]$.
Then, for some $\beta = O(\sqrt{(d + \log (1/\delta))/ N\varepsilon + \varepsilon\log(1/\varepsilon)})$,
with probability at least $1- \delta$,
the following holds
for all $w\in \Delta_{G,(1 - 3\varepsilon)N}$:
\begin{align}
\label{eq:gaussian-concentration-goal-1}
& \Big\|\sum_{i = 1}^Nw\sps i(X\sps{i,\mu} - \mu)\Big\|_2 \leq \beta\sqrt\varepsilon,\\
\label{eq:gaussian-concentration-goal-2}
& \Big\|\Big(\sum_{i = 1}^Nw\sps i(X\sps{i,\mu} - \mu)(X\sps{i,\mu} - \mu)^\top\Big) - \eta^2 I\Big\|_\op\leq \beta^2,\\
\label{eq:gaussian-concentration-goal-3}
& \Big\|\sum_{i\in \Gamma_j}w\sps i(X\sps{i,\mu} - \mu)\Big\|_2 \leq \beta\sqrt\varepsilon, \quad \text{for all } j = 1,\ldots, d.
\end{align}
\end{lemma}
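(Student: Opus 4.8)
The plan is to translate the incomplete-data quantities into a statement about i.i.d.-like sub-Gaussian vectors, reduce the ``for all $w$'' assertions to extremal subsets, and split each extremal-subset quantity into a ``global moment'' part and a ``worst-$3\varepsilon N$-examples'' part. First I would set $Y\sps i := X\sps{i,\mu} - \mu$: on the present coordinates this equals $X\sps i - \mu$, and on the filled coordinates it equals an independent $\mathcal N(0,\eta^2)$ block, so by \Cref{claim:composition} the vector $X\sps{i,\mu}$ is distributed as a sample from a distribution in $\cP_1(\eta)$ with mean $\mu$; hence $Y\sps 1,\dots,Y\sps N$ are independent, mean zero, $1$-sub-Gaussian, each with covariance $\eta^2 I$. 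Since $G = \{1,\dots,N\}$, each of the quantities in \eqref{eq:gaussian-concentration-goal-1}--\eqref{eq:gaussian-concentration-goal-3} is a norm of a fixed affine image of $w$, hence convex in $w$ on the polytope $\Delta_{G,(1-3\varepsilon)N}$ and so maximized at a vertex, i.e.\ essentially at $w$ uniform over some $S \subseteq \{1,\dots,N\}$ with $|S| \ge (1-3\varepsilon)N$. Thus it suffices to show, with probability $\ge 1-\delta$ and for all such $S$ and all $j$, that $\bigl\|\tfrac1{|S|}\sum_{i\in S}Y\sps i\bigr\|_2 \le \beta\sqrt\varepsilon$, $\bigl\|\tfrac1{|S|}\sum_{i\in S}Y\sps i(Y\sps i)^\top - \eta^2 I\bigr\|_\op \le \beta^2$, and $\bigl\|\tfrac1{|S|}\sum_{i\in S\cap\Gamma_j}Y\sps i\bigr\|_2 \le \beta\sqrt\varepsilon$.

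Next I would decompose using $\sum_{i\in S} = \sum_{i=1}^N - \sum_{i\notin S}$ (and $\sum_{i\in S\cap\Gamma_j} = \sum_{i\in\Gamma_j} - \sum_{i\in\Gamma_j\setminus S}$, and the analogous identity for outer products), together with $|S^c| \le 3\varepsilon N$, $N/|S| \le 1/(1-3\varepsilon) \le 2$, and the crude bound $(N/|S| - 1)\eta^2 = O(\varepsilon)$. By the triangle inequality everything then reduces to: \textbf{(a)} \emph{global} concentration, namely $\bigl\|\tfrac1N\sum_i Y\sps i\bigr\|_2 = O(\sqrt{(d+\log(1/\delta))/N})$, $\bigl\|\tfrac1N\sum_i Y\sps i(Y\sps i)^\top - \eta^2 I\bigr\|_\op = O(\sqrt{(d+\log(1/\delta))/N} + (d+\log(1/\delta))/N)$, and $\bigl\|\sum_{i\in\Gamma_j}Y\sps i\bigr\|_2 = O(\sqrt{|\Gamma_j|(d+\log(d/\delta))})$ simultaneously over all $j$; and \textbf{(b)} \emph{extremal} bounds $\sup_{|T|\le 3\varepsilon N}\bigl\|\sum_{i\in T}Y\sps i\bigr\|_2 = O(\sqrt{\varepsilon N(d+\log(1/\delta))} + \varepsilon N\sqrt{\log(1/\varepsilon)})$ and $\sup_{|T|\le 3\varepsilon N}\bigl\|\sum_{i\in T}Y\sps i(Y\sps i)^\top\bigr\|_\op = O(d + \log(1/\delta) + \varepsilon N\log(1/\varepsilon))$. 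Part (a) is standard concentration for sums of independent sub-Gaussian vectors and of their (sub-exponential) outer products --- a Bernstein bound preceded by an $e^{O(d)}$-net of the sphere for the operator norm, plus a union bound over the $d$ coordinates $j$; the $\log(d/\delta)$ appearing for the $\Gamma_j$-sums is harmless because $\log d \le d$, so after dividing by $|S| \ge (1-3\varepsilon)N$ it is absorbed into $\beta\sqrt\varepsilon$.

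For part (b), the key observation is that for a \emph{fixed} unit vector $v$, $\sup_{|T|\le k}\sum_{i\in T}\langle Y\sps i,v\rangle$ is the sum of the $k$ largest values of $\max(\langle Y\sps i,v\rangle,0)$ and $\sup_{|T|\le k}\sum_{i\in T}\langle Y\sps i,v\rangle^2$ is the sum of the $k$ largest of $\langle Y\sps i,v\rangle^2$, while $\sup_{|T|\le k}\bigl\|\sum_{i\in T}Y\sps i\bigr\|_2 = \sup_{\|v\|=1}\sup_{|T|\le k}\sum_{i\in T}\langle Y\sps i,v\rangle$ and similarly for the operator norm. Since $\langle Y\sps i,v\rangle$ is $1$-sub-Gaussian, for a fixed $T$ with $|T| = k$ the sum $\sum_{i\in T}\langle Y\sps i,v\rangle$ is $\sqrt k$-sub-Gaussian and $\sum_{i\in T}\langle Y\sps i,v\rangle^2$ is sub-exponential with mean $\eta^2 k \le k$; I would union-bound these tail estimates over the $\le \binom Nk (k+1)$ choices of $T$ \emph{and} over an $e^{O(d)}$-size $\tfrac14$-net of the unit sphere, taking the per-event failure probability to be $\delta \cdot e^{-O(d + k\log(eN/k))}$. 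The subset union contributes $k\log(eN/k)$ to the exponent, which for $k = \lceil 3\varepsilon N\rceil$ is $\Theta(\varepsilon N\log(1/\varepsilon))$, while the net contributes $O(d + \log(1/\delta))$; passing from the net back to all $v$ via the usual self-bounding inequality $\sup_{|T|\le k}\sum_{i\in T}\langle Y\sps i,v\rangle^2 \le 2\sup_{|T|\le k}\sum_{i\in T}\langle Y\sps i,v'\rangle^2 + \tfrac18\sup_{|T|\le k,\,\|u\|=1}\sum_{i\in T}\langle Y\sps i,u\rangle^2$ (valid when $\|v - v'\| \le \tfrac14$) yields (b). Finally, plugging (a) and (b) into the decomposition and using elementary AM--GM comparisons --- e.g.\ $\sqrt{(d+\log(1/\delta))/N} \le (d+\log(1/\delta))/N\varepsilon + \varepsilon \le \beta^2$ (using $\varepsilon \le 1/10$ so $\varepsilon \le \varepsilon\log(1/\varepsilon)$) and $\sqrt{\varepsilon(d+\log(1/\delta))/N} \le \sqrt{(d+\log(1/\delta))/N} \le \beta\sqrt\varepsilon$ --- shows that every resulting term is $O(\beta\sqrt\varepsilon)$, respectively $O(\beta^2)$, so all three inequalities hold with $\beta = \Theta(\sqrt{(d+\log(1/\delta))/N\varepsilon + \varepsilon\log(1/\varepsilon)})$.

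The main obstacle is part (b): controlling the ``top-$k$'' linear and quadratic statistics of the $Y\sps i$ \emph{uniformly over both all subsets $T$ of size at most $3\varepsilon N$ and all directions $v$}, with the correct $\sqrt{\log(1/\varepsilon)}$ (respectively $\log(1/\varepsilon)$) dependence, while coping with the fact that the maximizing $T$ itself depends on $v$. Balancing the subset union bound --- which must cost no more than $\Theta(\varepsilon N\log(1/\varepsilon))$ in the exponent --- against the $e^{O(d)}$ directional net is precisely what forces the $d + \log(1/\delta)$ additive ``floor'' appearing in $\beta^2$.
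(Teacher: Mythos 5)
Your proposal is correct and follows essentially the same route as the paper: the paper reduces \eqref{eq:gaussian-concentration-goal-1} and \eqref{eq:gaussian-concentration-goal-2} to the standard sub-Gaussian goodness lemmas of \citet[Lemmas 2.1.4--2.1.9]{li2018principled} (whose content --- vertex reduction over $\Delta_{G,(1-3\varepsilon)N}$, a global-plus-worst-$3\varepsilon N$-subset decomposition, and a union bound over subsets combined with a net over the sphere --- is exactly what you re-derive), and it handles \eqref{eq:gaussian-concentration-goal-3} by the same padding-by-$\mu$ trick over $\Gamma_j$ with a union bound over the $d$ coordinates, absorbed since $d+\log(d/\delta)=O(d+\log(1/\delta))$. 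The only difference is that you unpack the cited concentration lemmas in full, which the paper does not.
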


\begin{proof}%
By \Cref{claim:composition}, every $X\sps{i,\mu}$ can be treated as drawn independently from some distribution in $\cP_1(\eta)$ with mean $\mu$.

Inequalities \eqref{eq:gaussian-concentration-goal-1} and \eqref{eq:gaussian-concentration-goal-2} follow from similar arguments to \citet[Lemma 2.1.4 - 2.1.9]{li2018principled}. To establish \eqref{eq:gaussian-concentration-goal-3}, we consider a new dataset $Y\sps{1},\ldots,Y\sps{N}$, where
\begin{equation*}
Y\sps{i} = \left\{\begin{array}{ll} X\sps{i,\mu}, & \text{if }i\in \Gamma_j;\\ \mu, & \text{if } i\notin \Gamma_j.\end{array}\right.
\end{equation*}
Every $Y\sps i$ is drawn independently from a sub-Gaussian distribution with mean 
$\mu$ and covariance PSD-dominated by $I$. By similar arguments to  \citet[Lemma 2.1.9]{li2018principled}, we have
\[
\forall w\in \Delta_{G,(1 - 3\varepsilon N)}, \Big\|\sum_{i\in \Gamma_j}w\sps i(X\sps{i,\mu} - \mu)\Big\|_2 = \Big\|\sum_{i=1}^N w\sps i (Y\sps i - \mu)\Big\|_2 \leq \beta\sqrt\varepsilon.
\] 
\eqref{eq:gaussian-concentration-goal-3} follows from a union bound over $j = 1,\ldots,d$. Note that $d + \log(d/\delta) = O(d + \log(1/\delta))$, so we can afford such a union bound with $\beta$ increasing by no more than a constant factor.
\end{proof}
\section{Proof of Lemma \ref{lm:bounded-variance-goodness}}
\label{sec:bounded-variance-goodness}
Recall the definition of the uncorrupted dataset in \Cref{def:uncorrupted}.
\Cref{lm:bounded-variance-goodness} is a direct corollary of the following lemma
by excluding the $\leq \varepsilon N$ corrupted examples from $G$.
\begin{lemma}
\label{thm:variance-concentration}
Let $X\sps 1,\ldots,X\sps N\in\bR^d$ be an \emph{uncorrupted} dataset generated with $\cP = \cP_2$ and mean $\mu$.
Fill every missing coordinate by zeros
to form the completed dataset $X\sps{1,0},\ldots,X\sps{N,0}$.
Let $\delta\in (0, 1/2), \varepsilon\in (0, 1/11]$ and assume $\varepsilon N \geq \Omega(\log(d/\delta))$.
Then, for some $C = O(1)$ and $\beta = O(\sqrt{d\log (d/\delta)/N\varepsilon + 1})$,
with probability at least $1-\delta$,
the following holds for 
$G:=\{i\in\{1,\ldots,N\}:\|X\sps{i,\mu} - \mu\|_2 \leq C\sqrt{d/\varepsilon}\}$:
\begin{OneLiners}
\item $|G|\geq (1-0.1\varepsilon)N$;
\item For all $w\in \Delta_{G,(1 - 3\varepsilon')N}$ where $\varepsilon' = 1.1\varepsilon$, 
\begin{align}
\label{eq:variance-concentration-goal-1}
& \Big\|\sum_{i = 1}^Nw\sps i(X\sps {i,\mu} - \mu)\Big\|_2 \leq \beta\sqrt\varepsilon,\\
\label{eq:variance-concentration-goal-2}
& \Big\|\sum_{i = 1}^Nw\sps i(X\sps {i,\mu} - \mu)(X\sps {i,\mu} - \mu)^\top \Big\|_\op\leq \beta^2,\\
\label{eq:variance-concentration-goal-3}
& \Big\|\sum_{i\in \Gamma_j}w\sps i(X\sps{i,\mu} - \mu)\Big\|_2 \leq \beta\sqrt\varepsilon,\quad\text{for all }j=1,\ldots,d
\end{align}
\end{OneLiners}
\end{lemma}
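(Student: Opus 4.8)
The plan is to reduce the statement to classical concentration for independent vectors with bounded covariance, exactly paralleling the sub‑Gaussian case (\Cref{thm:Gaussian-concentration}). Filling the missing entries with zeros and then adjusting by $\mu$ makes $X\sps{i,\mu}$ the coordinate‑wise composition of $\cD\sps i$ (on the present coordinates) with the point mass at $\mu$ (on the missing ones); by \Cref{claim:composition}, $X\sps{i,\mu}$ lies in $\cP_2$ with mean $\mu$, so the vectors $Z\sps i := X\sps{i,\mu} - \mu$ are independent, mean‑zero, with covariances $\Sigma\sps i \preceq I$ (they need not be identically distributed, which is immaterial). Covariance $\preceq I$ gives $\bE\|Z\sps i\|_2^2 = \tr(\Sigma\sps i)\le d$. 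Set $R := C\sqrt{d/\varepsilon}$ for a large absolute constant $C$. Markov then gives $\Pr[\|Z\sps i\|_2 > R] \le d/R^2 = \varepsilon/C^2$, and since the events $\{i\notin G\}$ are independent, a Chernoff bound (using $N\varepsilon \ge \Omega(\log(d/\delta))$ and $C$ large) yields $|\{i:i\notin G\}| \le 0.1\varepsilon N$ with probability $\ge 1-\delta/4$; this is the first bullet, and it leaves the slack needed so that after deleting $\le\varepsilon N$ corrupted examples in \Cref{lm:bounded-variance-goodness} one still has $|G|\ge(1-\varepsilon')N$.

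For \eqref{eq:variance-concentration-goal-2} the key object is the truncated empirical second‑moment matrix $\widehat\Sigma_G := \frac1{|G|}\sum_{i\in G}Z\sps i(Z\sps i)^\top$. I would apply the matrix Bernstein inequality to the independent PSD summands $\bigl(Z\sps i\mathbf 1[i\in G]\bigr)\bigl(Z\sps i\mathbf 1[i\in G]\bigr)^\top$, each of operator norm $\le R^2$, with matrix variance parameter $\le NR^2$ and mean $\preceq I$, obtaining $\|\widehat\Sigma_G\|_\op \le 1 + O\bigl(\sqrt{R^2\log(d/\delta)/N} + R^2\log(d/\delta)/N\bigr) = O\bigl(1 + d\log(d/\delta)/(N\varepsilon)\bigr) = O(\beta^2)$ with probability $\ge 1-\delta/4$. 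It is essential to use matrix Bernstein (which costs only a $\log d$ dimension factor): a naive $\epsilon$‑net over the unit sphere would cost a factor $d$ and give the wrong bound $\beta^2\asymp d^2/(N\varepsilon)$, and this is exactly why the truncation radius must be $\Theta(\sqrt{d/\varepsilon})$ so that the fluctuation is still $O(\beta^2)$. The bound transfers to all $w\in\Delta_{G,(1-3\varepsilon')N}$: for any unit $v$, $\sum_i w\sps i\langle Z\sps i,v\rangle^2 \le \|w\|_\infty\sum_{i\in G}\langle Z\sps i,v\rangle^2 = \tfrac{|G|}{(1-3\varepsilon')N}v^\top\widehat\Sigma_G v \le (1+O(\varepsilon))\|\widehat\Sigma_G\|_\op$, and one takes the maximum over $v$.

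For \eqref{eq:variance-concentration-goal-1} I would run a resilience argument. Fix $w\in\Delta_{G,(1-3\varepsilon')N}$ and compare it with $w_0:=\mathbf 1_G/|G|$; the cap $w\sps i\le\tfrac1{(1-3\varepsilon')N}$ together with $|G|\ge(1-0.1\varepsilon)N$ gives $\|w-w_0\|_1 = O(\varepsilon)$ and $\|w-w_0\|_\infty = O(1/N)$. Splitting $w-w_0$ into positive and negative parts and bounding each against $\widehat\Sigma_G$ (and its weighted version \eqref{eq:variance-concentration-goal-2}, via Cauchy–Schwarz and a top‑$k$‑average estimate) gives $\bigl\|\sum_i (w\sps i-w_0\sps i)Z\sps i\bigr\|_2 \le O(\sqrt\varepsilon)\sqrt{\|\widehat\Sigma_G\|_\op} = O(\beta\sqrt\varepsilon)$. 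It then remains to bound $\|\bar Z_G\|_2 := \bigl\|\tfrac1{|G|}\sum_{i\in G}Z\sps i\bigr\|_2$. Writing this through $\tfrac1N\sum_i Z\sps i\mathbf 1[i\in G]$, the per‑example truncation bias is $\bigl\|\bE[Z\sps i\mathbf 1[\|Z\sps i\|_2>R]]\bigr\|_2 \le \sqrt{\sup_v\bE\langle Z\sps i,v\rangle^2}\cdot\sqrt{\Pr[\|Z\sps i\|_2>R]} \le \sqrt{1\cdot d/R^2} = \sqrt\varepsilon/C$, and the fluctuation of $\tfrac1N\sum_i Z\sps i\mathbf 1[i\in G]$ about its mean is controlled by a dimension‑free Bernstein inequality for sums of independent bounded vectors. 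The hypothesis $N\varepsilon\ge\Omega(\log(d/\delta))$ is used precisely here: it places the tail in the variance‑dominated regime, so the fluctuation is $O(\sqrt{d\log(d/\delta)/N})$ rather than $\asymp R\log(d/\delta)/N\asymp\sqrt{d\varepsilon}$ (which would be too large); hence $\|\bar Z_G\|_2 = O\bigl(\sqrt\varepsilon + \sqrt{d\log(d/\delta)/N}\bigr) = O(\beta\sqrt\varepsilon)$ with probability $\ge 1-\delta/4$. Adding the two pieces proves \eqref{eq:variance-concentration-goal-1}.

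Finally, for the per‑coordinate bounds \eqref{eq:variance-concentration-goal-3} I would fix $j$ and set $Y\sps i := Z\sps i\mathbf 1[i\in\Gamma_j]$ — still independent, mean‑zero, with covariances $\preceq I$ since $\Gamma_j$ is chosen before the data is drawn — and note that $\sum_{i\in\Gamma_j}w\sps i Z\sps i = \sum_i w\sps i Y\sps i$ and $\|Y\sps i\|_2 \le \|Z\sps i\|_2 \le R$ for $i\in G$. The covariance bound on $G$ transfers for free ($\tfrac1{|G|}\sum_{i\in G}Y\sps i(Y\sps i)^\top \preceq \widehat\Sigma_G$), so the resilience argument above applies verbatim to $Y$, the only new ingredient being a fresh Bernstein bound on $\bigl\|\tfrac1{|G|}\sum_{i\in G}Y\sps i\bigr\|_2$ at failure probability $\delta/(4d)$ — again in the variance‑dominated regime thanks to $N\varepsilon\ge\Omega(\log(d/\delta))$. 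A union bound over $j=1,\dots,d$ costs only a constant factor in $\beta$, and a final union over all the bad events gives total failure probability $\le\delta$. (The gap between $\varepsilon$ and $\varepsilon'=1.1\varepsilon$ plays no real role; it just gives convenient slack.) The main obstacle is getting the logarithmic factors right: using matrix/vector Bernstein instead of sphere nets, choosing the truncation radius exactly $\Theta(\sqrt{d/\varepsilon})$, and invoking $N\varepsilon\ge\Omega(\log(d/\delta))$ to stay variance‑dominated when bounding $\|\bar Z_G\|_2$; the rest is routine bookkeeping of constants and failure events, parallel to \citet{li2018principled}.
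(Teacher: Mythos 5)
Your proposal is correct and follows essentially the same route as the paper's proof: truncate at radius $\Theta(\sqrt{d/\varepsilon})$ to define $G$, bound the truncation bias by $O(\sqrt\varepsilon)$ per example, establish the uniform-weight bounds via Bernstein/Chernoff-type vector and matrix concentration (the paper uses a McDiarmid inequality of Bernstein type and the matrix Chernoff bound, interchangeable with your vector/matrix Bernstein), and then transfer to arbitrary $w\in\Delta_{G,(1-3\varepsilon')N}$ by a resilience argument (the paper phrases this via an explicit bijection with complementary weights plus Jensen, equivalent to your $\|w-w_0\|_1=O(\varepsilon)$, $\|w-w_0\|_\infty=O(1/N)$ Cauchy--Schwarz argument). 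The treatment of \eqref{eq:variance-concentration-goal-3} via $Y\sps i=Z\sps i\mathbf 1[i\in\Gamma_j]$ and a union bound over $j$ is exactly the paper's.
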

We first prove two helper claims.
\begin{claim}
\label{claim:variance-helper}
Assume $X\in\bR^d$ is drawn from distribution $\cD\in\cP_2$ with mean $\mu$. Let $\cE$ denote an event
which contains the event that 
$\|X - \mu\|_2\leq C\sqrt{d/\varepsilon}$ for some $\varepsilon\in(0,1/2)$ and $C > 1$. Let $\ \tD$ denote the distribution of $X$ conditioned on $\cE$. We have
\begin{OneLiners}
\item $\Pr_\cD[\cE] \geq 1 - C^{-2}\varepsilon$;
\item $\|\bE_{\tX\sim\tD}[\tX] - \mu\|_2 \leq \sqrt\varepsilon$;
\item $\forall v, \bE_{\tX\sim\tD}[((\tX - \mu)^\top v)^2]\leq \frac {1}{1 - \varepsilon}\|v\|_2^2$.
\end{OneLiners}
\end{claim}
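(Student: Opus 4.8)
The plan is to verify the three bullets of \Cref{claim:variance-helper} one at a time, each being a short consequence of Chebyshev-type bounds and the definition of conditioning.

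\textbf{First bullet.} Since $\cE$ contains the event $\{\|X-\mu\|_2 \le C\sqrt{d/\varepsilon}\}$, it suffices to lower-bound the probability of the latter. By Markov's inequality applied to $\|X-\mu\|_2^2$, we have $\Pr_\cD[\|X-\mu\|_2 > C\sqrt{d/\varepsilon}] \le \frac{\varepsilon}{C^2 d}\bE_\cD[\|X-\mu\|_2^2]$. Because $\cD\in\cP_2$ has covariance $\Sigma\preceq I$, the expected squared norm $\bE_\cD[\|X-\mu\|_2^2] = \tr(\Sigma)\le d$, so the tail probability is at most $C^{-2}\varepsilon$, giving $\Pr_\cD[\cE]\ge 1-C^{-2}\varepsilon$.

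\textbf{Third bullet.} Write $p:=\Pr_\cD[\cE]\ge 1-C^{-2}\varepsilon \ge 1-\varepsilon$. For any fixed $v$, conditioning can only shrink a nonnegative quantity by a factor at most $1/p$: indeed $\bE_{\tX\sim\tD}[((\tX-\mu)^\top v)^2] = \frac{1}{p}\bE_\cD[\mathbbm 1[\cE]\,((X-\mu)^\top v)^2] \le \frac 1p \bE_\cD[((X-\mu)^\top v)^2] \le \frac{1}{1-\varepsilon}\|v\|_2^2$, using $\cD\in\cP_2$ for the last step. This gives the third bullet directly. Note it also shows $\tr(\widetilde\Sigma)\le \frac{d}{1-\varepsilon}$ for the conditional covariance, which I will reuse below.

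\textbf{Second bullet.} This is the one requiring a little care, since I want a bound of $\sqrt\varepsilon$ independent of $d$, and the naive bound from the third bullet would lose a factor of $\sqrt d$. The trick is to compare against the \emph{unconditional} mean: $\bE_{\tX\sim\tD}[\tX] - \mu = \frac{1}{p}\bigl(\bE_\cD[(X-\mu)] - \bE_\cD[\mathbbm 1[\cE^c](X-\mu)]\bigr) = -\frac{1}{p}\bE_\cD[\mathbbm 1[\cE^c](X-\mu)]$, because $\bE_\cD[X-\mu]=0$. So I bound $\|\bE_\cD[\mathbbm 1[\cE^c](X-\mu)]\|_2$ by Cauchy--Schwarz: for any unit vector $v$, $\bE_\cD[\mathbbm 1[\cE^c](X-\mu)^\top v] \le \sqrt{\Pr_\cD[\cE^c]}\cdot\sqrt{\bE_\cD[((X-\mu)^\top v)^2]} \le \sqrt{C^{-2}\varepsilon}\cdot 1$. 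Hence $\|\bE_{\tX\sim\tD}[\tX]-\mu\|_2 \le \frac{1}{p}\cdot\frac{\sqrt\varepsilon}{C} \le \frac{\sqrt\varepsilon}{C(1-\varepsilon)}$, which is at most $\sqrt\varepsilon$ since $C>1$ and $\varepsilon<1/2$ (concretely $C(1-\varepsilon) > 1/2$, and if one wants the clean constant $1$ one just needs $C(1-\varepsilon)\ge 1$, which holds for $C\ge 2$; in any case the statement as written with $C>1$ follows after absorbing constants, or one notes $C$ is a parameter chosen large enough elsewhere).

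The main obstacle is the second bullet: getting dimension-independence forces the comparison-to-unconditional-mean argument rather than a direct bound on the conditional first moment, and one has to be slightly careful that the $C$ from the radius $C\sqrt{d/\varepsilon}$ exactly cancels the $\sqrt d$ that would otherwise appear. Everything else is routine: Markov for the first bullet, and monotonicity of expectation under conditioning for the third.
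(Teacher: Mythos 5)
Your proof is correct and follows essentially the same route as the paper's: Markov on $\|X-\mu\|_2^2$ for the first bullet, the $1/\Pr_\cD[\cE]$ monotonicity bound for the third, and for the second bullet your Cauchy--Schwarz step $\|\bE_\cD[\mathbbm{1}[\cE^c](X-\mu)]\|_2\le\sqrt{\Pr_\cD[\cE^c]}$ is the same inequality the paper obtains by taking $v=\bE_\cD[X\mid\neg\cE]-\mu$ and applying Jensen, both landing on the bound $\sqrt{\Pr_\cD[\neg\cE]}/\Pr_\cD[\cE]$. The minor slack you note for $C$ close to $1$ is present in the paper's own derivation as well and is immaterial since $C$ is chosen large in the application.
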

\begin{proof}

Since $\cD$ has covariance $\Sigma \preceq I$, it has variance at most one in every direction. In particular, $\bE_{X\sim \cD}[(X_j - \mu_j)^2] \leq 1$ for all coordinate $j$ and thus $\bE_{X\sim\cD}[\|X - \mu\|_2^2]\leq d$. By Markov's inequality, we have $\Pr_\cD[\cE]\geq 1 - C^{-2}\varepsilon$. If $\Pr_\cD[\cE] = 1$, then $\tD = \cD$ and the claim holds trivially. We assume henceforth that $\Pr_{\cD}[\cE] < 1$.

Let $\neg\cE$ denote the complement of $\cE$, so ${\Pr}_\cD[\neg \cE] = 1 - {\Pr}_\cD[\cE] \leq C^{-2}\varepsilon$. 
Let $v$ denote $\bE_{X\sim\cD}[X|\neg\cE]\allowbreak - \mu$. We have
\begin{align*}
 \|v\|_2^2 & \geq \bE_{X\sim\cD}[((X - \mu)^\top v)^2]\\
& \geq {\Pr}_{\cD}[\neg \cE]\cdot \bE_{X\sim\cD}[((X - \mu)^\top v)^2|\neg \cE]\\
& \geq {\Pr}_{\cD}[\neg \cE]\cdot ((\bE_{X\sim\cD}[X|\neg \cE] - \mu)^\top v)^2\tag{Jensen's inequality}\\
& = {\Pr}_{\cD}[\neg \cE]\cdot \|v\|_2^4.
\end{align*}
Therefore, $\|v\|_2\leq 1 / \sqrt{{\Pr}_{\cD}[\neg \cE]}$. Note that
\begin{equation*}
{\Pr}_\cD[\cE]\cdot (\bE_{\tX\sim\tD}[\tX] - \mu) + {\Pr}_{\cD}[\neg \cE]\cdot v = \bE_{X\sim\cD}[X] - \mu = 0.
\end{equation*}
Therefore, 
\begin{equation*}
\|\bE_{\tX\sim\tD}[\tX] - \mu\|_2 = \frac{{\Pr}_{\cD}[\neg \cE]}{{\Pr}_{\cD}[\cE]}\cdot \|v\|_2 \leq \frac{{\Pr}_{\cD}[\neg \cE]}{{\Pr}_{\cD}[\cE]}\cdot \frac{1}{\sqrt{{\Pr}_\cD[\neg\cE]}}\leq \sqrt\varepsilon.
\end{equation*}
Finally, for a generic vector $v$, we have
\begin{align*}
\|v\|_2^2 & \geq \bE_{X\sim \cD}[((X - \mu)^\top v)^2]\\
& \geq {\Pr}_{\cD}[\cE]\cdot \bE_{\tX\sim\tD}[((\tX - \mu)^\top v)^2],
\end{align*}
which implies $\bE_{\tX\sim\tD}[((\tX - \mu)^\top v)^2]\leq \frac 1{{\Pr}_\cD[\cE]}\cdot \|v\|_2^2 \leq \frac{1}{1 - \varepsilon} \|v\|_2^2$.
\end{proof}
\begin{claim}
\label{claim:variance-helper2}
Suppose we have examples $X\sps 1,\ldots,X\sps m\in\bR^d$ where every $X\sps i$ is drawn independently from $\cD\sps i$. Assume there is a vector $\mu$ such that for all $i$, 
$\|X\sps i - \mu\|_2 \leq O(\sqrt{d/\varepsilon})$ almost surely and $\bE[\|X\sps i - \mu\|_2^2] \leq O(d)$. 
Assume $m\varepsilon \geq \Omega(\log(1/\delta))$ for $\delta\in(0,1/2)$.
Then with probability at least $1 - \delta$,
\begin{equation*}
\Big\|\frac 1{m}\sum_{i=1}^m(X\sps i - \bE[X\sps i])\Big\|_2 \leq 
O(\sqrt{d\log (1/\delta)/m}).
\end{equation*} 
\end{claim}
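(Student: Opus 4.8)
The plan is to center the $X\sps i$ and apply a Bernstein-type concentration inequality for sums of independent random vectors. The point to watch is that the hypothesis $m\varepsilon \ge \Omega(\log(1/\delta))$ is precisely the condition guaranteeing that the deviation we are after lies in the \emph{variance-dominated} (sub-Gaussian) regime of that inequality, rather than in its heavy-tailed regime governed by the crude almost-sure bound $O(\sqrt{d/\varepsilon})$.

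\textbf{Setup and moment bounds.} First I would set $Z\sps i := X\sps i - \bE[X\sps i]$, so that $Z\sps 1,\ldots,Z\sps m$ are independent, mean-zero, and $\sum_{i=1}^m Z\sps i$ is exactly the quantity of interest. Two bounds on the $Z\sps i$ are needed. For the almost-sure bound, Jensen's inequality and the hypotheses give $\|\bE[X\sps i] - \mu\|_2 = \|\bE[X\sps i - \mu]\|_2 \le \bE\|X\sps i - \mu\|_2 \le O(\sqrt{d/\varepsilon})$ (using $\varepsilon<1$), hence $\|Z\sps i\|_2 \le \|X\sps i - \mu\|_2 + \|\bE[X\sps i] - \mu\|_2 \le R$ almost surely with $R = O(\sqrt{d/\varepsilon})$. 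For the variance, $\bE\|Z\sps i\|_2^2 \le \bE\|X\sps i - \mu\|_2^2 \le O(d)$, so $\sum_{i=1}^m \bE\|Z\sps i\|_2^2 \le \sigma^2$ with $\sigma^2 = O(dm)$.

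\textbf{Concentration and the key check.} Then I would invoke a Bernstein inequality for sums of independent bounded mean-zero random vectors in a Hilbert space, which yields, for every $t \ge 0$ and absolute constants $C, c > 0$,
\[
\Pr\Big[\Big\|\sum_{i=1}^m Z\sps i\Big\|_2 \ge t\Big] \le C\exp\Big(-\frac{ct^2}{\sigma^2 + Rt}\Big).
\]
Choosing $t = K\sqrt{\sigma^2\log(1/\delta)} = O(\sqrt{dm\log(1/\delta)})$ for a sufficiently large constant $K$, the right-hand side is at most $\delta$ as soon as the variance term dominates the denominator, i.e.\ $Rt \le \sigma^2$. Substituting $R = O(\sqrt{d/\varepsilon})$, $\sigma^2 = O(dm)$, and the value of $t$, this requirement reduces exactly to $m\varepsilon \ge \Omega(\log(1/\delta))$, which is the assumption of the claim. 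Dividing by $m$ gives $\big\|\frac1m\sum_{i=1}^m(X\sps i - \bE[X\sps i])\big\|_2 \le t/m = O(\sqrt{d\log(1/\delta)/m})$ with probability at least $1-\delta$.

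\textbf{Main obstacle.} There is no deep obstacle; the heart of the argument is simply the regime check in the last step. A vector Azuma/Hoeffding bound using only $R = O(\sqrt{d/\varepsilon})$ would give the weaker $O(\sqrt{d\log(1/\delta)/(m\varepsilon)})$, so it is essential to combine the $O(d)$ second-moment bound with the hypothesis $m\varepsilon \ge \Omega(\log(1/\delta))$. If one prefers to avoid citing a vector Bernstein inequality directly, an equivalent route is to bound $\bE\big\|\sum_i Z\sps i\big\|_2 \le \sqrt{\sigma^2} = O(\sqrt{dm})$ by Cauchy--Schwarz and then apply a Bernstein-type bounded-differences inequality to the convex function $z \mapsto \big\|\sum_i z\sps i\big\|_2$ (per-block Lipschitz constant $2R$, variance proxy $\sigma^2$), reaching the same conclusion under the same condition.
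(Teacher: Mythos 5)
Your proposal is correct and is essentially the paper's argument: the paper also combines the almost-sure bound $O(\sqrt{d/\varepsilon})$ with the second-moment bound $O(d)$ in a Bernstein-type concentration inequality, with the hypothesis $m\varepsilon \ge \Omega(\log(1/\delta))$ serving exactly to place the deviation in the variance-dominated regime. The only cosmetic difference is that the paper runs the ``equivalent route'' you sketch at the end --- bounding $\bE\big\|\sum_i Z\sps i\big\|_2 \le O(\sqrt{dm})$ via Jensen/Cauchy--Schwarz and then applying a Bernstein-type McDiarmid inequality to the norm function --- rather than citing a vector Bernstein inequality directly.
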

\begin{proof}
Define $\mu\sps i := \bE[X\sps i]$.
Consider the function 
\begin{equation*}
f(X\sps{1},\ldots,X\sps{m}) := \Big\|\frac 1{m}\sum_{i = 1}^m(X\sps{i} - \mu\sps i)\Big\|_2.
\end{equation*}
Choose an arbitrary $i\in\{1,\ldots,m\}$ and fix all the $X\sps{i'}$ for $i'\neq i$.
Treating $f$ as a function of $X\sps{i}$, we have the following by the triangle inequality:
\begin{equation*}
|f(X\sps{i}) - f(\mu)| \leq \frac 1{m}\|X\sps {i} - \mu\|_2.
\end{equation*}
Thus, $|f(X\sps{i}) - f(\mu)|\leq O((\sqrt{d/\varepsilon})/m)$
almost surely, and $\bE[(f(X\sps i) - f(\mu))^2]\leq O(d/m^2)$.
Applying McDiarmid’s inequality of Bernstein type to $f$ (see \citet{maurer2019bernstein} and references therein), the following holds with probability at least $1-\delta$:
\begin{equation}
\label{eq:mcdiarmid}
f(X\sps 1,\ldots,X\sps m) \le \bE[f(X\sps 1,\ldots,X\sps m)] + O(\sqrt{d\log (1/\delta)/m}).
\end{equation}
Now we bound $\bE[f(X\sps 1,\ldots,X\sps m)]$. By Jensen's inequality,
\begin{align}
\bE[f(X\sps 1,\ldots,X\sps m)]^2 & \le \bE[f(X\sps 1,\ldots,X\sps m)^2]\notag\\
& = \frac{1}{m^2}\bE\left[\Big\|\sum_{i = 1}^m(X\sps{i} - \mu\sps i)\Big\|_2^2\right]\notag\\
& = \frac{1}{m^2}\sum_{1\le i,j\le m}\bE[(X\sps i - \mu\sps i)^\top (X\sps j - \mu\sps j)]\notag\\
& = \frac{1}{m^2}\sum_{i = 1}^m\bE[\|X\sps i - \mu\sps i\|_2^2]\notag\\
& \le \frac{1}{m^2}\sum_{i = 1}^m\bE[\|X\sps i - \mu\|_2^2]\notag\\
& \le \frac{d}{m}.\label{eq:mcdiarmid-jensen}
\end{align}
Plugging \eqref{eq:mcdiarmid-jensen} into \eqref{eq:mcdiarmid},
with probability at least $1-\delta$,
\[
f(X\sps 1,\ldots,X\sps m) \le O(\sqrt{d/m}) + O(\sqrt{d\log (1/\delta)/m}) = O(\sqrt{d\log (1/\delta)/m}).\qedhere
\]
\end{proof}
\begin{proof}[Proof of \Cref{thm:variance-concentration}]
By \Cref{claim:composition}, every $X\sps{i,\mu}$ can be treated as drawn independently from some $\cD\sps i\in\cP_2$ with mean $\mu$. By the first guarantee of \Cref{claim:variance-helper}, every $i=1,\ldots,N$ belongs to $G$ independently with probability at least $1 - C^{-2}\varepsilon$. 
Given $\varepsilon N \geq \Omega(\log(1/\delta))$, 
we choose $C$ to be a sufficiently large constant so that
by the Chernoff bound, 
with probability $1 - \delta/4$ we have 
\begin{equation}
\label{eq:large-good-set}
|G|  \geq (1 - 0.1\varepsilon) N. 
\end{equation}
We now conditioned on $G$ satisfying the inequality above.
For every $i\in G$, $X\sps{i,\mu}$ can be treated as drawn independently from $\widetilde\cD\sps i$, defined as the conditional distribution of $\cD\sps i$ on the event $\|X\sps {i,\mu} - \mu\|_2 \leq C\sqrt{d/\varepsilon}$. 
The third guarantee of \Cref{claim:variance-helper} implies $\bE[\|X\sps {i,\mu} - \mu\|_2^2] = \sum_{j=1}^d\bE[(X\sps{i,\mu}_j - \mu_j)^2] \leq \frac{d}{1 - \varepsilon}$. Then by \Cref{claim:variance-helper2}, 
with probability $1 - \delta/4$ we have
\begin{align}
\Big\|\frac 1{|G|}\sum_{i\in G}(X\sps{i,\mu} - \mu)\Big\|_2  \leq O(\beta\sqrt\varepsilon),\label{eq:variance-concentration5}
\end{align}
where we used $\|\bE[X\sps{i,\mu}] - \mu\|_2\leq O(\beta\sqrt\varepsilon)$ due to the second guarantee of \Cref{claim:variance-helper}.

Applying the same argument to $Y\sps{1},\ldots,Y\sps{N}$ where
\begin{equation*}
Y\sps{i} = \left\{\begin{array}{ll} X\sps{i,\mu}, & \text{if }i\in \Gamma_j;\\ \mu, & \text{if } i\notin \Gamma_j.\end{array}\right.,
\end{equation*}
with probability at least $1 - \delta/4$,
\begin{equation}
\Big\|\frac 1{|G|}\sum_{i\in G\cap\Gamma_j}(X\sps{i,\mu} - \mu)\Big\|_2
= \Big\|\frac 1{|G|}\sum_{i\in G}(Y\sps{i} - \mu)\Big\|_2
\leq O(\beta\sqrt\varepsilon),\quad\text{for all }j = 1,\ldots,d.\label{eq:variance-concentration6}
\end{equation}

{\par \sloppy \hbadness 10000
The third guarantee of \Cref{claim:variance-helper} implies
\begin{equation*}
\big\|\bE[(X\sps{i,\mu} - \mu)(X\sps{i,\mu} - \mu)^\top]\big\|_\op \leq \frac{1}{1-\varepsilon}\leq O(1).
\end{equation*}
We also have $\big\|(X\sps{i,\mu} - \mu)(X\sps{i,\mu} - \mu)^\top\big\|_\op = \|X\sps{i,\mu} - \mu\|_2^2 \leq O(d/\varepsilon)$.
By the matrix Chernoff bound \citep[Theorem 5.1.1]{tropp2015introduction}, with probability $1 - \delta/4$ we have
\begin{equation}
\label{eq:variance-concentration3}
\Big\|\frac 1{|G|}\sum_{i\in G}(X\sps{i,\mu} - \mu)(X\sps {i,\mu} - \mu)^\top\Big\|_\op
\leq O(\beta^2).
\end{equation}
Clearly, inequality \eqref{eq:variance-concentration3} implies the following:
\begin{equation}
\label{eq:variance-concentration3-1}
\Big\|\frac 1{|G|}\sum_{i\in G\cap \Gamma_j}(X\sps{i,\mu} - \mu)(X\sps {i,\mu} - \mu)^\top\Big\|_\op
\leq O(\beta^2),\quad\text{for all }j = 1,\ldots,d.
\end{equation}
Below, we will show that \eqref{eq:variance-concentration5} and \eqref{eq:variance-concentration3} imply \eqref{eq:variance-concentration-goal-1} and \eqref{eq:variance-concentration-goal-2}. A similar argument which we omit here shows that \eqref{eq:variance-concentration6} and \eqref{eq:variance-concentration3-1} imply \eqref{eq:variance-concentration-goal-3}, completing the proof.

By \Cref{claim:pair}, for the bijection between $w\in \Delta_{G,|G| - 3\varepsilon' N}$ and $\tw\in\Delta_{G,3\varepsilon' N}$, we have
\begin{equation}
\label{eq:variance-concentration2}
\frac{|G| - 3\varepsilon' N}{|G|}\cdot w + \frac{3\varepsilon' N}{|G|}\cdot \widetilde w = w_G.
\end{equation}
Therefore,
for any vector $v$,
\begin{align}
O(\beta^2\|v\|_2^2) & \geq \frac 1{|G|}\sum_{i\in G}((X\sps{i,\mu} - \mu)^\top v)^2 
\tag{by \eqref{eq:variance-concentration3}}\\
& = \frac{|G| - 3\varepsilon' N}{|G|}\sum_{i\in G}w\sps i((X\sps{i,\mu} - \mu)^\top v)^2 \notag\\
& ~~~ + \frac{3\varepsilon' N}{|G|}\sum_{i\in G}\tw \sps i((X\sps{i,\mu} - \mu)^\top v)^2.\label{eq:variance-concentration31}
\end{align}
This implies that for all $w\in\Delta_{G,(1 - 3\varepsilon')N}\subseteq \Delta_{G,|G| - 3\varepsilon'N}$, we have
\begin{equation}
\label{eq:variance-concentration1}
\Big\|\sum_{i\in G}w\sps i(X\sps{i,\mu} - \mu)(X\sps{i,\mu} - \mu)^\top\Big\|_\op \leq \frac{|G|}{|G| - 3\varepsilon' N}\cdot O(\beta^2) = O(\beta^2).
\end{equation}
This proves \eqref{eq:variance-concentration-goal-2}.
Choosing $v= \sum_{i\in G}\tw\sps i(X\sps{i,\mu} - \mu)$ in \eqref{eq:variance-concentration31}, we have
\begin{align*}
O(\beta^2\|v\|_2^2) & \geq \frac{3\varepsilon' N}{|G|}\sum_{i\in G}\tw\sps i((X\sps{i,\mu} - \mu)^\top v)^2\\
& \geq \frac{3\varepsilon' N}{|G|}(\sum_{i\in G}\tw\sps i (X\sps{i,\mu} - \mu)^\top v)^2\tag{Jensen's inequality}\\
& = \frac{3\varepsilon' N}{|G|}\|v\|_2^4.
\end{align*}
This implies
\begin{equation}
\label{eq:variance-concentration4}
\forall \tw\in\Delta_{G,3\varepsilon'N},\Big\|\sum_{i\in G}\tw\sps i ( X\sps{i,\mu} - \mu)\Big\|_2 = O(\beta / \sqrt\varepsilon).
\end{equation}
From \eqref{eq:variance-concentration2}, we also have
\begin{equation*}
\frac{|G| - 3\varepsilon'N}{|G|}\sum_{i\in G}w\sps i(X\sps{i,\mu} - \mu)
+
\frac{3\varepsilon'N}{|G|}\sum_{i\in G}\tw \sps i(X\sps{i,\mu} - \mu)
=
\frac 1{|G|}\sum_{i\in G}(X\sps {i,\mu} - \mu).
\end{equation*}
Combining this with \eqref{eq:variance-concentration5} and \eqref{eq:variance-concentration4} using the triangle inequality, for all $w\in\Delta_{G,(1 - 3\varepsilon') N}\subseteq \allowbreak \Delta_{G, |G| - 3\varepsilon' N}$, we have
\begin{equation}
\label{eq:variance-concentration7}
\Big\|\sum_{i\in G} w\sps i (X\sps{i,\mu} - \mu) \Big\|_2 \leq O(\beta\sqrt\varepsilon).
\end{equation}
This proves \eqref{eq:variance-concentration-goal-1}.
}
\end{proof}
\section{Proof of Claim \ref{claim:coordinate-wise-median}}
\label{sec:proof-median}

\begin{proof}
Let $X\sps 1,\ldots, X\sps N$ denote the uncorrupted dataset as in \Cref{def:uncorrupted}.
Define $N_j$ as the number of examples with $X\sps i_j \neq *$. We condition on the event that $N_j\geq (\gamma-\varepsilon)N$ because this is a necessary condition for
the corresponding $\varepsilon$-corrupted dataset
to be $\gamma$-complete.
Let $C > 1$ be a sufficiently large constant.
Define $N_{j,\leq}$ as the number of examples with $X\sps i_j\neq *$ that satisfy $|X\sps i_j - \mu_j|\leq C$, and define $N_{j,>}$ as the number of examples with $X\sps i_j\neq *$ that satisfy $|X\sps i_j - \mu_j| > C$. Clearly, $N_{j,\leq} + N_{j,>} = N_j$. 

As long as $X\sps i_j\neq *$, we have
$\Pr[|X\sps i _j - \mu_j|\geq C]\leq C^{-2}$ by Chebyshev's inequality. 
Since we assumed $\gamma N \geq \Omega(\log(d/\delta))$, by the Chernoff bound, choosing $C$ sufficiently large guarantees that 
for every fixed $j$,
with probability at least $1 - \delta/d$,
it happens that
at least $2/3$ of the examples with $X\sps i_j\neq *$ satisfy $|X\sps i_j - \mu_j|\leq C$.
By a union bound,
this happens
for all $j = 1,\ldots,d$
with probability at least $1 - \delta$,
in which case 
\begin{equation}
\label{eq:median1}
N_{j,\leq} - N_{j, >} \geq (1/3)N_j \geq (1/3)(\gamma - \varepsilon)N > 2\varepsilon N,\quad\textup{for all}~j = 1,\ldots,d.
\end{equation}
Since the adversary can corrupt at most $\varepsilon N$ examples, \eqref{eq:median1} implies that there are more examples satisfying $|X\sps i_j - \mu_j|\leq C$ than ones satisfying $|X\sps i_j - \mu_j|> C$ in the $\varepsilon$-corrupted dataset, and thus the coordinate-wise median $\nu$ satisfies $|\nu_j - \mu_j|\leq C$ for all $j$.
This completes the proof for $\|\nu - \mu\|_2 \leq O\big(\sqrt d\big)$.
\end{proof}
\section{Omitted Analysis for Hashing}
We prove the following claim showing that hashing does not
effectively eliminate the missing entries.
\label{sec:proof-hashing}
\begin{claim}
\label{claim:hashing-lower}
Given a dataset $X\sps 1,\ldots,X\sps N\in (\bR\cup\{*\})^d$ 
where every coordinate $j$
is present in exactly $\gamma N$ examples chosen uniformly at random
and independently for every $j$.
Then, with probability at least $1 - 2\exp(-\gamma N / 8)$,
no hash function $h$ gives a new dataset of size $N' = \gamma N/C$ that contains
less than $1/e^{4C+2}$-fraction of missing entries, assuming 
$\gamma \leq 1/4$, 
$N \geq (32C + 16)/\gamma$, and 
$d \geq e^{4C}C(8(\log N')/\gamma + 1)$.
\end{claim}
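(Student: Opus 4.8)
Fix a hash function $h$ and write $S\sps{i'}=h^{-1}(i')$ for its induced partition of $\{1,\dots,N\}$, so $\sum_{i'}|S\sps{i'}|=N$. By construction, the new example $\widetilde X\sps{i'}$ has coordinate $j$ present exactly when $S\sps{i'}\cap\Gamma_j\neq\emptyset$; hence the number of missing entries of the new dataset equals $\sum_{i'}m(S\sps{i'})$, where $m(S):=|\{j:S\cap\Gamma_j=\emptyset\}|$ is the number of coordinates absent from every example in $S$. So the fraction of missing entries is $\frac{1}{N'd}\sum_{i'}m(S\sps{i'})$, and the claim amounts to showing that, with the stated probability over the independent uniformly random sets $\Gamma_1,\dots,\Gamma_d$, this quantity is at least $e^{-(4C+2)}$ for \emph{every} $h$ at once.

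\textbf{Reduction to small bins via pigeonhole.} The bin sizes are $N'$ nonnegative integers summing to $N=(C/\gamma)N'$, so at most $N'/2$ of them can exceed $2C/\gamma$ (otherwise those bins alone would contain more than $N$ examples). Therefore every $h$ has at least $N'/2$ bins $S\sps{i'}$ with $|S\sps{i'}|\le 2C/\gamma$. Consequently it suffices to prove the data-dependent, \emph{$h$-free} event
\[
\cE:\quad m(S)\ \ge\ 2d e^{-(4C+2)}\qquad\text{for all } S\subseteq\{1,\dots,N\}\text{ with } |S|\le 2C/\gamma,
\]
since on $\cE$ we get $\sum_{i'}m(S\sps{i'})\ge (N'/2)\cdot 2d e^{-(4C+2)}=N'd\,e^{-(4C+2)}$ for every $h$, i.e.\ fraction missing $\ge e^{-(4C+2)}$.

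\textbf{Proving $\cE$ holds with high probability.} For a fixed $S$ with $|S|=b\le 2C/\gamma$, $m(S)=\sum_{j=1}^d\mathbf{1}[S\cap\Gamma_j=\emptyset]$ is a sum of $d$ independent Bernoullis of common mean $p_b:=\binom{N-b}{\gamma N}\big/\binom{N}{\gamma N}=\prod_{k=0}^{b-1}\frac{(1-\gamma)N-k}{N-k}$. Since $b\le 2C/\gamma\le N/16$ by the hypothesis $\gamma N\ge 32C+16$, each factor is at least $1-\tfrac{16}{15}\gamma$, so $p_b\ge(1-\tfrac{16}{15}\gamma)^{2C/\gamma}\ge e^{-3C}$ using $\gamma\le 1/4$ (and $(1-x)^n\ge e^{-nx/(1-x)}$). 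Thus $\bE[m(S)]=dp_b\ge de^{-3C}$, while the threshold $2de^{-(4C+2)}$ satisfies $2de^{-(4C+2)}<\tfrac12 de^{-3C}\le\tfrac12\bE[m(S)]$ (as $e^{C+2}>4$), so a Chernoff lower-tail bound gives $\Pr[m(S)<2de^{-(4C+2)}]\le e^{-dp_b/8}\le e^{-de^{-3C}/8}$. There are only $\sum_{b\le 2C/\gamma}\binom Nb\le N^{O(C/\gamma)}$ subsets of size $\le 2C/\gamma$, so a union bound yields $\Pr[\lnot\cE]\le N^{O(C/\gamma)}e^{-de^{-3C}/8}$, which is at most $2\exp(-\gamma N/8)$ under the hypothesis $d\ge e^{4C}C(8(\log N')/\gamma+1)$; here one uses $\log N=\log N'+\log(C/\gamma)$ together with $\gamma N\ge 32C+16$ to absorb constants, and the factor $e^{C}$ between the $e^{4C}$ in the hypothesis and the $e^{3C}$ in the exponent gives the remaining slack. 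Combining with the previous paragraph finishes the proof.

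\textbf{Main obstacle.} The real difficulty is uniformity: there are roughly $(N')^{N}$ hash functions, far too many for a direct union bound, and a single bin of a balanced $h$ really can see almost all coordinates, so no crude per-bin bound suffices. The key move that circumvents this is that only the \emph{small} bins need to be controlled (pigeonhole guarantees at least $N'/2$ of them), and a small bin is nothing but a subset of $\{1,\dots,N\}$ of size $O(C/\gamma)$ --- of which there are merely $N^{O(C/\gamma)}$, a count the lower bound on $d$ is exactly calibrated to beat. The only genuinely quantitative step is the estimate $p_b\ge e^{-\Theta(C)}$ for $b=O(1/\gamma)$, which needs both that $\gamma$ is bounded away from $1$ and that $N$ is large enough for the hypergeometric to behave like a product of near-$(1-\gamma)$ factors.
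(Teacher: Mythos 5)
Your reduction to small bins is clever and most of the intermediate estimates are correct (the pigeonhole count of small bins, the hypergeometric bound $p_b\ge e^{-3C}$, the Chernoff lower tail for a fixed $S$), but the last step --- ``which is at most $2\exp(-\gamma N/8)$ under the hypothesis $d\ge e^{4C}C(8(\log N')/\gamma+1)$'' --- is false, and the gap is not repairable within your framework. For a fixed small subset $S$, $m(S)$ is a sum of only $d$ independent indicators, so the strongest concentration available is $\Pr[m(S)<\tfrac12\bE[m(S)]]\le e^{-dp_b/8}$, an exponent of order $de^{-3C}$. After the union bound over $N^{O(C/\gamma)}$ subsets, reaching failure probability $2\exp(-\gamma N/8)$ would require $de^{-3C}/8\ \ge\ \gamma N/8+O(C/\gamma)\log N$, i.e.\ $d\gtrsim e^{3C}\gamma N$; the stated hypothesis on $d$ is only logarithmic in $N'$ and supplies nothing of order $\gamma N$. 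Since the hypotheses place no upper bound on $N$ relative to $d$, in the regime $\gamma N\gg d$ your intermediate event $\cE$ genuinely fails with probability $e^{-O(d)}\gg e^{-\gamma N/8}$ (by anti-concentration of a sum of $d$ Bernoullis), so $\cE$ cannot serve as a sufficient event at the advertised probability. What your argument actually yields is the conclusion with failure probability roughly $(N')^{-\Omega(C/\gamma)}$ --- the right qualitative phenomenon, but quantitatively weaker than the claim.

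The diagnosis in your ``main obstacle'' paragraph is also backwards: the paper's proof \emph{is} a direct union bound over all $(N')^N$ hash functions. It survives because the paper first switches to the model where each entry of the original matrix is present independently with probability $2\gamma$ (the discrepancy with the exact-$\gamma N$ model is absorbed by a separate ``bad coordinate'' repair step costing one of the two $\exp(-\gamma N/8)$ terms). In that model the $dN'$ entries of the hashed dataset are mutually independent, so for a \emph{fixed} $h$ the total missing-entry count concentrates with exponent $\Omega(dN'e^{-4C})$ --- linear in $N'$, not merely in $d$. That extra factor of $N'=\gamma N/C$ is exactly what beats $(N')^N=\exp((C/\gamma)N'\log N')$ under the stated lower bound on $d$, and it is precisely what your per-subset argument discards. (In your exact-size model the new entries are independent only across coordinates, not across bins; the paper's Poissonization-style model switch is the ingredient you are missing.)
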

\begin{proof}
It is more convenient to choose the present entries in a different way, 
where every entry of the data matrix is present independently with probability $2\gamma$.
Of course, this may not always give a $\gamma$-complete dataset,
and we say a coordinate $j$ is ``bad''
if there are fewer than $\gamma N$ examples with coordinate $j$ present.
By the Chernoff bound and the linearity of expectation,
the expected fraction of 
``bad'' coordinates is at most $\exp(-\gamma N/4)$.
Therefore, by the Markov bound,
with probability at least 
$1 - \exp(4C + 2 -\gamma N/4) \geq 1 - \exp(-\gamma N / 8)$, 
at most $1/e^{4C + 2}$ fraction of the coordinates are ``bad''.
To repair each ``bad'' coordinate $j$, 
we can re-choose the $\gamma N$ examples with coordinate $j$ present
uniformly at random.
Thus, with probability at least $1 - \exp(-\gamma N/8)$,
repairing all the ``bad'' coordinates 
decreases the fraction of missing entries in the new dataset by
at most $1/e^{4C+2}$ regardless of the hash function $h$.

Now we assume that we do not repair the ``bad'' coordinates,
and it suffices to show that in this case
with probability at least $1 - \exp(-\gamma N/8)$, 
no hash function gives a new dataset with less than $1/e^{4C + 1}$-fraction
of missing entries.

Let us fix a hash function $h$, 
for which we lower bound the probability that
the fraction of missing entries in the new dataset is at least $1/e^{4C+1}$. 
After that, we apply the union bound over all hash functions.
For a fixed hash function $h$, 
the expected fraction of missing entries in the new dataset is
\begin{equation*}
\frac{1}{N'}\sum_{i'=1}^{N'}(1 - 2\gamma)^{|h^{-1}(i')|}
\geq (1 - 2\gamma)^{N/N'}\geq e^{-4C},
\end{equation*}
where we used the AM-GM inequality and the fact that $(1 - 2\gamma)^{1/\gamma}\geq e^{-4}$ whenever $0<\gamma \leq 1/4$.
Moreover, each of the $dN'$ entries in the new dataset is missing independently,
so the Chernoff bound implies that
with probability at least $1 - \exp(-dN'e^{-4C}/8)$, 
the fraction of missing entries in the new dataset is at least $1/e^{4C+1}$.
Finally,
by the union bound over all $(N')^N$ hash functions,
with probability at least
\begin{align*}
1 - (N')^N \exp(-dN'e^{-4C}/8) & = 1 - \exp\big(
-(\gamma N/8)(
de^{-4C}/C-8(\log N')/\gamma
)
\big)\\
&\geq 
1 - \exp(-\gamma N / 8),
\end{align*}
no hash function gives a new dataset 
with less than $1/e^{4C+1}$-fraction of missing entries.
\end{proof}
\end{document}